\newcommand{\atodo}[1]{\todo[linecolor=red,backgroundcolor=green!25,bordercolor=red]{\textbf{AF:} #1}}
\newcommand{\atodoin}[1]{\todo[linecolor=red,backgroundcolor=green!25,bordercolor=red,inline]{\textbf{AF:} #1}
}
\newcommand{\mmtodo}[1]{\todo[linecolor=red,backgroundcolor=red!25,bordercolor=red,inline]{\textbf{MM:} #1}}
\definecolor{forestgreen}{rgb}{0.13, 0.55, 0.13}
\declaretheorem[name=Lemma]{lemma}
\definecolor{Darkblue}{rgb}{0,0,0.4}
\definecolor{Brown}{cmyk}{0,0.61,1.,0.60}
\definecolor{Purple}{cmyk}{0.45,0.86,0,0}
\newcommand{\initOneLiners}{%
	\setlength{\itemsep}{0pt}
	\setlength{\parsep }{0pt}
	\setlength{\topsep }{0pt}
}
\newcommand{\initTwoLiners}{%
	\setlength{\itemsep}{1pt}
	\setlength{\parsep }{0pt}
	\setlength{\topsep }{0pt}
}
\newtheorem{theorem}{Theorem}
\newtheorem{corollary}{Corollary}
\newtheorem{definition}{Definition}
\newtheorem{conjecture}{Conjecture}
\numberwithin{equation}{section}
\newcommand{\namedref}[2]{\hyperref[#2]{#1~\ref*{#2}}}
\newcommand{\wt}{\widetilde}
\newcommand{\bal}{{\rm bal}}
\newcommand{\E}{{\mathbb{E}}}
\newcommand{\N}{\mathbb{N}}
\newcommand{\R}{\mathbb{R}}
\newcommand{\poly}{{\rm poly}}
\newcommand{\polylog}{{\rm polylog}}
\newcommand{\etal}{{et al. \xspace}}
\newcommand{\vol}{{\rm Vol}}
\newcommand{\Vol}{\vol}
\newcommand{\sk}{\mathsf{sk}}
\newcommand{\cC}{\mathcal{C}}
\newcommand{\cD}{\mathcal{D}}
\newcommand{\C}{\mathcal{C}}
\newcommand{\eps}{\epsilon}
\newcommand{\Czeta}{C}
\title{Expander Decomposition in Dynamic Streams\thanks{The original version of this paper (as well as the version published in ITCS 23) claimed a construction of distance oracles in dynamic semi-streaming, improving over \cite{FKN21} distance oracle in the regime of $\tilde{O}(n^{1+\alpha})$ space (for a constant $\alpha\in(0,1)$). That construction contained a mistake, and was removed from the current version.}}
\author{Arnold Filtser\\Bar-Ilan University \and Michael Kapralov\\EPFL \and Mikhail Makarov\\EPFL}
\begin{document}
	\maketitle
	\thispagestyle{empty}
	\begin{abstract}

In this paper we initiate the study of expander decompositions of a graph $G=(V, E)$  in the streaming model of computation. The goal is to find a partitioning $\mathcal{C}$ of vertices $V$ such that the subgraphs of $G$ induced by the clusters $C \in \mathcal{C}$ are good expanders, while the number of intercluster edges is small. 
Expander decompositions are classically constructed by a recursively applying balanced sparse cuts to the input graph. In this paper we give the first implementation of such a recursive sparsest cut process using small space in the dynamic streaming model.

Our main algorithmic tool is a new type of cut sparsifier that we refer to as a power cut sparsifier -- it preserves cuts in any given vertex induced subgraph (or, any cluster in a fixed partition of $V$) to within a 
  $(\delta, \epsilon)$-multiplicative/additive error with high probability. The power cut sparsifier uses $\tilde{O}(n/\epsilon\delta)$ space and edges, which we show is asymptotically tight up to polylogarithmic factors in $n$ for constant $\delta$.

\if 0 
While classical cut sparsifiers [Bencz{\'{u}}r-Karger STOC'96], [Spielman-Teng STOC'04]
preserve the size of all possible cuts $(S,\overline{S})$ (that is the weight of the edges between a set $S$ to its complement) with a $1+\delta$ multiplicative error, power cut sparsifier (aspires to) preserve the weight of all the edges between any two disjoint sets $S,T$. We obtain a somewhat weaker guarantee while allowing for both $1+ \delta$ multiplicative, and $\epsilon$ additive errors, using $\tilde{O}(n/\epsilon\delta)$ space and edges, which we show is asymptotically tight for constant $\delta$. 
\fi

\if 0	
In this paper we initiate the study of expander decompositions of a graph $G=(V, E)$  in the streaming model of computation. The goal is to find a partitioning $\mathcal{C}$ of vertices $V$ such that the subgraphs of $G$ induced by the clusters $C \in \mathcal{C}$ are good expanders, while the number of intercluster edges is small. Formally, an $(\epsilon, \phi)$-expander decomposition of a graph $G=(V, E)$ is a partition of vertices $V$ into induced $\phi$-expanders that cuts at most an $\epsilon$ fraction of the total number of edges.  Our main contribution is the first algorithm that produces an asymptotically optimal  $(\epsilon, O(\epsilon \cdot \log n))$-expander decomposition in dynamic streams using $\tilde{O}(n/\epsilon^2)$ space (but exponential time).  We also give a tradeoff between expansion quality and running time: for every integer $k$ we give an algorithm that constructs an $(\epsilon, \epsilon \cdot O(\log n)^{O(k)})$-expander decomposition using $\tilde{O}(n^{1 + O(1/k)} O(\log n)^{O(k)} / \epsilon^2)$ space, and running in ${\rm poly}(\frac{n}{\epsilon}\cdot \log^k n)$ time.

Our main algorithmic tool is a new type of cut sparsifier that we refer to as the power cut sparsifier. While classical cut sparsifiers
[Bencz{\'{u}}r Karger STOC'96], [Speilman Teng STOC'04]
preserve the size of all possible cuts $(S,\overline{S})$ (that is the weight of the edges between a set $S$ to its complement) with a $1+\delta$ multiplicative error, power cut sparsifier (aspires to) preserve the weight of all the edges between any two disjoint sets $S,T$. We obtain a somewhat weaker guarantee while allowing for both $1+ \delta$ multiplicative, and $\epsilon$ additive errors, using $\tilde{O}(n/\epsilon\delta)$ space and edges, which we show is asymptotically tight for constant $\delta$. 
\fi 

%
%
%
%
%
	\end{abstract}
	\newpage
	{\setcounter{tocdepth}{2} \tableofcontents}
	
	
	\newpage
	\pagenumbering{arabic}
	\setcounter{page}{1}

\section{Introduction}
Expanders are known to have many beneficial properties for algorithmic design. Therefore a natural divide-and-concur approach leads to breaking a given graph into expanders. An $(\eps,\phi)$-expander decomposition (introduced by \cite{GR99,KVV04}) of an $n$ vertex graph $G=(V,E)$ is a partition $\cC=\{C_1,\dots,C_k\}$ of the vertex set $V$ such that the conductance of each induced graph $G\{C_i\}$ is at least $\phi$, and such that there are at most $\eps\cdot|E|$ inter-cluster edges.
For every $\eps\in(0,1)$, one can always construct $\left(\eps,\phi\right)$-expander decomposition with $\phi=\Omega(\frac{\eps}{\log n})$, which is also the best possible (see, e.g., \cite{AALG18}).
Expander decomposition have been found to be extremely useful for  
Laplacian solvers \cite{SpielmanT04,CKP+17}, 
unique games \cite{ABS15,Trevisan08,RS10}, 
minimum cut \cite{KT19},
sketching/sparsification \cite{ACK+16,JS18,CGP+18},
 max flow algorithms \cite{KLOS14,CKLPPS22}, distributed algorithms \cite{CS20,CPSZ21,HRG22}
and dynamic algorithms \cite{NS17,Wul17,NSW17,BGS20,GRST21}.

Sequentially, expander decomposition can be constructed in almost linear time \cite{SW19,LS21}. Furthermore, expander decomposition have efficient ($\poly(\eps^{-1})\cdot n^{o(1)}$ rounds) constructions in distributed CONGEST model, and parallel PRAM and MPC models \cite{GKS17,chang2019improved,CS20}.
From the other hand, in the dynamic stream model (in fact even in insertions only model) it remained wide open whether it  is possible to construct an expander decomposition using $\wt{O}(n)$ space (which is the space required to store such a decomposition).

Graph sketching, introduced by~\cite{ahn2012analyzing} in an influential work on graph connectivity in dynamic streams, has been a de facto standard approach to constructing algorithms for dynamic streams, where the algorithm must use a small amount of space to process a stream that contains both edge insertions and deletions (while the vertex set is fixed). The main idea of~\cite{ahn2012analyzing} is to represent the input graph by its edge incident matrix, and to apply classical linear sketching primitives to the columns of this matrix. This approach seamlessly extends to dynamic streams, as by linearity of the sketch one can simply subtract the updates for deleted edges from the summary being maintained. A surprising additional benefit is the fact that such a sketching solution is trivially parallelizable: since the sketch acts on the columns of the edge incidence matrix, the neighborhood of every vertex in the input graph is compressed independently.

Sketching solutions have been constructed for many graph problems,
including spanning forest computation \cite{ahn2012analyzing}, 
spanner construction \cite{AGM12Spanners,KW14,FWY20,FKN21}, matching and matching size approximation \cite{AssadiKLY16,DBLP:conf/soda/AssadiKL17}, sketching the Laplacian \cite{ACK+16,JS18} and (among many other) most relevant to our paper, cut and spectral sparsifiers \cite{DBLP:conf/approx/AhnGM13,DBLP:conf/focs/KapralovLMMS14,DBLP:conf/soda/KapralovMMMNST20}.

\newcommand{\e}{\eps}

The main question we ask in this paper is: 
\begin{center}
What is the space complexity of constructing  expander decompositions in the sketching model?
\end{center} 
A classical way of constructing an expander decomposition (e.g.,~\cite{ST11}) is to recursively apply nearly balanced sparsest cuts as long as a cut of sparsity at most $\phi$ exists. The recursion terminates in $O(\log n)$ depth as long as one does not recurse on large sides of unbalanced cuts.
Indeed, if the graph does not include a sparse cut of balancedness more than $\frac14$, say, it could be shown that the large part induces an $\Omega(\phi)$-expander. Thus, one cuts $O(\phi \log n)$ fraction of edges over $O(\log n)$ levels of the recursion. Setting $\e=O(\phi \log n)$, we get  an $(\e, \Omega(\frac{\e}{\log n}))$-expander decomposition, which is existentially optimal.  The main question that our paper asks is

\begin{center}
\fbox{Can one design a space efficient streaming implementation of a \textbf{recursive} sparsest cut process?}
\end{center}

A natural approach to emulating this process in the streaming model of computation would be to run it on a sparsifier of the input graph. Alas, graph cut sparsifiers do not preserve cuts in induced subgraphs \footnote{Preserving cuts in subgraphs of arbitrary size, for example, implies preserving them in subgraphs induced by pairs of nodes -- and this requires preserving all the edges.}, so other methods are needed. Since we would like to implement the recursive sparsest cut process in the streaming model, at the very least we need to be able to know when to stop, i.e. be able to verify that the partition that we created is already an $(\e, \Omega(\frac{\e}{\log n}))$-expander decomposition. In particular, we ask
\begin{center}
Is it possible to verify, using small space, that a partition given at the end of the stream is an expander decomposition of the input graph?
\end{center}

Our {\bf first contribution} in this paper is a construction of what we call a power cut sparsifier -- a random (reweighted) subgraph of the input graph that can be constructed in small space, and can be used to verify the $\Omega(\frac{\e}{\log n})$-expansion property in any fixed partition (not known before the stream) with high probability. The power cut sparsifier contains $\widetilde{O}(n/\eps)$ edges, which we show is {\em optimal}. It is closely related to the $(\eps, \delta)$-cut sparsifier introduced in \cite{agarwal2022sublinear} and probabilistic spectral sparsifiers of \cite{lee2013probabilistic}. The main difference between them and the power cut sparsifier is that the latter provides cut preservation property with high probability for each possible subgraph of the original graph.

One might hope that the construction of an $\e$-power cut sparsifier allows for direct simulation of the above mentioned recursive sparsest cut algorithm of~\cite{ST11}: one simply uses a fresh power cut sparsifier for every level of the recursion, taking $\widetilde{O}(n/\eps)\cdot O(\log n)=\widetilde{O}(n/\eps)$ space overall. Surprisingly, this does not not work! The  issue is subtle: the termination condition that lets the~\cite{ST11} process not recurse on larger side of an unbalanced cut does not translate from the sparsifier to the actual graph, as it requires preservation of cuts in induced subgraphs -- see Section~\ref{sec:tech-exp-decomp} for a more detailed discussion.  
However, the more recent expander decomposition algorithm of Chang and Saranurak~\cite{chang2019improved} works for our purposes. It also uses the recursive balanced sparse cut framework and the ideas introduced in \cite{Wul17,NS17}. The termination condition is cut based, and translates from the sparsifier to the original graph. Since the depth of the recursion is $\widetilde{O}(1/\e$), this gives us an $(\e, \Omega(\frac{\e}{\log n}))$-expander decomposition using $\widetilde{O}(n/\e^2)$ space -- our {\bf second contribution} and the main result. We now state our results more formally:

\begin{restatable}[]{theorem}{ExpTime}
	\label{thm:main-exp}
	Given a dynamic stream containing a parameter $\eps \in (0, 1)$ and a graph $G=(V, E)$, there exists an algorithm that outputs a clustering $\mathcal{C}$ of $V$ that is a $(\eps, \Omega(\eps/\log n))$-expander decomposition of $G$ with high probability. This algorithms requires $\wt{O}(\frac{n}{\eps^2})$ memory and takes exponential in $n$ time.
\end{restatable}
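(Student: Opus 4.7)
The plan is to emulate the recursive balanced sparse cut algorithm of Chang and Saranurak~\cite{chang2019improved} on top of a collection of power cut sparsifiers built during the stream. This framework repeatedly looks inside each current cluster for a sufficiently balanced cut of conductance below $\Omega(\eps/\log n)$, certifies the cluster as an $\Omega(\eps/\log n)$-expander when none exists, and otherwise recurses on both sides. Two features make it amenable to streaming through the power cut sparsifier: its termination condition is expressed purely in terms of cut sizes, and its recursion depth is only $\wt{O}(1/\eps)$.

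During the stream I would build, in parallel, $L = \wt{O}(1/\eps)$ mutually independent power cut sparsifiers $H_1, \dots, H_L$, one per recursion level. Each $H_i$ has constant multiplicative error $\delta = \Theta(1)$ and additive error $\eps' = \Theta(\eps/\log n)$, tuned so that $\eps'$ lies strictly below the conductance threshold Chang-Saranurak tests against. By the $\wt{O}(n/(\eps' \delta))$ cost of a single sparsifier, each $H_i$ occupies $\wt{O}(n/\eps)$ space, and the $L$ copies together use $\wt{O}(n/\eps^2)$ space, matching the target. Using a fresh sparsifier per level is crucial: the partition examined at level $i$ is determined by random choices made at earlier levels and is only fixed \emph{after} $H_i$ has been constructed, so the ``fixed-partition'' cut-preservation guarantee of the power cut sparsifier is exactly the high-probability statement to invoke in a deferred-decisions style.

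After the stream, I would run Chang-Saranurak offline, using $H_i$ (restricted to the clusters active at level $i$) to measure cuts and to either exhibit a balanced sparse cut or certify its absence. The exponential running time comes from this last step: balanced sparsest cut is NP-hard, so I would resort to exhaustive search over subsets of each current cluster. The main obstacle is to argue that the $(\delta, \eps')$-multiplicative/additive guarantee of the sparsifier is tight enough for the cut-based termination condition to transfer faithfully between $G$ and $H_i$: an insufficiently sparse cut in $G$ must not appear sparse in $H_i$, and vice versa. This is the role of the calibration $\eps' = \Theta(\eps/\log n)$. A union bound over the $L$ levels, each conditioned on the randomness of the previous ones, then promotes the per-level high-probability guarantee to one that holds along the whole recursion tree.

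Putting the pieces together, the output partition is a $(\eps, \Omega(\eps/\log n))$-expander decomposition of $G$ with high probability: each surviving cluster is an expander of the required conductance by the termination condition, and the total inter-cluster edge count is at most $\eps |E|$ by the standard Chang-Saranurak accounting. The total space is $\wt{O}(n/\eps^2)$, matching the theorem.
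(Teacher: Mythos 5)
Your proposal assembles the right ingredients (a fresh power cut sparsifier per level, the Chang--Saranurak framework, exhaustive search for the balanced sparse cut), but it omits the step that is the actual crux of the argument: what to do when the most balanced sparse cut found in the sparsifier is \emph{unbalanced}. You describe the recursion as ``certify the cluster as an expander when no sparse cut exists, and otherwise recurse on both sides,'' and assert a recursion depth of $\wt{O}(1/\eps)$. That depth bound only holds if every cut you recurse on is balanced (so that the volume drops by a $(1-\Omega(\eps))$ factor per level); if the sparsest cuts are unbalanced, recursing on both sides can have depth $\Omega(n)$, which destroys the $\wt{O}(n/\eps^2)$ space bound. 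The classical fix (Spielman--Teng) is to \emph{not} recurse on the large side of an unbalanced cut, using the fact that the large side of a most-balanced sparsest cut is itself an $\Omega(\phi)$-expander. But that certificate is a statement about the induced subgraph $G\{C\}$ for a set $C$ that was \emph{chosen as a function of the sparsifier}, so the power cut sparsifier's fixed-partition guarantee does not apply to it --- this is exactly the subtlety the paper flags as the reason the ``one sparsifier per level'' simulation fails. The paper's resolution is a second phase (\Cref{alg:decomp_unbal}): upon finding an unbalanced cut, it repeatedly carves off small sparse pieces into singletons over $k+1$ rounds with geometrically decreasing sparsity thresholds $\phi_j=\phi_{j-1}/\alpha$ and balance thresholds $m_j$, using a purely cut-based argument (\Cref{lem:SparseCutSum,lem:ubcd_iteration_volume_guarantee}) to bound the volume removed and the number of sparsifiers consumed. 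Your proof has no substitute for this phase.

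A secondary issue is your parameter setting $\delta=\Theta(1)$. In the paper's exponential-time instantiation, the sparsity threshold degrades by a factor $\alpha=1+5\delta$ at each of the $k+1=\log n+1$ outer rounds of the second phase, so one must take $\delta=\Theta(1/\log n)$ to keep $\alpha^{k}=\Theta(1)$ and land at the optimal $\Omega(\eps/\log n)$ expansion; with constant $\delta$ the final conductance guarantee would be $\eps/2^{\Theta(\log n)}$, i.e., polynomially worse. The space bound still comes out to $\wt{O}(n/\eps^2)$ with the corrected $\delta$, since the extra $1/\delta^2=\polylog(n)$ factors are absorbed by $\wt{O}(\cdot)$.
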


Notice that the decomposition quality in \Cref{thm:main-exp} is asymptotically optimal (see, e.g., \cite{AALG18}).
Unfortunately, this algorithm takes exponential time due to the need to find balanced sparse cuts exactly. We also show how to make the runtime polynomial at the expense of slight losses in the quality of the expander decomposition and its space requirements:
%
%
%
%
%

\begin{restatable}[]{theorem}{PolyTime}
	\label{thm:main-poly}
	For an integer parameter $k \leq O(\log n)$, given a dynamic stream containing a parameter $\eps \in (0, 1)$ and a graph $G=(V, E)$, there exists an algorithm that outputs a clustering $\mathcal{C}$ of $V$ that is a $(\eps, \eps \cdot \Omega(\log n)^{-O(k)})$-expander decomposition of $G$ with high probability. This algorithms requires 
	\[
	\wt{O}(n)\cdot\left(\eps^{-2}+\eps^{\frac{1}{k}-1}\cdot n^{\frac{2}{k}}\cdot\log^{O(k)}n\right)
	\]
	memory and takes polynomial in $n, 1/\eps, (\log n)^k$ time.
\end{restatable}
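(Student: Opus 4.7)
The plan is to follow the same Chang--Saranurak recursive balanced sparse cut framework used in the proof of \Cref{thm:main-exp}, substituting the exact (and thus exponential-time) sparsest-cut subroutine with a polynomial-time approximation whose quality degrades with a tradeoff parameter $k$. Concretely, I would plug in a $k$-level cut-matching / ARV-style oracle which, given a candidate subgraph $G\{S\}$ and threshold $\phi$, either returns a near-balanced cut of conductance at most $\phi$ or certifies that the conductance of $G\{S\}$ is at least $\phi/\log^{O(k)} n$. Such oracles run in $\poly(n,1/\eps)\cdot(\log n)^{O(k)}$ time, and their internal flow/routing structures (spanner- and tree-embedding-like objects of size $\wt{O}(n^{1+2/k})$) account for the $n^{2/k}$ factor in the final space bound.

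To feed this oracle at each recursion level, I would extract a power cut sparsifier of the current induced subgraph from the global linear sketch, exactly as in \Cref{thm:main-exp}. The crucial calibration is that the multiplicative precision $\delta$ of the power cut sparsifier must be set strictly finer than the oracle's slack, namely $\delta=\Theta(1/\log^{O(k)} n)$; this guarantees that any balanced $\phi$-sparse cut identified in the sparsifier is also $O(\phi)$-sparse in the original subgraph and, conversely, that a conductance certificate issued by the oracle on the sparsifier transfers with only constant-factor loss to $G\{S\}$. The Chang--Saranurak recursion depth stays $\wt{O}(1/\eps)$, and at each level only an $O(\eps/\log n)$ fraction of edges is cut, so the total fraction of inter-cluster edges is at most $\eps |E|$ and each final cluster has certified conductance $\Omega(\eps\cdot \log^{-O(k)} n)$.

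The overall space decomposes into two contributions: the basic power-cut-sparsifier infrastructure as in \Cref{thm:main-exp}, which contributes $\wt{O}(n/\eps^2)$; and the auxiliary sketches supporting the oracle, which, after setting $\delta=\Theta(1/\log^{O(k)} n)$ and accounting for the $n^{2/k}$ spanner/tree data per call, contribute $\wt{O}(n^{1+2/k}\cdot \eps^{1/k-1}\cdot\log^{O(k)} n)$. The exponent $1/k-1$ in $\eps$ emerges from balancing the per-level sampling precision against the $k$-level error amplification. Summing matches the bound stated in the theorem, and the polynomial running time follows because each call to the approximate oracle is polynomial and the recursion has $\wt{O}(1/\eps)$ levels.

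The main obstacle will be composing the two sources of approximation---the power-cut-sparsifier multiplicative/additive error and the $\log^{O(k)} n$ slack of the polynomial-time cut approximator---without the guarantees degrading across all $\wt{O}(1/\eps)$ levels of recursion. The Chang--Saranurak termination condition is cut-based and therefore pulls back through a sparsifier of sufficient precision, but this is only true if $\delta$ is comfortably below the oracle's slack at \emph{every} level simultaneously, and the power cut sparsifier's high-probability guarantee for \emph{any} induced subgraph, not only those fixed before the stream, is crucial here. Carefully balancing $\delta$, $\eps$, and the per-level edge budget so that the final expansion quality $\eps\cdot\log^{-O(k)} n$ is preserved is where the bulk of the technical work will lie.
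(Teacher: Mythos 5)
Your high-level plan (recursive balanced sparse cuts run on fresh power cut sparsifiers, with a polynomial-time approximate cut subroutine) matches the paper's, but several load-bearing details are wrong or missing.

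First, you misattribute both $k$-dependent factors. In the paper, the cut subroutine is the Saranurak--Wang procedure (\Cref{lem:BalancedSparseCutBase,lem:BalancedSparseCut}), whose approximation ratio is a fixed $\alpha=\Theta(\log^3 n)$ and whose space is $\wt{O}(m)$ --- neither depends on $k$. The $\log^{O(k)} n$ loss in expansion arises because the two-phase algorithm of Chang--Saranurak (\Cref{alg:decomp_unbal}) lowers the sparsity threshold by a factor $\alpha$ each time it encounters an unbalanced cut, and this can happen $k+1$ times, giving $\phi_{k+1}=\phi_0\alpha^{-k-1}$. The $n^{2/k}$ factor is not the size of any ``internal flow/routing structure'' of the oracle; it is the \emph{number of independent power cut sparsifiers} needed, because each outer iteration $j$ of \Cref{alg:decomp_unbal} can perform up to $O(\tau)=O((\eps\vol(G))^{1/k})$ inner iterations, each consuming a fresh sparsifier (the cluster being re-cut depends on the previous sparsifier, so reuse is forbidden). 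A ``$k$-level cut-matching oracle'' with the two-outcome guarantee you describe (balanced cut or expansion certificate, with no unbalanced-cut escape hatch) is not something you can assume; the unbalanced third outcome is precisely the case whose non-transferability from sparsifier to graph forces the two-phase structure, and your proposal never engages with it or with where $k$ actually enters the algorithm.

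Second, your calibration $\delta=\Theta(1/\log^{O(k)} n)$ is both unnecessary and not what makes the argument work. The paper takes $\delta=1/16$: the sparsifier error composes with the oracle's guarantees as $(1+O(\delta))$ multiplicative factors (\Cref{lem:BalancedSparseCut}), perturbing $\alpha$ and $b$ only by constants. The additive parameter $\psi_j=\delta\phi_j$ is what must track the sparsity scale, and it is the sum $\sum_j 1/\psi_j\approx \alpha^{k+2}/(\delta\phi_0)$ over the $k+1$ sparsifier families that produces the $\log^{O(k)}n$ in the space bound. Your edge-budget accounting (``$O(\eps/\log n)$ per level over $\wt{O}(1/\eps)$ levels'') also does not sum to $\eps|E|$; the correct argument is a charging scheme in which each vertex lands on the smaller side of a balanced cut at most $\log\vol(G)$ times, plus a separate volume bound of $\tfrac{\eps}{2}\vol(C_0)$ for the unbalanced phase via \Cref{lem:ubcd_iteration_volume_guarantee}. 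As written, the proposal would not yield the stated theorem without importing the two-phase threshold-reduction mechanism that it currently omits.
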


\subsection{Power cut sparsifier}
A natural approach for constructing expander decomposition is to find the sparsest cut (or an approximation), remove its edges and recurse on  both sides. Once the sparsest cut found has conductance $\phi=\Omega(\frac{\eps}{\log n})$, the cluster at hand induces a $\phi$-expander and the algorithm halts. One can show that in this entire process at most an $\eps$-fraction of the edges is removed.


A cut sparsifier $H$ of $G$ (introduced by Benczur and Karger \cite{BK96}) with multiplicative error $1\pm\delta$ is a graph that preserves the values of all the cuts in $G$: $\forall S\subseteq V$, $\left|w_G(S,\overline{S})-w_H(S,\overline{S})\right|\le \delta\cdot w_G(S,\overline{S})$, where $w_G(S,\overline{S})$ is the sum of weights of all the edges crossing the cut $S$ (quantity in unweighted graphs).  In the dynamic stream model one can compute such a cut sparsifier using $\wt{O}(\frac{n}{\delta^2})$ space \cite{AGM12Spanners,DBLP:conf/soda/KapralovMMMNST20}.

We introduce a new type of sparsifier dubbed \emph{power cut sparsifier}. The name comes from the ambitious goal of sparsifying the entire power set of $V$. As this is impossible using the classical definition, we will also allow for an additive error. Denote by $\vol(S)=\sum_{v\in S}\deg(v)$ the sum of vertex degrees in $S$. 
$G\{S\}$ denotes the graph induced by $S$ with self loops (see \Cref{sec:prelims}).
\begin{restatable}[]{definition}{DefPowerCutSparsifier}
	\label{Def:PowerCutSparsifier}
	Consider a graph $G=(V,E_{G})$.
	A graph $H=(V,E_{H},w_{H})$ is an $(\delta,\epsilon)$-cut sparsifier
	of $G$ if for every cut $S\subseteq V$, 
	\[
	(1-\delta)\cdot w_{G}(S,\bar{S})-\epsilon\cdot\vol(S)\le w_{H}(S,\bar{S})\le(1+\delta)\cdot w_{G}(S,\bar{S})+\epsilon\cdot\vol(S)
	\]
	A distribution $\mathcal{D}$ over weighted subgraphs $H$ is called a $(\delta,\epsilon,p)$-power cut sparsifier distribution for $G$, if for every partition $\mathcal{C}$ of $G$, with probability at least $1-p$, $\forall C\in\mathcal{C}$, $H\{C\}$ is a $(\delta,\epsilon)$-cut sparsifier of $G\{C\}$. A sample $H$ from such a distribution is called a $(\delta,\epsilon,p)$-power cut sparsifier.
\end{restatable}

We show that power cut sparsifier can be constructed using very simple sampling algorithm: just add each edge $e=\{u,v\}$ to the sparsifier with probability $O(\frac{\log^2n}{\eps\cdot\delta})\cdot\left(\frac{1}{\deg(v)}+\frac{1}{\deg(u)}\right)$, see \Cref{thm:PowerCutSparsifier}.
Furthermore, we show that such a graph could be sampled in the dynamic stream model:
\begin{restatable}[]{theorem}{PowerCutSparsifierStream}
	\label{thm:pcs-alg}
	For every $C>0$, and $\eps,\delta\in(0,1)$, there exists an algorithm that uses $\tilde{O}(n\cdot \frac{\Czeta}{\epsilon \delta})$ space, that, given the edges of an $n$-vertex graph in a dynamic stream, in polynomial time produces samples from a $(\delta, \eps, n^{-\Czeta})$-power cut sparsifier distribution.
\end{restatable}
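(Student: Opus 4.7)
By (the construction underlying) \Cref{thm:PowerCutSparsifier}, it suffices to independently include each edge $\{u,v\}\in E$ in the sparsifier with probability $p_{uv}=c\cdot(1/\deg(u)+1/\deg(v))$ for $c=\tilde\Theta(\Czeta/(\eps\delta))$, reweighted by $1/p_{uv}$. The difficulty is that in a dynamic stream neither $\deg(u)$ nor $\deg(v)$ is known when the edge updates arrive. My plan is to reduce the task to a \emph{one-sided sampler} --- for each vertex $v$, mark every edge incident to $v$ with probability $c/\deg(v)$ using randomness private to $v$ --- and then keep an edge if it is marked from either of its two endpoints using independent randomness. A direct calculation gives $\Pr[\text{marked from either side}]=1-(1-c/\deg(u))(1-c/\deg(v))=p_{uv}\cdot(1\pm o(1))$, which matches the target distribution after absorbing a constant into $c$.

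\textbf{Implementing the one-sided sampler.}
I implement the one-sided sampler at $v$ via geometric subsampling. For each level $i\in\{0,1,\ldots,O(\log n)\}$ use an $\Omega(\log n)$-wise independent hash $h_i:E\to\{0,1\}$ with $\Pr[h_i(e)=1]=2^{-i}$, and, separately for each vertex $v$, maintain a $k$-sparse recovery linear sketch of the subsampled incidence vector of $v$, with $k=\tilde\Theta(\Czeta/(\eps\delta))$. In parallel, maintain a constant-factor degree estimator $\tilde d_v$ for every vertex (an $\ell_0$-sketch, costing $\tilde O(1)$ bits per vertex). After the stream, for each $v$ pick the unique level $i^*=i^*(v)$ with $2^{-i^*}\tilde d_v\in[k/4,k/2]$; a Bernstein-type bound for $\Omega(\log n)$-wise independent sums guarantees that the actual subsampled neighborhood of $v$ at level $i^*$ has size at most $k$ with probability $1-n^{-\Omega(\Czeta)}$, so the sketch returns the exact set $N^*_v$ of subsampled neighbors. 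Finally, include each edge of $N^*_v$ with probability $q_v:=(c/\tilde d_v)/2^{-i^*}\le 1$ using fresh randomness private to $v$; by composition, each incident edge of $v$ is marked with probability $q_v\cdot 2^{-i^*}=c/\tilde d_v=\Theta(c/\deg(v))$, as required.

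\textbf{Space and correctness.}
Each $(v,i)$ pair stores a sparse recovery sketch of size $\tilde O(k)=\tilde O(\Czeta/(\eps\delta))$, giving a total of $\tilde O(n\cdot \Czeta/(\eps\delta))$ across $n$ vertices and $O(\log n)$ levels. A union bound over vertices shows that simultaneously for all $v$ the degree estimate is a constant-factor approximation and sparse recovery succeeds, with probability $1-n^{-\Czeta}$ after tuning the polylog factors in $k$. Conditioned on this event, the marginal that an edge survives matches $p_{uv}$ up to the $(1\pm o(1))$ factor identified above, and the final keep/drop decision for each edge is determined by the two independent private coins of its endpoints --- so the joint distribution is, up to $n^{-\Czeta}$ total variation, an independent Bernoulli sampling at rates $p_{uv}$, and the guarantee of \Cref{thm:PowerCutSparsifier} transfers.

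\textbf{Main obstacle.}
The subtle step is the reduction from the streaming output to the idealized independent-Bernoulli scheme of \Cref{thm:PowerCutSparsifier}: the hashes $h_i$ and the estimators $\tilde d_v$ are random and shared across many edges, so the algorithm's randomness is not cleanly ``per edge.'' I expect to handle this by conditioning on the high-probability events listed above (good $\tilde d_v$, successful $k$-sparse recovery at level $i^*(v)$, Bernstein concentration of $|N^*_v|$) and then coupling the remaining private per-endpoint coins with independent Bernoulli trials of the desired bias, paying only $n^{-\Czeta}$ in total-variation error. Choosing the polylog factors in $k$ so that all three of sparse recovery, degree estimation, and limited-independence concentration succeed with probability $1-n^{-\Czeta-1}$ under a union bound over $n$ vertices is the main technical calibration step.
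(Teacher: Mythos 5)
Your overall architecture (level-$i$ subsampled neighborhoods per vertex, a sparse-recovery sketch per $(v,i)$ pair, picking one level per vertex according to its degree) is the same as the paper's, but there are two genuine gaps. First, $\Omega(\log n)$-wise independent hashes $h_i$ are not enough. The correctness of the scheme rests on \Cref{thm:PowerCutSparsifier}, whose proof (see \Cref{lem:fixed-i-all-cuts}) needs the edge inclusion indicators to be \emph{fully} independent: it applies an additive Chernoff bound to get per-cut failure probabilities as small as $n^{-\Theta(\vol(S)/2^{i})}$ — exponentially small in $n$ — and union-bounds over the $n^{2^{j+1}}$ projections of \Cref{clm:projections}. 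With only $O(\log n)$-wise independence these tail bounds degrade to polynomial, and your claim that the joint output is within $n^{-\Czeta}$ total variation of independent Bernoulli sampling is unsupported (and false in general for $k$-wise independent bits); conditioning on concentration of $|N^*_v|$ and coupling the \emph{private} coins does not repair the shared hash bits. The paper instead uses fully uniform hash functions for the analysis and then invokes Nisan's pseudorandom generator (or the faster PRG of \cite{DBLP:conf/soda/KapralovMMMNST20}) to reduce the seed to near-linear space; you need that step, not limited independence.

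Second, your inclusion probability and reweighting are inconsistent. With nested (or even independent) subsampling levels, the events ``marked from $u$'s side'' and ``marked from $v$'s side'' for the same edge are correlated through the shared subsampling bits, so the marginal is not $1-(1-c/\deg(u))(1-c/\deg(v))$; and even if it were, reweighting by $1/p_{uv}$ with $p_{uv}=c(1/\deg(u)+1/\deg(v))$ while the true inclusion probability differs by a constant factor destroys unbiasedness, which \Cref{thm:PowerCutSparsifier} requires (the weight must be exactly the reciprocal of the realized inclusion probability, or the $(1\pm\delta)$ guarantee fails for small $\delta$). The paper sidesteps both issues cleanly: since the levels are nested, $e=\{u,v\}$ ends up in the output iff $e\in G_{\min\{j_u,j_v\}}$ — a single Bernoulli event with probability exactly $2^{-\min\{j_u,j_v\}}$, independent across edges, and the weight is set to exactly $2^{\min\{j_u,j_v\}}$. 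Relatedly, you do not need an $\ell_0$-based degree estimator: exact degrees are maintainable with a trivial $O(\log n)$-bit counter per vertex under insertions and deletions, and using an approximate $\tilde d_v$ only adds another source of bias you would then have to correct for.
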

Given the simplicity of the sampling algorithm, and the very robust guarantee provided by power cut sparsifiers, we believe that further application (beyond the construction of expander decomposition) will be found, and hence think that they are interesting objects of study in their own right.
The dependence on $\eps$ in \Cref{thm:PowerCutSparsifier} (and hence also in \Cref{thm:pcs-alg}) is tight:
\begin{restatable}[]{theorem}{LBPowerCutSparsifier}
	\label{thm:cutlbPower}
	For every $\delta,\eps\in(\frac1n,\frac14)$ and even $n\in\N$, there is an $n$-vertex graph $G=(V,E)$ such that every $(\eps,\delta,\frac12)$-power cut sparsifier distribution $\cD$ produces graphs with $\Omega(\frac{n}{\eps})$ edges in expectation.
\end{restatable}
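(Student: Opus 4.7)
The plan is to exhibit a single hard instance where each individual edge must appear in any valid sparsifier with probability at least $1/2$, and to arrange for the instance itself to contain $\Omega(n/\eps)$ edges; linearity of expectation then gives the claimed bound. Concretely, set $d := \lfloor 1/(4\eps)\rfloor$, which is at least $1$ since $\eps \leq 1/4$, and let $G$ be any $d$-regular graph on the $n$ (even) vertices. Then $|E(G)| = nd/2 = \Omega(n/\eps)$.

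For each edge $e = \{u,v\} \in E(G)$, consider the partition $\mathcal{C}_e = \{\{u,v\}\} \cup \{\{w\} : w \in V \setminus \{u,v\}\}$ that isolates $e$ in a cluster of size two. By \Cref{Def:PowerCutSparsifier} applied to $\mathcal{C}_e$, with probability at least $1/2$ over $H \sim \mathcal{D}$, the induced subgraph $H\{u,v\}$ is a $(\delta,\eps)$-cut sparsifier of $G\{u,v\}$. Now $G\{u,v\}$ consists of the edge $\{u,v\}$ together with self-loops at $u$ and $v$ that bring $\deg_{G\{u,v\}}(\cdot)$ up to the corresponding $\deg_G(\cdot) = d$, so the cut $(\{u\},\{v\})$ in $G\{u,v\}$ has weight exactly $1$ while $\vol_{G\{u,v\}}(\{u\}) = d$.

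Applying the $(\delta,\eps)$-cut sparsifier condition to this single cut yields
\[
w_{H\{u,v\}}(\{u\},\{v\}) \;\ge\; (1-\delta)\cdot 1 - \eps\cdot d \;\ge\; \tfrac{3}{4} - \tfrac{1}{4} \;=\; \tfrac{1}{2},
\]
using $\delta < 1/4$ and $\eps d \le 1/4$. Since self-loops do not cross the cut $(\{u\},\{v\})$, the left-hand side equals $w_H(u,v)$, so the edge $e$ belongs to $H$ whenever the sparsifier succeeds on $\mathcal{C}_e$. Hence $\Pr[e \in H] \ge 1/2$ for every $e \in E(G)$, and summing over edges gives $\E[|E(H)|] \ge |E(G)|/2 = \Omega(n/\eps)$.

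The only step that needs any care is the computation of $\vol_{G\{u,v\}}(\{u\})$: under the standard convention that $G\{C\}$ augments $G[C]$ with self-loops so as to preserve each vertex's degree from $G$, this volume is exactly $d$, and the $\eps d \le 1/4$ additive slack is then not quite enough (combined with the $\delta < 1/4$ multiplicative slack) to mask the unit-weight edge. A different self-loop convention would only shift the constant $1/4$ without affecting the conclusion.
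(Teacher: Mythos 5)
Your proposal is correct and follows essentially the same route as the paper's proof: a $d$-regular graph with $d=\Theta(1/\eps)$, the two-element cluster $\{u,v\}$ isolating each edge, and the observation that the multiplicative loss $(1-\delta)$ plus the additive loss $\eps\cdot d$ cannot drive the cut value of the single crossing edge to zero, forcing $\Pr[e\in H]\ge 1/2$. The only differences are immaterial constants ($d=\lfloor 1/(4\eps)\rfloor$ versus the paper's $\lfloor 1/(2\eps)\rfloor$) and the choice of partition (singletons versus $\{C,V\setminus C\}$).
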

In our application of \Cref{thm:PowerCutSparsifier} we will use $\frac1\delta=O(\log n)$. Thus for our purpose here, the power cut sparsifier of \Cref{thm:PowerCutSparsifier} is tight up to polylogarithmic factors.
Furthermore, in the following theorem we show that  \Cref{thm:PowerCutSparsifier} is also tight (up to polylogarithmic factors) for the setting of parameters $\eps=\delta$.
In fact, this proposition holds even if one requires only a $(\delta,\eps)$-cut sparsifier, and not a power cut sparsifier, and even if we allow a general cut sketch (instead of a cut sparsifier). See \Cref{sec:LBbyCKST19} for definitions.

\begin{restatable}[]{theorem}{LBcutSparsifier}
	\label{thm:cutlb}
	For any $\epsilon,\delta\in(0,\frac14)$, any $(\delta,\epsilon)$-cut sketching scheme $\sk$ for unweighted multigraphs with $n$ vertices at most $n^2$ edges must use at least $\Omega(\frac{n\log n}{\max\{\eps^2,\delta^2\}})$ bits in the worst case.
\end{restatable}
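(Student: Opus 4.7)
The plan is to reduce to the Carlson--Kolla--Srivastava--Trevisan (CKST19) lower bound for classical multiplicative cut sparsifiers referenced in \Cref{sec:LBbyCKST19}. Their construction exhibits a distribution over graphs on which encoding all cut values with $1 \pm \theta$ multiplicative error requires $\Omega(n\log n/\theta^{2})$ bits, even for a general cut sketch rather than a subgraph. I would like to argue that on this same family the additive slack $\epsilon\cdot\vol(S)$ permitted by a $(\delta,\epsilon)$-sketch buys essentially nothing, because the distinguishing cuts $S$ of the hard instance satisfy $\vol(S)=\Theta(w_{G}(S,\bar S))$. Consequently the additive error is absorbed into a multiplicative error of order $\epsilon$, and the effective multiplicative accuracy of the sketch on those cuts is $O(\max\{\delta,\epsilon\})$.

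Concretely, I would first instantiate the CKST19 hard distribution: dense pseudo-random graphs of degree $d$ (with $nd\le n^{2}$ so that the multigraph/edge budget is respected), and verify the structural property that, with high probability over the construction, every cut $S$ used in their information-theoretic argument satisfies $w_{G}(S,\bar S)=\Theta(d\cdot|S|)=\Theta(\vol(S))$. This is the standard edge-expansion property of random $(\textnormal{near-})d$-regular graphs and follows from a Chernoff bound plus a union bound over the relevant family of cuts. Next, I would observe that on such an instance any $(\delta,\epsilon)$-cut sketch $H$ obeys
\[
\bigl|w_{H}(S,\bar S)-w_{G}(S,\bar S)\bigr|\le\delta\cdot w_{G}(S,\bar S)+\epsilon\cdot\vol(S)\le\bigl(\delta+O(\epsilon)\bigr)\cdot w_{G}(S,\bar S),
\]
so $H$ is, restricted to these hard cuts, a purely multiplicative $(1\pm O(\max\{\delta,\epsilon\}))$-sketch. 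Plugging $\theta=\Theta(\max\{\delta,\epsilon\})$ into the CKST19 bound then yields the desired $\Omega\bigl(n\log n/\max\{\delta^{2},\epsilon^{2}\}\bigr)$ bit lower bound.

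The main obstacle is to justify reopening the CKST19 proof at the level of the \emph{effective} multiplicative parameter rather than treating it as a pure black box: one has to confirm that the bits encoded in their hard instance correspond to cut perturbations of magnitude $\Theta(\max\{\delta,\epsilon\}\cdot w_{G}(S,\bar S))$, i.e. that the signal is not so small that the additive slack of the $(\delta,\epsilon)$-sketch could erase it. Assuming their construction is parameterized so that the distinguishability threshold on each hard cut scales with the multiplicative parameter $\theta$, the reduction above is immediate; otherwise one re-tunes the density $d$ and the magnitude of the edge perturbations used by CKST19 so that the signal magnitude on each cut $S$ is $\Theta(\max\{\delta,\epsilon\}\cdot d|S|)$, which by $\vol(S)=\Theta(d|S|)$ sits exactly at the boundary of what the additive-plus-multiplicative error budget allows the sketch to confuse, preserving the counting argument in its original form.
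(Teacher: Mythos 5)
Your proposal matches the paper's argument: the paper also takes the CKST19 family of dense $d$-regular (bipartite) $\Theta(1)$-expanders with $d=\Theta(1/(\eps+\delta)^2)$, uses the expansion to absorb the additive slack via $\epsilon\cdot\vol(S)\le(\eps/\phi)\cdot w_G(S,\bar S)$ so that the sketch becomes a purely multiplicative $(\delta+\eps/\phi,0)$-sketch on the hard instances, and then reruns the CKST19 rigidity-plus-counting argument with this effective parameter. The concern you raise about reopening CKST19 is exactly what the paper does (it invokes their rigidity lemma directly and redoes the counting), so your plan is essentially the paper's proof.
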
 

Note that the power cut sparsifier is closely related both to the $(\eps, \delta)$-cut sparsifier by \cite{agarwal2022sublinear} and the probabilistic spectral sparsifiers by \cite{lee2013probabilistic}. The former gives the same $(1 + \eps)$ multiplicative error on the size of cut $S$ and an additive error of $\delta \cdot |S|$. This is stronger than ours of $\eps \cdot \Vol(S)$, but only applicable when the graph doesn't have any self-loops, which is not the case for our applications. In other words, both constructions are tailored for their respective application and aren't applicable to the same extent in the other setting. Both constructions are achieved by sampling with a probability  based on the sum of inverse degrees of edge endpoints. In fact, the observation that sampling each edge with those probabilities produces a sparsifier was made by \cite{ST11}, Theorem 6.1. There, however, the original graph is required to be an expander. The subsequent works trade off this dependence for an additive error.

The simplest way to construct a power cut sparsifier is to virtually add an $\eps$-expander to the graph, and then apply the degree-based sampling scheme to the new graph. This results, however, in a quadratic dependence on $1/\eps$ in the number of edges. Reducing the dependence to just be linear is our main contribution to this line of work. A similar scheme with adding a constant degree expander is also proposed in \cite{agarwal2022sublinear}, and it can be shown that their construction can be extended to achieve construction of power cut sparsifier with the same dependance on $1/\eps$.

\subsection{Related work}\label{sec:related}
We refer the reader to \cite{McGregor14,McGregor17} for a survey of streaming algorithms. The idea of linear graph sketching was introduced in a seminal paper of Ahn, Guha, and McGregror \cite{ahn2012analyzing}. An extension of the sketching approach to hypergraphs were presented in~\cite{GuhaMT15}.  
%

Graph sparsifiers with additive error were previously studied by Bansal, Svensson, and Trevisan \cite{bansal2019new}. They studied sparsifiers where all edge weights are equal. As a result, their additive error dependence also on average degree (rather than only on volume). Another crucial difference is that our analysis provides sparsification guarantees for cuts for the majority of subgraphs of $G$.

\subsection{Preliminaries}\label{sec:prelims}

All of the logarithms in the paper are in base $2$. 
We use $\tilde{O}$ notation to suppress constants and poly-logarithmic factors in $n$, that is $\widetilde{O}(f)=f\cdot\polylog(n)$ (in particular, $\wt{O}(1)=\polylog(n)$).

Given an weighted graph $G=(V,E,w_G)$, the weighted degree $\deg_G(v)$ represents the sum of weights of all edges incident on $v$, including self-loops (simply degree for unweighted).
Given a set $S\subseteq V$, $G\{S\}$ denotes the graph induced by $S$ with self loops.
That is, the vertex set of $G\{S\}$ is $S$, we keep all the edges where both endpoints are in $S$, and we add self loops to every vertex such that its degree in $G$ and $G\{S\}$ is the same. We denote the complement set of $S$ by $\bar{S}=V\setminus S$.

The volume of a set $S$ is the sum of the weighted degrees of all the vertices in $S$: $\vol_G(S)=\sum_{v\in S}\deg_G(v)$. By $E_G(A, B)$ we represent the set of edges between vertex sets $A$ and $B$, and by $w_G(A, B) = \sum_{e\in E_G(A, B)}w_G(e)$ the sum of their weights (for unweighted their count).
Where the graph $G$ is clear from the context, we might abuse notation and drop the subscript (e.g. $\vol(S) := \vol_G(S)$).
We say that a set of vertices $S \subset V$ is a cut if $\emptyset \subsetneq S \subsetneq V$. The value $w_G(S, \bar{S})$ is called the size of the cut $S$, and the sparsity of a cut $S$ in graph $G$ is $\Phi_G(S) = \frac{w_G(S,\bar{S})}{\min\left\{ \vol_G(S),\vol_G(\bar{S})\right\}}$. We say that a cut $S$ is $\phi$-sparse if $\Phi_G(S) \leq \phi$.

\begin{definition}
	For $\phi \in [0, 1]$, a graph $G=(V, E,w_G)$ is a $\phi$-expander if the sparsity of every cut
	is at least $\phi$, i.e. $\min_{\emptyset \subsetneq S \subsetneq V}\Phi_G(S)\ge\phi$.    
\end{definition}

\begin{definition}
	For $\phi,\eps \in [0, 1]$ and a graph $G=(V,E)$, a partition $\mathcal{C}$ of the vertices $V$ is called an $(\epsilon,\phi)$-expander decomposition of $G$ if 
	\begin{itemize}
		\item the number of
		inter-cluster edges is at most $\epsilon$ fraction of all edges:
		\[
		\sum_{C\in\mathcal{C}} w(C,\bar{C}) \le \eps \cdot \vol(V)
		\]
		\item for each cluster $C \in \mathcal{C}$, $G\{C\}$ is a $\phi$-expander.    
		
	\end{itemize}
\end{definition}

\paragraph{Dynamic streams.} We say that a graph $G$ is given to us in a dynamic stream if we are given a fixed set $V$ of $n$ vertices and a sequence of updates on unweighted edges, where each update either adds an edge between two vertices or removes an existing edge. We define $G$ to be the graph at the end of the stream. 

\paragraph{Additive Chernoff bound.}
The following concentration bound is used in proof of the power cut sparsifier guarantees. Its proof appears in \Cref{appendix:AddChernoff}.
\begin{restatable}[Additive Chernoff Bound]{lemma}{AdditiveChernoff}\label{lem:add_chernoff_complete}
	Let $X_1, \ldots, X_n$ be independent random variables distributed in $[0, a]$. Let $X = \sum_{i\in [n]}X_i$, $\mu = \E[X]$. Then for $\eps \in (0, 1)$ and $\alpha \geq 0$
	\[
	\Pr[|X - \mu| > \eps \mu + \alpha] \leq 2 \exp \left(-\frac{\eps \alpha}{3a} \right)~.
	\]
\end{restatable}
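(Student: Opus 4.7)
My plan is to reduce to the standard multiplicative Chernoff bound by rewriting the hybrid deviation $\eps\mu+\alpha$ as a purely multiplicative deviation of the form $\delta\mu$ with $\delta:=\eps+\alpha/\mu$. First I would rescale by $a$: set $Y_i:=X_i/a \in [0,1]$, $Y:=X/a$, and $\mu_Y:=\E[Y]=\mu/a$, so that the event $\{|X-\mu|>\eps\mu+\alpha\}$ coincides with $\{|Y-\mu_Y|>\delta\mu_Y\}$. The degenerate case $\mu=0$ forces $X\equiv 0$ almost surely and the inequality is trivial, so I may assume $\mu>0$.

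Next I would invoke the standard multiplicative Chernoff estimates for sums of independent $[0,1]$-valued random variables: $\Pr[Y\geq(1+\delta)\mu_Y]\leq \exp(-\delta^2\mu_Y/3)$ when $\delta\leq 1$, $\Pr[Y\geq(1+\delta)\mu_Y]\leq\exp(-\delta\mu_Y/3)$ when $\delta\geq 1$, and the one-sided lower-tail estimate $\Pr[Y\leq(1-\delta)\mu_Y]\leq\exp(-\delta^2\mu_Y/2)$ for $\delta\in(0,1)$ (for $\delta\geq 1$ the lower-tail event is empty since $Y\geq 0$). A union bound over the upper and lower tails accounts for the factor of $2$ in front of the exponential, so the only thing left is to verify that each exponent is at least $\eps\alpha/(3a)$.

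This reduces to a short case split on whether $\delta$ exceeds $1$. When $\delta\leq 1$, an elementary expansion of $(\eps+\alpha/\mu)^2$ yields $\delta^2\mu_Y \geq 2\eps(\alpha/\mu)\mu_Y = 2\eps\alpha/a$. When $\delta>1$, $\delta\mu_Y=(\eps\mu+\alpha)/a \geq \eps\alpha/a$, using the trivial inequality $\eps\mu+\alpha\geq \eps\alpha$ (valid for $\eps\in(0,1)$ and $\mu,\alpha\geq 0$). In both regimes the relevant exponent is at least $\eps\alpha/(3a)$, which closes the argument.

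The only genuine choice in the proof is the substitution $\delta:=\eps+\alpha/\mu$, which unifies the additive and multiplicative parts of the deviation into the shape demanded by multiplicative Chernoff; the split between $\delta\leq 1$ and $\delta>1$ is just there to pick the correct variant of the bound. I do not foresee any real obstacle, and one could in fact tighten the constant from $1/(3a)$ to $2/(3a)$ in the $\delta\leq 1$ case with slightly more care.
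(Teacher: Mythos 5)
Your proof is correct. The reduction via the substitution $\delta:=\eps+\alpha/\mu$ is sound, both case verifications check out ($\delta^2\mu/a\ge 2\eps\alpha/a$ when $\delta\le 1$, and $\delta\mu/a=(\eps\mu+\alpha)/a\ge\alpha/a\ge\eps\alpha/a$ when $\delta>1$), the vanishing of the lower tail for $\delta\ge 1$ is handled properly, and the three multiplicative Chernoff variants you invoke are all standard for sums of independent $[0,1]$-bounded variables (they follow from the same moment-generating-function estimate $\E[e^{tY_i}]\le\exp(p_i(e^t-1))$).

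Your route differs from the paper's in the large-deviation regime. The paper first proves the bound for $\alpha<\mu$ by splitting on $\alpha\le\eps\mu$ versus $\alpha>\eps\mu$ and applying the quadratic-exponent Chernoff bound with parameter $\eps$ or $\tilde\eps=\alpha/\mu$ respectively; for $\alpha\ge\mu$ it switches to Bennett's inequality, using the variance bound $v\le a\mu$ to get $\Pr[X-\mu>\alpha]\le\exp(-\tfrac{\alpha}{2a}\ln(1+\tfrac{a\alpha}{v}))\le\exp(-\alpha/(3a))$. You avoid Bennett entirely by instead invoking the $\delta\ge 1$ variant $\Pr[Y\ge(1+\delta)\mu_Y]\le\exp(-\delta\mu_Y/3)$ of the multiplicative bound, which plays exactly the role Bennett plays in the paper (both ultimately rest on the same MGF computation, and both deliver the linear-in-$\alpha$ exponent needed when $\alpha$ is large relative to $\mu$). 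Your unified substitution also collapses the paper's two sub-cases for small $\alpha$ into a single algebraic inequality $(\eps+\alpha/\mu)^2\ge 2\eps\alpha/\mu$. The net effect is a somewhat more self-contained argument relying only on the textbook Chernoff menu, at no cost in the constant; the paper's version has the minor advantage of isolating the restricted-range statement (its Corollary for $\alpha<\mu$) as a reusable intermediate.
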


\section{Technical overview} \label{sec:tech-overview}

\subsection{Verifying expander decompositions in a dynamic stream} Before explaining our algorithm for the expander decomposition construction, we first consider a problem of \emph{verifying} an expander decomposition. That is, given a graph $G$ in a dynamic stream and after it a clustering $\C$ of its vertices and the values $\eps$, $\phi$, we want to distinguish between the cases when $\C$ is an $(\eps, \phi)$-expander decomposition of $G$ and when the fraction of intercluster edges is at least $O(\eps)$ or some cluster contains a $O(\phi)$-sparse cut.

\paragraph{Using graph sparsifiers.} The core idea that allows us both to solve this problem and construct the expander decomposition using only limited memory is the use of the graph sparsifiers. According to the classical definition \cite{FHHP19}, the $\delta$-cut sparsifier of a graph $G=(V, E)$ is a graph $H=(V, E', w)$ such that, for any cut $\emptyset \subsetneq S \subsetneq V$
\[
    (1 - \delta) \cdot w_G(S, \bar{S}) \leq w_H(S, \bar{S}) \leq (1 + \delta) \cdot w_G(S, \bar{S}).
\]

It is well known that one can construct a cut sparsifier in the streaming setting \cite{AGM12Spanners} using $\wt{O}(n/\delta^2)$ space. Having such a sparisifier, we can already distinguish between the cases when $G$ is a $\phi$-expander and when there is a $\frac{1 - \delta}{1 + \delta}\phi$-sparse cut in $G$ just by checking all possible cuts in $H$.

Things become more involved when we try to verify an expander decomposition. Now, we need to check that for some partitioning of $V$ into clusters $\mathcal{C}$, all $G\{C\}$, $C \in \mathcal{C}$, are $\phi$-expanders. Here, the classical notion of the cut sparsifier is insufficient, as the approach of checking every cut in each of the $G\{C\}, C \in \mathcal{C}$ would require that we have a cut sparsifier for each $G\{C\}$.

\paragraph{Simple sparsifier for a partition.} A natural solution for this is to construct a cut sparsifier $H$ in such a way that for any fixed partition $\mathcal{C}$, $H\{C\}$ will be a cut sparsifier of $G\{C\}$ for all $C \in \mathcal{C}$ with high probability. 
From existing literature, we can already derive how to do that for graphs $G$ that are themselves $\phi$ expanders \cite{ST11}. Consider the following sampling procedure: the graph $G_{\phi,\delta}$ is obtained by
taking each edge $e=\left\{ u,v\right\} \in E$ 
with probability at least $p_{e}\ge\min\left\{ 1,\left(\frac{1}{\deg(u)}+\frac{1}{\deg(v)}\right)\cdot O\left(\frac{\log^{2}n}{\delta^{2}\phi^{2}}\right) \right\} $
and weight $\frac{1}{p_{e}}$. 
By Spielman and Teng \cite{ST11}, $G_{\phi,\delta}$ is sparsifier w.h.p.:
\begin{lemma}[\cite{ST11}]\label{lem:ST10}
	Let $G=(V,E)$ be a $\phi$-expander, then with probability $1-n^{\Omega(1)}$, $G_{\phi,\delta}$ is  $1+\delta$
	spectral sparsifier, and thus cut sparsifier of $G$.
\end{lemma}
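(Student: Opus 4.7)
The plan is to deduce this from the Spielman--Srivastava sparsification-by-effective-resistance framework: if every edge $e$ is kept with probability $p_e \geq \min\{1, C \cdot R_{\mathrm{eff}}(e) \cdot \log n / \delta^2\}$ for a sufficiently large constant $C$ and reweighted by $1/p_e$, then the resulting graph is a $(1\pm\delta)$ spectral sparsifier (and hence a $(1\pm\delta)$ cut sparsifier) with high probability, via a matrix Chernoff bound applied to the sum of the independent rank-one edge contributions in the normalized-Laplacian basis. The task therefore reduces to upper bounding the effective resistance of every edge in a $\phi$-expander in terms of the degrees of its endpoints.

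For the resistance bound I would work with the normalized Laplacian $\mathcal{L} = I - D^{-1/2} A D^{-1/2}$ and the identity $R_{\mathrm{eff}}(u,v) = (e_u - e_v)^T L^{+} (e_u - e_v) = w^T \mathcal{L}^{+} w$, where $w = D^{-1/2}(e_u - e_v)$. Two observations suffice: (i) $\|w\|^2 = \tfrac{1}{\deg(u)} + \tfrac{1}{\deg(v)}$, and (ii) $w$ is orthogonal to $D^{1/2}\mathbf{1}$, which spans $\ker(\mathcal{L})$ on a connected graph, so $w$ lies in the range on which $\mathcal{L}^{+}$ has operator norm $1/\lambda_2(\mathcal{L})$. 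Cheeger's inequality applied to a $\phi$-expander gives $\lambda_2(\mathcal{L}) \geq \phi^2/2$, so
\[
R_{\mathrm{eff}}(u,v) \;\leq\; \frac{\|w\|^2}{\lambda_2(\mathcal{L})} \;\leq\; \frac{2}{\phi^2}\left(\frac{1}{\deg(u)} + \frac{1}{\deg(v)}\right).
\]

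Combining the two ingredients, the Spielman--Srivastava threshold $C \cdot R_{\mathrm{eff}}(e)\cdot \log n / \delta^2$ is upper bounded by $O(\log n /(\delta^2 \phi^2))\cdot(\tfrac{1}{\deg(u)} + \tfrac{1}{\deg(v)})$, which is in turn dominated by the sampling probability $p_e$ prescribed in the lemma. The extra $\log n$ factor in the stated $p_e$ is the slack needed to drive the matrix Chernoff failure probability down to $n^{-\Omega(1)}$ (rather than a constant) by a standard amplification, at the cost of only a polylogarithmic overhead in the number of retained edges.

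The main obstacle, should one insist on a self-contained argument, is the matrix Chernoff / non-commutative Bernstein inequality underlying Spielman--Srivastava; since the lemma is simply invoked from \cite{ST11}, I would treat their theorem as a black box and devote the short proof to the effective-resistance bound via Cheeger, which is essentially the linear-algebra calculation above. A minor subtlety I would double-check is the treatment of disconnected expanders and isolated vertices, but in the intended application each cluster is known to induce an expander (hence to be connected), so the bound applies verbatim.
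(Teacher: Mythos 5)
Your derivation is sound, but note that the paper itself offers no proof of this lemma --- it is invoked as a black box from \cite{ST11} (their Theorem~6.1), so the comparison is really between your route and Spielman--Teng's original one. Their argument bounds $\|D^{-1/2}(A-\tilde A)D^{-1/2}\|$ directly via a trace/moment concentration bound for degree-proportional random sampling, and then uses the spectral gap of the expander to convert this \emph{additive} operator-norm perturbation into a relative $(1\pm\delta)$ spectral guarantee. You instead reduce to Spielman--Srivastava effective-resistance sampling and push the expander hypothesis into the resistance bound $R_{\mathrm{eff}}(u,v)\le \frac{2}{\phi^2}(\frac{1}{\deg u}+\frac{1}{\deg v})$ via Cheeger; your linear-algebra computation there (the identity $\|w\|^2=\frac{1}{\deg u}+\frac{1}{\deg v}$, the orthogonality $w\perp D^{1/2}\mathbf{1}$, and the operator-norm bound $1/\lambda_2(\mathcal L)$ on the range) is correct, and a $\phi$-expander with $\phi>0$ is necessarily connected, so the kernel issue you flag is moot. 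Both approaches are valid; yours is arguably more modular (it isolates where expansion is used in a single clean resistance estimate and outsources all concentration to matrix Chernoff), while the original is self-contained relative to 2004-era tools and is what the citation actually points to. Two small quibbles: the standard matrix Chernoff bound already yields failure probability $n^{-\Omega(1)}$ with a single $\log n$ factor (the constant in $C$ controls the exponent), so the second $\log n$ in the paper's $p_e$ is slack rather than something you need to spend; and if you wanted the argument fully rigorous you should justify the quadratic-form identity $x^T L^+ x = (D^{-1/2}x)^T\mathcal L^+(D^{-1/2}x)$ for $x=e_u-e_v$ rather than asserting $L^+=D^{-1/2}\mathcal L^+D^{-1/2}$ as a matrix identity, since pseudoinverses of products do not factor in general (the variational characterization $R_{\mathrm{eff}}(u,v)=\sup_f ((e_u-e_v)^Tf)^2/(f^TLf)$ gives the cleanest path).
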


A key observation is that if we were to apply this procedure to $G\{C\}$ for some cluster $C$, the sampling probabilities of each non-loop edge would remain the same since the degrees of vertices are unchanged. Because of that, with some additional work, one can show that $H\{C\}$ is a $(1 + \delta)$-cut sparsifier of $G\{C\}$ for each $C \in \mathcal{C}$ with high probability. 

\paragraph{Intuition behind power cut sparsifiers.} But how to adopt this approach to the case where $G$ is not an expander? We observe that for our purposes we can allow a small additive error in cut size estimation in terms of the volume of the cut. Namely, suppose that we have a sparsifier $H$ with the following guarantee: for any partition $\mathcal{C}$ of $V$ with high probability for any cut $\emptyset \subsetneq S \subsetneq V$,
\[
    (1 - \delta) \cdot w_{G\{C\}}(S, \bar{S}) - \eps \cdot \vol(S) \leq w_{H\{C\}}(S, \bar{S}) \leq 
    (1 + \delta) \cdot w_{G\{C\}}(S, \bar{S}) + \eps \cdot \vol(S).
\]
See \Cref{Def:PowerCutSparsifier}.
This sparsifier is suitable for checking whether each $G\{C\}$ is an expander, since this guarantee implies the following guarantee for expansion of the cut $S$:
\[
    (1 - \Theta(\delta))\Phi_{G\{C\}}(S) - \Theta(\eps) \leq \Phi_{H\{C\}} \leq (1 + \Theta(\delta))\Phi_{G\{C\}}(S) + \Theta(\eps).
\]

Since we only want to check whether the expansion is bigger than $\phi$ with constant precision, it is enough for $\delta$ to be a small constant and $\eps = O(\phi)$. It turns out that we can construct such a sparsifier, which we call a power cut sparsifier, by sampling each edge with probability 
$p_{e}\ge\min\left\{ 1,\left(\frac{1}{\deg(u)}+\frac{1}{\deg(v)}\right)\cdot O\left(\frac{\log^{2}n}{\delta \eps} \right) \right\}$. This results in the sparsifier having size $O \left(\frac{n \log^2 n}{\delta \eps} \right)$, featuring only \emph{linear} dependance on $1/\eps$. This dependance is optimal, as we show in \Cref{thm:cutlbPower}.

\paragraph{Showing correctness.} 
Our proof that such a sampling scheme indeed produces a power cut sparsifier is based on the idea of decomposing cuts into $d$-projections, introduced by Fung \etal \cite{FHHP19}.
While \cite{FHHP19} work with sampling probabilities based on edge connectivities, in our case the probabilities only depend on the degrees of the vertices. 
We define the $d$-projection of a cut $S$ to be the subset of edges of $E(S,\bar{S})$ who's both endpoint have degree at least $d$, and show  that the overall number of distinct $d$-projections is polynomially bounded (\Cref{clm:projections}). The rest of the proof is standard: we show that the value of each $d$-projection is preserved with high probability by using an multiplicative-additive version of Chernoff bound, and then take a union bound over all projections.

\subsection{Constructing an expander decomposition} \label{sec:tech-exp-decomp}
\paragraph{Naive approach.} Having seen that we can verify an expander decomposition using a power cut sparsifier, a question arises: can we construct an expander decomposition provided with only power cut sparsifiers? A natural approach would be to construct a decomposition of the sparsifier using one of the existing methods and then to claim that it is also a decomposition of the original graph. Unfortunately, this plan is flawed, because the resulting decomposition would \emph{depend} on the used power cut sparsifier, which violates the implicit assumption that the partition $\mathcal{C}$ is chosen \emph{independently} from the sparsifier. Hence we cannot claim that the produced clustering is an expander decomposition. One can attempt to verify that using a fresh copy of the power cut sparsifier. However, if the verification fails were are back at square one.

\paragraph{Recursive approach.}
To explain our algorithm let's first examine the classical recursive approach to expander decomposition construction \cite{ST13}. The simplest version is as follows: find a sparse cut $S$, make this cut and then recurse on both sides $S$, $V \setminus S$. As we have seen, we cannot run all if those steps on one sparsifier, so we would use a fresh power cut sparsifier for each level of recursion.

Apriori, one of the sides of the sparse cut might be very small, implying that the recursion depth of this algorithm might be even of linear size, resulting in a quadratic space requirement overall.
Instead, we employ a classic approach of Spielman and Teng \cite{ST13}. The idea is to use a balanced sparse cut procedure to find a balanced sparse cut in the sparsifier, i.e. a cut that is sparse and both sides of which have approximately the same volume. If we use this procedure instead of just finding an arbitrary sparse cut, we can guarantee that the recursion depth is only logarithmic. However, there are a couple caveats that we discuss next.

\paragraph{Balanced sparse cut procedure.} A balanced sparse cut procedure typically takes the sparsity $\phi$ as an input and has three possible outcomes:
\begin{itemize}
    \item It declares that the graph is a $O(\phi)$-expander.
    \item It finds a balanced $O(\phi)$-sparse cut. A cut is balanced if both of its sides have approximately the same size.
    \item It finds an unbalanced $O(\phi)$-sparse cut, but gives some kind of guarantee on the largest side of the cut. For example, it may guarantee that it is a $O(\phi)$-expander.
\end{itemize}

If one uses it in the recursive algorithm outlined above, first outcome means that we can stop recursing on the graph. The second means that the cut we make is balanced, so we make the cut and recurse. Finally, in the last outcome because the bigger side of the cut is $O(\phi)$-expander, we can recurse only on the smaller side of the cut. This means that when we recurse, the volume of the part that we recurse on always decreases by a constant fraction, so the depth of the recursion is at most logarithmic.

In our case, we run it on a sparsifier $H$ and are interested in implications for the original graph $G$. The first two outcomes translate from a sparsifier to the original graph with only a small error, as they both concern only the cuts in the graph. The last outcome suffers the most: it concern the cut structure of some subgraph $H\{C\}$ of $H$. But because the choice of $C$ is dependent on $H$, we cannot guarantee that $H\{C\}$ is a sparsifier of $G\{C\}$. Hence the guarantee that $H\{C\}$ is an expander only translates into the guarantee that $G\{C\}$ is a near-expander, which is a much weaker guarantee.
This motivates us to consider an expander decomposition that uses a version of balanced sparse cut procedure with a relatively weak third guarantee, but one which survives this transition. We opt to adopt the approach of Chang and Saranurak 
\cite{chang2019improved}.

\paragraph{Two phase approach.} 
The weak third guarantee makes bounding the depth of the recursion difficult. That is why the algorithm is split into two phases. The first phase (\Cref{alg:streaming_ed}) is recursive: given a vertex set $C$, we run the most balanced sparse cut procedure on sparsifier $H\{C\}$. If it declares that it is an expander, then the graph $G\{C\}$ is also an expander and we terminate. If there is a balanced sparse cut $S$, we make this cut and recurse on vertex sets $S$ and $\bar{S}$ with a fresh sparsifier. Otherwise, if the cut is more than $O(\phi)$ unbalanced we execute second phase on $C$.

The idea behind the second phase (\Cref{alg:decomp_unbal}) is that if $G\{C\}$ doesn't have a $O(\phi)$-balanced sparse cut for sparsity $\phi$, then if we were to successively make cuts of sparsity less than $\phi$, we could not cut more than $O(\phi)$ fraction of the total volume of $C$. Because we have a small error in approximating the cut size and, therefore, the cut sparsity, due to the sparsifier usage, we decrease the sparsity of the cut that we are looking for by multiplying it by a close to $1$ constant $c < 1$ to mitigate that error. Furthermore, we divide the required cut balancedness by a constant. We use the balanced sparse cut procedure with sparsity $c \phi$ and continue in the same manner as in phase one, except that if there is a balanced sparse cut, we split the smaller side of the cut into singletons and continue with the larger. Finally, when we find an unbalanced cut, we use the same argument and decrease $\phi$ and balancedness even further. Eventually, the balancedness threshold becomes so small that all cuts are trivially balanced, and thus the algorithm halts and declare the cluster to be an expander.

\subsection{Roadmap}

The rest of the paper is organized as follows. First, in \Cref{sec:powerCutSSparsifier}, we give the algorithm for constructing a power cut sparsifier in RAM computation model and show its correctness. On top of that, in this section we also discuss the construction in dynamic streams, as well as give lower bounds on the expected size of the power cut sparsifier and the size of a $(\delta, \epsilon)$-cut sparsifier.

In \Cref{sec:balancedSparseCut}, we show that existing algorithms for finding a balanced sparse cut can be successfully used on a cut sparsifier instead of the original graph.

\Cref{sec:expanderDecompositionProof} presents our algorithm and its correctness proof, culminating in \Cref{thm:main-exp} and \Cref{thm:main-poly}.


\section{Power Cut Sparsifier} \label{sec:powerCutSSparsifier}
We would like to construct a cut sparsifier $H$ of a graph $G$ such
that for every cluster $C \subseteq V$, $H\{C\}$ will be a cut sparsifier of $G\{C\}$ (w.h.p.). 
Under standard $1\pm\epsilon$ multiplicative error, such sparsifier would have to contain all of the edges $G$, as for every edge $(u,v)\in E$ we can pick $C=\{u,v\}$.

For this reason, we relax the requirement in two ways: first, we will allow additive
error. Secondly, we will only require this property to hold for sets $C$ of a partition $\mathcal{C}$ that we can fix post factum. We restate \Cref{Def:PowerCutSparsifier}  for convenience.
\DefPowerCutSparsifier*

The name ``power cut sparsifier'' indicates that $H$ sparsifies the majority of the subgraphs induced by the power set of vertices of $G$.
Previous constructions \cite{BK96,ST11,BSS14} provided $(\delta,0)$-cut sparsifier with $\wt{O}(\frac{n}{\delta^2})$ edges. However, no guarantee on all the subsets was previously known. 

Sampling from $\mathcal{D}$, which we will now present, is achieved through standard techniques for sparsifier construction.

\paragraph{Algorithm.} 
Consider a graph $G=(V,E)$ and parameter $\Upsilon=\frac{6(\Czeta+2)}{\delta\cdot\epsilon}\cdot2\log n\cdot\ln n=O(\frac{\Czeta\log^{2}n}{\epsilon\cdot\delta})$.
For every edge $e=(u,v)$, let $1 \geq p_{e}\ge\min\left\{ 1,\Upsilon\cdot\left(\frac{1}{\deg(u)}+\frac{1}{\deg(v)}\right)\right\}$ be some arbitrarily determined probability.
Construct a graph $H$ by adding every edge $e\in E$ independently with probability $p_e$, and giving it weight $\frac{1}{p_e}$ in case it was sampled. 

\begin{restatable}[]{theorem}{PowerCutSparsifier}
	\label{thm:PowerCutSparsifier}
	The distribution over weighted subgraphs described above is a $(\delta,\epsilon,n^{-\Czeta})$-power cut sparsifier distribution. Furthermore, with probability at least $1 - n^{-\Czeta}$ the number of edges in $H$ is at most $O(n\cdot\frac{\Czeta\cdot\log^2 n}{\eps\cdot\delta})$.
\end{restatable}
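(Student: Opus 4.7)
The plan is to separately handle the size bound (easy) and the $(\delta,\eps)$-sparsifier guarantee for every cluster in the fixed partition (main work).

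For the size bound, I would observe that $\E[|E(H)|] = \sum_e p_e \le \Upsilon\sum_{e=\{u,v\}}\left(\tfrac{1}{\deg(u)}+\tfrac{1}{\deg(v)}\right) = \Upsilon\cdot n$, since $\sum_{e\ni v}\tfrac{1}{\deg(v)}=1$ for every vertex $v$. Because the $X_e$ are independent Bernoullis, a standard multiplicative Chernoff bound then gives $|E(H)| = O(n\Czeta\log^2 n/(\eps\delta))$ with probability $\ge 1-n^{-\Czeta}$.

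For the sparsifier guarantee, fix the partition $\mathcal{C}$, a cluster $C\in\mathcal{C}$, and a cut $\emptyset\subsetneq S\subsetneq C$. I would control
\[
w_{H\{C\}}(S,C\setminus S) \;=\; \sum_{e\in E_G(S,C\setminus S)}\frac{X_e}{p_e},\qquad \E\left[w_{H\{C\}}(S,C\setminus S)\right]=w_{G\{C\}}(S,C\setminus S),
\]
by partitioning the cut edges into $O(\log n)$ geometric annuli $D_i$ according to the minimum endpoint degree $\min(\deg(u),\deg(v))\in[2^i,2^{i+1})$, and analyzing $Z_i := \sum_{e\in D_i} X_e/p_e$ per level. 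For levels with $2^i\le\Upsilon$, every such edge has $p_e=1$ and $Z_i=|D_i|$ deterministically, so only the levels with $2^i>\Upsilon$ require concentration. There, each term $X_e/p_e$ is bounded by $2^{i+1}/\Upsilon$, and applying the additive Chernoff bound (\Cref{lem:add_chernoff_complete}) with multiplicative parameter $\delta$ and additive parameter $\alpha_i:=c\cdot\eps\cdot 2^i/\log n$ gives failure probability $n^{-\Omega(c(\Czeta+2))}$ after substituting the definition of $\Upsilon$. Summed across levels, the multiplicative parts yield $\delta\cdot w_G(S,C\setminus S)$, and the additive parts yield $\sum_i\alpha_i\le\eps\cdot\vol_G(S)$: the geometric sum is controlled because $D_i\ne\emptyset$ forces some endpoint of the cut inside $S$ to have degree $\ge 2^i$, hence $\vol_G(S)\ge 2^i$.

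The main obstacle is the union bound over cuts, since naively there are $2^n$ of them. The key observation is that $D_i$ depends on $S$ only through the $2^i$- and $2^{i+1}$-projections $\pi_d(S):=\{e\in E_G(S,V\setminus S):\min(\deg(u),\deg(v))\ge d\}$ intersected with the fixed edge set $E[C]$. By \Cref{clm:projections}, the total number of distinct projections $\pi_d(S)$ across all $d$ and all $S\subseteq V$ is polynomial in $n$; consequently the number of distinct $D_i$ that can arise from cuts $S\subseteq C$ across clusters $C\in\mathcal{C}$ is also polynomial. A union bound over this polynomial collection, over the $O(\log n)$ levels, and over the at most $n$ clusters costs only $\text{poly}(n)$, which is absorbed by taking the constant $c$ (equivalently, the constant inside $\Upsilon$) large enough so that the per-event failure probability $n^{-\Omega(c(\Czeta+2))}$ dominates. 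The total failure probability is then at most $n^{-\Czeta}$, as required.
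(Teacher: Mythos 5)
Your overall architecture (size bound via Chernoff; degree-annulus decomposition of the cut edges; concentration per annulus; union bound via projections; degree-preservation under $G\{C\}$ to handle the partition) matches the paper's proof of \Cref{thm:PowerCutSparsifier}. However, there is a genuine gap in the union bound, and it is not cosmetic.

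You assert that ``the total number of distinct projections $\pi_d(S)$ across all $d$ and all $S\subseteq V$ is polynomial in $n$,'' citing \Cref{clm:projections}. That lemma does \emph{not} say this: it bounds the number of distinct $d$-projections only among cuts of volume at most $\alpha\cdot d$, and the bound is $n^{\alpha}$, which is superpolynomial as soon as $\alpha=\omega(1)$. In general the count is exponential: in a $d$-regular graph with $d=n/2$ and $i=\lfloor\log d\rfloor$, every cut edge set is itself a $2^i$-projection, so there are $2^{n-1}-1$ distinct ones. Consequently a flat union bound over ``polynomially many'' events, paid for by cranking up the constant $c$ in $\Upsilon$, does not close the argument. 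The deeper problem is your choice of additive slack $\alpha_i=c\,\eps\,2^i/\log n$: it depends only on the annulus index, so the per-event failure probability from \Cref{lem:add_chernoff_complete} is $\exp\bigl(-\delta\alpha_i\Upsilon/(3\cdot 2^{i+1})\bigr)=n^{-\Theta(c(\Czeta+2))}$, a \emph{fixed} polynomial independent of $\vol(S)$. The paper instead takes the slack $\eps'\vol(S)$ with $\eps'=\eps/(2\log n)$, which makes the failure probability $n^{-\Theta((\Czeta+2)\vol(S)/2^{i})}$; stratifying cuts by volume $\vol(S)\in[2^{i+j},2^{i+j+1})$, the $n^{2^{j+1}}$ bound of \Cref{clm:projections} on distinct $2^i$-projections is then beaten by the $n^{-(\Czeta+1)2^{j}}$ failure probability, and the doubly-exponential-in-$j$ growth of the projection count is absorbed. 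Your slack is smaller than the paper's (since $2^i\le\vol(S)$ on nonempty annuli), so it buys a weaker tail exactly where a stronger, volume-dependent one is needed. Replacing $\alpha_i$ by $\eps'\vol(S)$ and carrying out the volume-stratified union bound repairs the proof and recovers \Cref{lem:fixed-i-all-cuts}. (Two minor points: the dichotomy ``$2^i\le\Upsilon$ implies $p_e=1$'' is off by one at the boundary annulus, where $\deg(e)$ may exceed $\Upsilon$, though the concentration bound covers that level anyway; and the size bound is fine with either the multiplicative or the additive Chernoff form.)
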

\begin{proof}
	By linearity of expectation, the expected number of edges is bounded by
	\begin{align*}
		\mathbb{E}\left[|E(H)|\right] & =\sum_{e=\{u,v\}\in E}p_{e}\le\Upsilon\sum_{e=\{u,v\}\in E}\left(\frac{1}{\deg(u)}+\frac{1}{\deg(v)}\right)\\
		& =\Upsilon\cdot\sum_{v\in V}\sum_{u\in N(v)}\frac{1}{\deg(v)}=n\Upsilon=O\left(n\cdot\frac{\Czeta\cdot\log^2 n}{\eps\cdot\delta}\right).
	\end{align*}
	By \Cref{lem:add_chernoff_complete}, 
	\[
		\Pr\big(|E(H)| \geq 2 n \Upsilon\big) \leq \Pr\big(|E(H)| \geq \E[|E(H)|]  + n \Upsilon\big)
		\leq 2\cdot e^{\frac{-n \Upsilon}{3}} \leq n^{-\Czeta}~,
	\]
	which concludes the proof of the second assertion of the theorem.

	Now we show the first assertion of the theorem.
	Our proof is based on projections, which is inspired by the sparsifier construction of Fung \etal \cite{FHHP19}.
	The \emph{degree} of an edge $e=\{u,v\}$ is set to be $\deg(e)=\min\left\{ \deg(u),\deg(v)\right\}$, i.e. the smaller degree between its adjacent vertices.

	\begin{definition}
		For a cut $\emptyset \subsetneq S \subsetneq V$, the $d$-projection of $S$ is the set of edges from $E(S,\bar{S})$ with degree at least $d$. 
	\end{definition}
	We begin by bounding the number of different $d$-projections. Recall that $\vol(S)=\sum_{v\in S}\deg(v)$.
	\begin{lemma}\label{clm:projections}
		For every integer $d,\alpha\ge1$, the number of distinct $d$-projections in cuts of volume at most $\alpha\cdot d$ is at most $n^{\alpha}$.
	\end{lemma}	
	\begin{proof}
		Let $V_{d}=\left\{ v\in V\mid\deg(v)\ge d\right\} $ be all the vertices
		of degree at least $d$. Note that edges of degree $d$ are only edges
		between vertices in $V_{d}$. 
		Consider a cut $S$, note that any addition
		and removal of vertices from $V\setminus V_{d}$ from $S$ will not
		change the $d$-projection.
		Thus the $d$ projection of $S$ is determined by $S\cap V_{d}$.
		Note also that each vertex in $S\cap V_{d}$ contributes at least $d$
		to the volume of $S$. If $\vol(S)\le\alpha d$, then $|S\cap V_d|\le\alpha$.
		We conclude that the number of distinct $d$-projections in cuts of volume at most $\alpha\cdot d$
		is bounded by $\sum_{i=0}^{\alpha}{|V_{d}| \choose i}\le\sum_{i=0}^{\alpha}\frac{n^{i}}{2}\le n^{\alpha}$.
	\end{proof}


Set $F_{i}=\left\{ e\in E\mid\deg(e)\in[2^{i},2^{i+1})\right\}$,
and let $G_{i}=(V,F_{i})$ denote the subgraph of $G$ containing edges in $F_i$.
Let $H_{i}$ be the subgraph of $H$
restricted to the edges of $F_{i}$. Note that for every cut $S$,
$\mathbb{E}\left[w_{H_{i}}(S,\bar{S})\right]=w_{G_{i}}(S,\bar{S})$.
Let $\epsilon'=\frac{\epsilon}{2\log n}$, thus $\Upsilon=\frac{6(\Czeta+2)}{\delta\cdot\epsilon'}\cdot\ln n$.
\begin{lemma}\label{lem:fixed-i-all-cuts}
	For every fixed $i$, with probability at least $1-4\cdot n^{-2(\Czeta+1)}$,
	\[
	\forall S\subseteq V,\left|w_{H_{i}}(S,\bar{S})-w_{G_{i}}(S,\bar{S})\right|<\delta\cdot w_{G_{i}}(S,\bar{S})+\epsilon'\cdot\vol(S)
	\]
\end{lemma}
\begin{proof}
	For every cut $S$ in $G$ of volume smaller than $<2^{i}$, all the vertices
	in $S$ are of degree smaller than $2^{i}$, and hence $w_{G_{i}}(S,\bar{S})=\emptyset$.
	Thus the lemma holds on all these cuts, and we will consider only cuts of volume at least $2^i$. If $\Upsilon\ge2^{i+1}$,
	all the edges in $F_{i}$ belong to $H$, and hence $w_{H_{i}}(S,\bar{S})=w_{G_{i}}(S,\bar{S})$
	for every cut $S$. Hence we will assume that $\Upsilon<2^{i+1}$.
	
	Fix a cut $S$. For an edge $e\in E_{G_{i}}(S,\bar{S})$,
	let $X_{e}$ be a random variable getting the weight of the edge $e$
	in $H$. For every such $e$, $\deg_{G}(e)\in[2^{i},2^{i+1})$ and hence
	$p_{e}\ge\frac{\Upsilon}{2^{i+1}}$. In particular, $X_{e}\in[0,\frac{2^{i+1}}{\Upsilon}]$.
	Using \Cref{lem:add_chernoff_complete}, we conclude
	\begin{align}
		& \Pr\left[\left|w_{H_{i}}(S,\bar{S})-w_{G_{i}}(S,\bar{S})\right|\ge\delta\cdot w_{G_{i}}(S,\bar{S})+\epsilon'\cdot\vol(S)\right]\nonumber\\
		& \qquad=\Pr\left[\left|\sum_{i}X_{i}-\mathbb{E}[\sum_{i}X_{i}]\right|\ge\delta\cdot\mathbb{E}[\sum_{i}X_{i}]+\epsilon'\cdot\vol(S)\right]\nonumber\\
		& \qquad\le2\cdot e^{-\frac{\delta\cdot\epsilon'\cdot\vol(S)}{3\cdot\frac{2^{i+1}}{\Upsilon}}}=2n^{-\frac{\Czeta+2}{2^{i}}\cdot\vol(S)}~.\nonumber
	\end{align}
	Let $\mathcal{S}_{i,j}=\left\{ S\subseteq V\mid\vol(S)\in[2^{i+j},2^{i+j+1})\right\} $
	be all the cuts in $G$ with volume between $2^{i+j}$ and $2^{i+j+1}$, and
	let $\widehat{\mathcal{S}_{i,j}}\subseteq\mathcal{S}_{i,j}$ be a
	subset of them such that every $2^{i}$-projection in $G$ of a cut in $\mathcal{S}_{i,j}$ is represented
	by some cut in $\widehat{\mathcal{S}_{i,j}}$. 
	Then any $2^i$-projection of any non-empty cut in $G_i$ is represented by some cut in $\cup_{j\ge1}\widehat{\mathcal{S}_{i,j}}$.
	Hence, it is enough to prove
	the lemma on all the cuts in $\cup_{j\ge1}\widehat{\mathcal{S}_{i,j}}$. Using 
	\Cref{clm:projections}, we get that $\left|\widehat{\mathcal{S}_{i,j}}\right|\le n^{2^{j+1}}$.
	This together with the union bound gives
	\begin{align*}
		& \Pr\left[\exists S\subseteq V,\left|w_{H_{i}}(S,\bar{S})-w_{G_{i}}(S,\bar{S})\right|\ge\delta\cdot w_{G_{i}}(S,\bar{S})+\epsilon'\cdot\vol(S)\right]\\
		& \qquad\le\sum_{j\ge1}\sum_{S\in\widehat{\mathcal{S}_{i,j}}}\Pr\left[\left|w_{H_{i}}(S,\bar{S})-w_{G_{i}}(S,\bar{S})\right|\ge\delta\cdot w_{G_{i}}(S,\bar{S})+\epsilon'\cdot\vol(S)\right]\\
		& \qquad\le\sum_{j\ge1}^{n}\sum_{S\in\widehat{\mathcal{S}_{i,j}}}2n^{-\frac{\Czeta+2}{2^{i}}\cdot\vol(S)}<2\sum_{j\ge1}^{n}n^{2^{j+1}}n^{-\frac{\Czeta+2}{2^{i}}\cdot2^{i+j}}=2\sum_{j\ge1}^{n}n^{-(\Czeta+1)\cdot2^{j}}<4\cdot n^{-2(\Czeta+1)}~.
	\end{align*}
\end{proof}
By union bound, as the number of different possible degrees is at most $n$,
with probability $1-n^{-\Czeta}$ the lemma holds for
all the indices $i$. 
Conditioning on the success event, we get for every
cut $S$
\begin{align}
	\left|w_{H}(S,\bar{S})-w_{G}(S,\bar{S})\right| & \le\sum_{i=0}^{\lfloor\log n\rfloor}\left|w_{H_{i}}(S,\bar{S})-w_{G_{i}}(S,\bar{S})\right|\nonumber\\
	& \le\sum_{i=0}^{\lfloor\log n\rfloor}\left(\delta\cdot w_{G_{i}}(S,\bar{S})+\epsilon'\cdot\vol(S)\right)\nonumber\\
	& \le\delta\cdot w_{G}(S,\bar{S})+\epsilon'\cdot\log2n\cdot\vol(S)=\delta\cdot w_{G}(S,\bar{S})+\epsilon\cdot\vol(S)~.\label{eq:sumAllScaled}
\end{align}
We conclude that with probability $1-n^{-\Czeta}$, the graph $H$ is an $(\delta,\eps)$-cut sparsifier of $G$.
 
Consider a partition $\mathcal{C}$ of $V$. The graph obtained by taking the union of  $\bigcup\left\{H\{C\}\right\}_{C\in\mathcal{C}}$ is a sample of $\bigcup\left\{G\{C\}\right\}_{C\in\mathcal{C}}$ according to our algorithm. 
Note that our analysis allows for self loops (contributing only $1$ edge to the degree) and multi-edges.

Thus with probability $1-n^{-\Czeta}$, $\bigcup\left\{H\{C\}\right\}_{C\in\mathcal{C}}$ is an $(\delta,\eps)$-cut sparsifier of $\bigcup\left\{G\{C\}\right\}_{C\in\mathcal{C}}$.
Note that for every $C'\in\mathcal{C}$, and $S\subseteq C'$, $w_{\bigcup\left\{G\{C\}\right\}_{C\in\mathcal{C}}}(S,V\setminus S)=w_{G\{C\}}(S,C\setminus S)$. If follows that for every $C\in\mathcal{C}$, $H\{C\}$ is a $(\delta,\eps)$-cut sparsifier of $G\{C\}$. Hence, our distribution is a $(\delta,\eps,n^{-\Czeta})$-power cut sparsifier of $G$.

\end{proof}

\subsection{Dynamic Stream implementation}

In order to construct a power cut sparsifier, it is enough to independently sample each edge $e$ with probability $p_e$ and, if it is sampled, assign it the weight $1/p_e$. To this end, we adopt sampling strategy proposed in \cite{AGM12Spanners,ahn2012analyzing}. We first observe that we can make an assumption that we have access to $\log n$ uniform hash function $h_{i}: {V \choose 2} \to \{0, 1\}$.
Uniform hash simply means that we sample uniformly at random one of the $2^{{n\choose 2}}$ functions from ${V \choose 2}$ to $\{0, 1\}$. Even though representing such function apriori requires $\Omega(n^2)$ space, we will later use Nisan pseudorandom generators \cite{nisan1992pseudorandom} to reduce the space requirement to $\wt{O}(n \Upsilon)$. 

The hash functions induce $\log n$ graphs $G_i=(V, E_i)$ for $i\in I=\{1,\dots,\log n\}$, where $e \in E_i$ iff $e \in E$ and $\prod_{j \leq i} h_j(e) = 1$. That means that each edge from $G$ belongs to $G_i$ with probability $2^{-i}$ (independently of all other edges). These hush functions naturally induced by $n^2\log n$ random bits.


While storing the graphs $G_i$ without information loss is infeasible, for each vertex we will be interested in recovering its adjacency vector in only one $G_i$ where its expected degree is low. This can be achieved through a folklore exact sparse recovery algorithm. We take its formulation from \cite{AGM12Spanners}.

\begin{lemma}[Folklore, sparse-recovery] \label{lem:exact-recovery}
    There exists a sketch-based algorithm $k$-Recovery that recovers a $\{0, 1\}^n$-vector $x$ from a dynamic stream exactly with probability at least $1 - p$ if $x$ has only $k$ non-zero entries and outputs $FAIL$ otherwise. It uses $O(k \cdot \log \frac{1}{p}\cdot\log n)$ space.
\end{lemma}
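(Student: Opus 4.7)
The plan is to implement the standard hashing-into-buckets sketch, exploiting the fact that $x\in\{0,1\}^n$ to avoid any nontrivial checksum. I would take $R=\Theta(\log(1/p))$ independent repetitions; in repetition $r$, sample a pairwise independent hash function $h_r:[n]\to[B]$ with $B=10k$ buckets, and for each bucket $(r,b)$ maintain the two linearly-updatable counters
\[
s_{r,b}=\sum_{i:\,h_r(i)=b} x_i,\qquad t_{r,b}=\sum_{i:\,h_r(i)=b} i\cdot x_i.
\]
Under a dynamic update that adds $\pm 1$ to coordinate $i$, both counters are updated by the corresponding signed amount, so the sketch is linear and compatible with insertions and deletions. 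In addition, maintain the single scalar $W=\sum_i x_i$, which at the end of the stream equals $\|x\|_0$ since the final vector is binary.

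For recovery, iterate over all $RB$ buckets. Because the target vector has entries in $\{0,1\}$, the end-of-stream value $s_{r,b}$ equals exactly the number of non-zero coordinates of $x$ hashing to bucket $b$; hence the test $s_{r,b}=1$ unambiguously identifies singleton buckets, and the recovered index is simply $t_{r,b}$. Let $S$ denote the union over all singleton buckets (across all repetitions) of the recovered indices. The algorithm outputs $S$ if $|S|=W$ and $|S|\leq k$, and outputs \textsf{FAIL} otherwise. This decision rule is the mechanism by which the algorithm realises the ``\textsf{FAIL} otherwise'' clause: when $|\supp(x)|>k$ we will have $W>k$ while $|S|\leq k$ trivially, so the check fails and \textsf{FAIL} is returned.

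The correctness analysis when $|\supp(x)|\leq k$ proceeds by a standard isolation argument. Fix any non-zero index $i$; by pairwise independence of $h_r$, the probability that some other non-zero index $j\neq i$ collides with $i$ in a single repetition is at most $(k-1)/B\leq 1/10$. Over $R=\lceil\log_{10}(k/p)\rceil$ independent repetitions, the probability that $i$ is isolated in \emph{none} of them is at most $p/k$. A union bound over the at most $k$ non-zero coordinates gives total failure probability at most $p$. Conditioned on success, every non-zero index appears in some singleton bucket, so $|S|=W\leq k$ and the returned set is exactly $\supp(x)$, which determines $x$ since $x$ is binary.

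The space cost is $O(\log n)$ bits per counter (values are bounded by $\mathrm{poly}(n)$ under the standard assumption on stream length), $2B=O(k)$ counters per repetition, and $R=O(\log(1/p))$ repetitions, giving the claimed $O(k\log(1/p)\log n)$ total. The main point worth double-checking is that pairwise independence of $h_r$ is sufficient (so that each hash function can be stored in $O(\log n)$ space rather than naively requiring $\Omega(n)$), and that in a dynamic stream with intermediate non-binary values, the \emph{final} value of $s_{r,b}$ still counts non-zero entries of the final $x$, not an inflated count from cancellations — which is immediate from linearity of the sketch. No richer collision-detection mechanism is needed precisely because the binary structure of $x$ forces $s_{r,b}\geq 2$ whenever a bucket contains more than one non-zero entry.
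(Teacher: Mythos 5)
The paper does not actually prove this lemma --- it is invoked as a folklore black box, citing the formulation from \cite{AGM12Spanners} --- so your proposal supplies a proof where the paper gives none. Your construction is the standard one (pairwise-independent hashing into $O(k)$ buckets, a count and an index-sum per bucket, singleton identification exploiting binariness in place of a checksum, plus a global counter $W=\|x\|_0$ to certify completeness), and it is sound: conditioned on every non-zero coordinate being isolated somewhere, the recovered set equals the support, and the acceptance test ``$|S|=W$ and $|S|\le k$'' can never pass when $\|x\|_0>k$, since every singleton bucket yields a true support element and hence $|S|\le W$; so either $|S|<W$ or $|S|=W>k$, and FAIL is returned. (Your stated reason, that ``$|S|\le k$ trivially,'' is not literally true --- $S$ can contain up to $\|x\|_0$ indices --- but the conclusion stands for the reason just given.)

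Two quantitative points deserve care. First, your union bound forces $R=\lceil\log_{10}(k/p)\rceil=\Theta(\log(k/p))$ repetitions, not $\Theta(\log(1/p))$ as you announce at the outset; the resulting space is $O(k\log(k/p)\log n)$, which matches the lemma's claimed $O(k\log(1/p)\log n)$ only when $p\le 1/k$ (true in this paper's application, where $p=n^{-\Czeta-3}$ and $\log(1/p)=\Theta(\log n)\ge\log k$, but not for, say, constant $p$; achieving the stated bound for all $p$ requires a different scheme, e.g.\ deterministic $2k$-measurement Vandermonde/syndrome decoding for the recovery part). Second, you should state explicitly that the $R$ hash functions are mutually independent across repetitions, since the $(1/10)^{R}$ bound uses independence between repetitions, while only pairwise independence is needed within each one. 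With these caveats noted, the argument is correct and suffices for the way the lemma is used in the paper.
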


Our algorithm will work as follows: for every vertex $v$ during the dynamic stream we will maintain a simple counter $\deg(v)$ for its degree, and in addition for every $i\in I$, a sketch $g_{i, v}$ for the neighborhood of $v$ in $G_i$, using \Cref{lem:exact-recovery} with parameters $k=8\Upsilon$ and $p=n^{-\Czeta - 3}$.
Note that the total space requirement is $\wt{O}(n)$. 

For each vertex $v$, let $j_v=\max\left\{ 0,\left\lfloor \log\left(\frac{\deg(v)}{2\Upsilon}\right)\right\rfloor \right\}$. 
Let $N_{G_{j_v}}(v)$ be the set of all neighbors of $v$ in $G_{j_v}$. Let $G'=\cup_{v\in V}\left(v\times N_{G_{j_v}}(v)\right)$ be the graph obtained by taking all the edges incident on $v$ in $G_{j_v}$ for every $v$.
For every recovered edge $e=\{u,v\}$, set it's weight in $G'$ to be $w_e=2^{\min\{j_v,j_u\}}$.
We can recover $G'$: for every $v$ using $g_{j_v,v}$ we recover $N_{G_{j_v}}(v)$ and construct $G_{\rm rec}$ as the union of all the recovered edges (with the appropriate weights). Finally, we return $G_{rec}$ as the answer.

We begin by arguing that the distribution over the graphs $G'$ is indeed a power cut sparsifier. Later, we will argue that we successfully recover $G'$ with high probability.
\begin{lemma}
	Every edge $e=\{u, v\} \in E$ belongs to $G'$ with probability 
	\[
		p_{e} = 2^{-\min\{ j_v, j_u \}} \ge\min\left\{ 1,\Upsilon\cdot\left(\frac{1}{\deg(u)}+\frac{1}{\deg(v)}\right)\right\},
	\]
	independently from the other edges. Further, if $e$ is sampled, it's weight is $p_e^{-1}$.
\end{lemma}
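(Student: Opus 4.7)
The plan is to decompose the membership of $e$ in $G'$ into a single Bernoulli trial driven by the hash functions. The key observation I would start from is that the sets $E_1 \supseteq E_2 \supseteq \cdots$ are nested, since the condition $\prod_{j \leq i} h_j(e) = 1$ becomes more restrictive as $i$ grows. Because $G' = \bigcup_{v \in V}\bigl(v \times N_{G_{j_v}}(v)\bigr)$, the edge $e = \{u,v\}$ appears in $G'$ iff it survives in $G_{j_u}$ or in $G_{j_v}$; by nesting this is the same as $e \in E_{\min\{j_u, j_v\}}$. I would use that $h_1(e), \dots, h_{\log n}(e)$ are mutually independent uniform bits (and independent across distinct edges, by the uniform-hash assumption) to compute $\Pr[e \in E_i] = 2^{-i}$. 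This immediately yields $p_e := \Pr[e \in G'] = 2^{-\min\{j_u, j_v\}}$, and independence across edges is inherited from the independence of hash values on distinct pairs.

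The only real calculation is the lower bound $p_e \geq \min\{1, \Upsilon(1/\deg(u) + 1/\deg(v))\}$. Without loss of generality assume $j_v \leq j_u$. If $j_v = 0$ then $p_e = 1$ and the claim is trivial. Otherwise $j_v \geq 1$, so $j_v = \lfloor \log(\deg(v)/(2\Upsilon)) \rfloor$, which gives $\deg(v) \geq 2\Upsilon \cdot 2^{j_v}$; since $j_u \geq j_v$, the analogous bound for $u$ yields $\deg(u) \geq 2\Upsilon \cdot 2^{j_u} \geq 2\Upsilon \cdot 2^{j_v}$. Adding $\Upsilon/\deg(u)$ and $\Upsilon/\deg(v)$ then gives $\Upsilon/\deg(u) + \Upsilon/\deg(v) \leq 2^{-j_v} = p_e$, as required. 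The weight claim is immediate by construction, since a sampled edge $e = \{u,v\}$ is assigned weight $2^{\min\{j_u, j_v\}} = 1/p_e$.

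I do not expect a substantive obstacle here: once the nesting of the $E_i$ is noticed, the ``or'' over the two endpoints collapses into a single Bernoulli event at level $\min\{j_u, j_v\}$, and the rest is a short calculation from the definition of $j_v$. The only thing to be slightly careful about is the $j_v = 0$ edge case, which is handled separately because the floor in the definition of $j_v$ is clipped at $0$.
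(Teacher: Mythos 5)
Your proof is correct and follows essentially the same route as the paper's: both exploit the nesting $G_{j_u}\subseteq G_{j_v}$ (for $j_v\le j_u$) to reduce membership in $G'$ to the single event $e\in G_{\min\{j_u,j_v\}}$, handle the clipped case $j_v=0$ separately, and then lower-bound $2^{-j_v}$ via the floor in the definition of $j_v$. The only cosmetic difference is that the paper takes w.l.o.g.\ $\deg(v)\le\deg(u)$ and bounds $2^{-j_v}\ge 2\Upsilon/\deg(v)\ge\Upsilon(1/\deg(u)+1/\deg(v))$ in one step, whereas you bound the two terms separately; the content is identical.
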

\begin{proof}
	Consider an edge $e=\{u,v\}\in E$, where w.l.o.g. $\deg(v)\le\deg(u)$. 
	If $\deg(v)<4\Upsilon$, then $j_v=0$. In particular, $u\in N_{G_0}(v)$, and thus $e\in G'$ with weight $w_e=2^{\min\{j_v,j_u\}}=1$.
	
	Otherwise, as  $G_{j_u}\subseteq  G_{j_v}$, $e\in G'$ if and only if $e\in G_{j_v}$. This happens with probability $p_e=2^{-j_{v}}=2^{-\left\lfloor \log\left(\frac{\deg(v)}{2\Upsilon}\right)\right\rfloor }\ge\frac{2\Upsilon}{\deg(v)}\ge\Upsilon\cdot\left(\frac{1}{\deg(u)}+\frac{1}{\deg(v)}\right)$. If it is indeed sampled, it's weight is set to be $w_e=2^{\min\{j_v,j_u\}}=2^{j_v}=p_e^{-1}$.
\end{proof}

By \Cref{thm:PowerCutSparsifier} it follows that the distribution producing $G'$ is a $(\delta,\eps,n^{-\Czeta})$-power cut sparsifier.
Next we argue that $G_{\rm rec}$ is very likely to be equal to $G'$.
Note that the number of edges in $G_{\rm rec}$ is always bounded by $nk=O(n \Upsilon)$.
Denote by $\Psi$ the event that $G_{\rm rec}\ne G'$, i.e. we failed to recover $G'$.
\begin{lemma} \label{lemma:pcs-alg-correctness}\label{lem:RecoveryProb}
$\Pr[\Psi]\le  n^{-\Czeta}$.
\end{lemma}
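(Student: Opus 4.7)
The plan is to argue that the failure event $\Psi$ decomposes over the $n$ vertices: the recovery fails only if, for some $v$, either the true neighbor count $|N_{G_{j_v}}(v)|$ exceeds the capacity $k=8\Upsilon$ of the sparse-recovery sketch, or the sketch itself fails despite the input being $k$-sparse. Both events will be controlled independently and then combined by a union bound over $V$.

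First I would handle the sparsity of the neighborhood. Split on $\deg(v)$: if $\deg(v)<4\Upsilon$, then by construction $j_v=0$, so $N_{G_{j_v}}(v)=N_G(v)$ has size at most $\deg(v)<4\Upsilon\le 8\Upsilon$ deterministically. Otherwise $j_v=\lfloor\log(\deg(v)/(2\Upsilon))\rfloor$, giving $2^{j_v}\in[\deg(v)/(4\Upsilon),\,\deg(v)/(2\Upsilon)]$, so the expected size $\mu=\deg(v)\cdot 2^{-j_v}$ lies in $[2\Upsilon,4\Upsilon)$. Since each edge incident to $v$ enters $G_{j_v}$ independently with probability $2^{-j_v}$ (assuming uniform hashes), a standard multiplicative Chernoff bound gives
\[
\Pr\bigl[|N_{G_{j_v}}(v)|>8\Upsilon\bigr]\;\le\;\Pr[X>2\mu]\;\le\;e^{-\mu/3}\;\le\;e^{-2\Upsilon/3}.
\]
Because $\Upsilon=\Theta(\Czeta\log^2 n/(\eps\delta))$, for a suitable hidden constant this is at most $n^{-\Czeta-3}$.

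Next I would invoke \Cref{lem:exact-recovery}: conditioned on $|N_{G_{j_v}}(v)|\le 8\Upsilon$, the sketch $g_{j_v,v}$ (built with capacity $k=8\Upsilon$ and failure parameter $p=n^{-\Czeta-3}$) outputs the neighborhood exactly except with probability $n^{-\Czeta-3}$. Thus for every fixed $v$, the probability that recovery of $N_{G_{j_v}}(v)$ fails is at most $2n^{-\Czeta-3}$. A union bound over all $n$ vertices gives
\[
\Pr[\Psi]\;\le\;n\cdot 2n^{-\Czeta-3}\;\le\;n^{-\Czeta},
\]
since whenever every $v$ recovers its neighborhood, we reconstruct exactly the same edges, with the same weights $2^{\min\{j_u,j_v\}}$, that define $G'$.

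I do not expect a real obstacle here; the step that needs the most care is the sparsity argument, and in particular justifying the use of Chernoff with truly independent $\{0,1\}$ hash values. This is legitimate in the current theorem because the authors explicitly assume access to uniform hash functions $h_i:\binom{V}{2}\to\{0,1\}$; the later derandomization via Nisan's PRG only affects the space bound, not this success probability. A second minor point is to note that the two failure modes (neighborhood too large, or sketch failing on a small vector) are being bounded by a single union bound, which is fine because in either event recovery of $v$'s neighborhood fails.
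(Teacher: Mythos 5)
Your proposal is correct and follows essentially the same route as the paper: a case split on whether $j_v=0$ (deterministic sparsity) versus $j_v>0$ (Chernoff bound on $|N_{G_{j_v}}(v)|$ with expectation in $[2\Upsilon,4\Upsilon)$), combined with the sketch failure probability via a union bound over the $n$ vertices. The only cosmetic difference is that you invoke the multiplicative Chernoff bound directly where the paper uses its additive variant (\Cref{lem:add_chernoff_complete}); both yield the same $n^{-\Czeta-3}$ per-vertex bound.
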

\begin{proof}
	For a successful recovery, it is enough that the degree of every vertex $v$ in $G_{j_v}$ is bounded by $k=8\Upsilon$, and that all the sparse recovery sketches $\{g_{j_v,v}\}_{v\in V}$ succeeded. 
	
	For every vertex $v$, if $j_v = 0$, then  $\log\left(\frac{\deg(v)}{2\Upsilon}\right)<1$, implying that
	$|N_{G_{0}}(v)|=\deg(v)\le2\Upsilon<k$. 
	Otherwise, it holds that  $\frac{2\Upsilon}{\deg(v)}\le2^{-j_{v}}<\frac{4\Upsilon}{\deg(v)}$.
	As each edge from $N(v)$ is present in $N_{j_v}(v)$ with probability $2^{-{j_v}}$,
	$\E[|N_{j_v}(v)|] \leq 4 \Upsilon$. Hence by \Cref{lem:add_chernoff_complete}, 
	\begin{align*}
	\Pr[|N_{j_{v}}(v)| & \geq8\Upsilon]\leq\Pr\left[\left||N_{j_{v}}(v)|-\E[|N_{j_{v}}(v)|]\right|\geq\frac{1}{2}\cdot\E[|N_{j_{v}}(v)|+2\Upsilon\right] \leq2e^{-\frac{\Upsilon}{3}}=2n^{-\frac{4(\Czeta+2)}{\delta\cdot\epsilon}\cdot\log n}~.
	\end{align*}
	By union bound (and using \Cref{lem:exact-recovery}), the probability that either $|N_{j_{v}}(v)|  \geq8\Upsilon$ for some $v$, or some $\{g_{j_v,v}\}_{v\in V}$ is failed recovery is bounded by 
	\[
	n\cdot\left(2n^{-\frac{4(\Czeta+2)}{\delta\cdot\epsilon}\cdot\log n}+n^{-\Czeta-3}\right)\le n^{-\Czeta}~.
	\]
%
\end{proof}

Finally, we no 	te that the number of required random bits can be reduced to nearly linear in $n$ using Nisan's pseudorandom generator~\cite{nisan1992pseudorandom} (at the expense of a factor $\Omega(n)$ increase in runtime per update), or the faster pseudorandom number generator of~\cite{DBLP:conf/soda/KapralovMMMNST20,DBLP:journals/corr/abs-1903-12165} that can be applied at polylogarithmic cost per evaluation.   This gives

\PowerCutSparsifierStream*

\subsection{Lower bound for power cut sparsifier}
In this section we prove \Cref{thm:cutlbPower}, restated bellow for convenience.
\LBPowerCutSparsifier*
\begin{proof}
	Set $d=\left\lfloor \frac{1}{2\eps}\right\rfloor$, and let $G=(V,E)$ be a $d$-regular graph. Let $\cD$ be an $(\eps,\delta,\frac12)$-power cut sparsifier.
	Consider an edge $e=\{u,v\}\in E$, and let $C=\{u,v\}$ be a set, and $\cC_e=\{C,V\setminus C\}$ be a partition. The graph $G\{C\}$ consist of two vertices $u,v$ with a single edge between them, and $d-1$ self loops on each vertex.	
	Let $S=\{u\}$, $\overline{S}=\{v\}$. Consider a graph $H$ drawn from $\cD$.
	With probability at least $1-\frac12=\frac12$, it holds that,
	\[
	w_{H\{C\}}(S,\overline{S})\ge(1-\delta)\cdot w_{G\{C\}}(S,\bar{S})-\epsilon\cdot\vol(S)=(1-\delta)-\eps\cdot d>0~.
	\]
	It follows that $\Pr_{H\sim\cD}[e\in H]\ge \frac12$ (as otherwise $	w_{H\{C\}}(S,\overline{S})=0$). We conclude that $\mathbb{E}_{H\sim\cD}[|H|]\ge\frac{1}{2}\cdot|E|=\frac{dn}{2}=\Omega(\frac{n}{\eps})$, as required.	
\end{proof}

\subsection{Lower bound for $(\delta,\eps)$-cut sparsifier}\label{sec:LBbyCKST19}
%
%
%
%
%
%
%
Given a graph $G=(V,E)$, an $(\delta,\epsilon)$-cut sketching scheme $\sk$ is a function $\sk:2^V\rightarrow\R_{\ge0}$, such that for every subset $S\subseteq V$ it holds that
\[
(1-\delta)\cdot w_{G}(S,\bar{S})-\epsilon\cdot\vol_{G}(S)\le\sk(S)\le(1+\delta)\cdot w_{H}(S,\bar{S})+\epsilon\cdot\vol_{H}(S)~.
\] 
Note that, given a $(\delta,\epsilon)$-cut sparsifier $H$, one can create a cut sketching scheme by setting $sk(S)=w_{H}(S,\bar{S})$. Our lower bound in  \Cref{thm:cutlb} holds for arbitrary sketching scheme. In particular, it implies that the number of edges in a cut sparsifier must be  $\Omega(\frac{n}{\max\{\eps^2,\delta^2\}})$.
It also follows that our construction in \Cref{thm:PowerCutSparsifier} is tight for the regime $\eps=\delta$.
 We restate the theorem for convenience.

\LBcutSparsifier*
We deffer the proof to the full version of this paper, and provide here only a sketch.
\begin{proof}[Proof sketch.\\]
	Fix $d=\Theta(\frac{1}{(\eps+\delta)^2})$. Denote by $\mathcal{G}_{n,n,d,\phi}$ the family of $n,n$-bipartite $d$-regular multigraphs (with fixed sides $L,R$) which are $\phi=\Theta(1)$-expanders. 	
	We first argue that $|\mathcal{G}_{n,n,d,\phi}|\ge 2^{\alpha\cdot dn\log n}$ for some constant $\alpha$. This could be shown by taking $d$ random matchings between $L$ and $R$ (there are about $(n!)^d$ graphs in the support), and arguing that w.h.p. the resulting graph is $\phi$-expander.
	
	Next, following the rigidity of cut approximation lemma of Carlson, Kolla, Srivastava, and Trevisan \cite{CKST19}, we get that if for two multigraphs $H,G\in \mathcal{G}_{n,n,d,\phi}$ it holds that $H$ is $(\delta,0)$-cut sparsifier of $G$, then   $G$ and $H$ must have at least $dn\cdot(1-\Theta(\sqrt{d})\delta)$ edges in common. \footnote{The proof in \cite{CKST19} holds for simple bipartite $d$-regular graphs, but can be easily extended to $\mathcal{G}_{n,n,d,\phi}$.}
	
	 Observe that if $H$ is a $(\delta,\eps)$-sparsifier of a $\phi$-expander $G$, then for every cut $S$ it holds that 
	 \[
	 \left|w_{G}(S,\bar{S})-w_{H}(S,\bar{S})\right|\le\delta\cdot w_{G}(S,\bar{S})+\epsilon\cdot\vol(S)\le(\delta+\frac{\eps}{\phi})\cdot w_{G}(S,\bar{S})~,
	 \]
	 and thus $H$ is a $(\delta+\frac\eps\phi,0)$-cut sparsifier of $G$, and in particular by \cite{CKST19}, if $H,G\in \mathcal{G}_{n,n,d,\phi}$ then they have $dn\cdot(1-\Theta(\sqrt{d})(\delta+\frac\eps\phi))=\Omega(dn)$ edges in common.
	 It follows that the number of graphs in $\mathcal{G}_{n,n,d,\phi}$ of which $G\in \mathcal{G}_{n,n,d,\phi}$ can be a $(\delta,\eps)$-cut sparsifier is bounded by $2^{dn}\cdot{{n \choose 2} \choose O(dn)}=2^{O(dn\log n)}$. 
	 Fixing the right constant for $d$, it follows that there is a set $N$ of  at least $\frac{2^{\Omega(dn\log n)}}{2^{O(dn\log n)}}=2^{\Omega(dn\log n)}=2^{\Omega(\frac{n\log n}{(\delta+\eps)^{2}})}$ different graphs in $\mathcal{G}_{n,n,d,\phi}$ which are not $(\eps,\delta)$-cut sparsifiers of each other. Taking $\delta'=\frac\delta3$. $\eps'=\frac\eps3$, it holds that no $(\delta,\eps)$-cut sketching scheme $\sk$ can satisfy two graphs from $N$ simultaneously. The theorem now follows.	 
\end{proof}
\section{Finding balanced sparse cuts on sparsifiers} \label{sec:balancedSparseCut}
The gist of this section is that a balanced sparse cut procedure can be run on a $(\delta, \eps)$-cut sparsifier instead of the original graph with almost the same results. To ease the discussion we introduce the following definition.

\begin{definition}[$(\alpha, b)$-balanced sparse cut algorithm] \label{def:BalancedSparseCut}
	Let $\alpha \geq 1$, $b \leq 1$ be some parameters. An algorithm is an $(\alpha, b)$-balanced sparest cut algorithm if, given a parameter $\phi \in [0, 1]$ and access to some information about the graph $G$, it either
	\begin{itemize}
		\item declares that $G$ is a $\phi$-expander\\\\
		or
		\item returns a cut $S$ such that $\Phi_{G}(S)\le \alpha \cdot \phi$, and for
		every cut $S'$ in $G$ where $\Phi_{G}(S')\le\phi$, $\bal_{G}(S) \ge b \cdot \bal_{G}(S')$.
	\end{itemize}
\end{definition}

We adopt two different algorithms for finding a balanced sparse cut and analyze their output when applied to a $(\delta,\eps)$-cut sparsifier. 
The first is a brute force, exhaustive search algorithm by Spielman and Teng (Lemma 7.2 in \cite{ST11}). 
It checks all possible cuts in the sparsifier and among them finds the most balanced one of sparsity at most $\phi$. When applied to a sparsifier, it gives an exponential time algorithm with arbitrarily small sparsity approximation error $\alpha$ and optimal cut balance $b$.

The second adopted algorithm is a balanced sparse cut algorithm of Saranurak and Wang \cite{SW19}. The running time is near linear, but the error $\alpha$ is polylogarithmic in the size of the graph.

Because $b$ is constant and $\alpha$ is $O(\poly\log n)$ in both algorithms that we use, we will assume that $b = O(1)$ and $\alpha = \wt{O}(1)$ fot the rest of the paper.

We begin with the exhaustive search algorithm \cite{ST11}. Note that we measure volume here with respect to the graph $G$. One can measure the volume in the sparsifier, but that would worsen both $\alpha$ and $b$ by a $1 + O(\delta)$ factor.

\begin{lemma}[Exponential time balanced sparse cut]
	\label{lem:BalancedSparseCutExp}
	Consider a graph $G$, and parameters $\phi \in(0,1)$ and $\delta \in (0, \frac{1}{16})$. Then there exists a $(1 + 5 \delta, 1)$-balanced sparse cut algorithm taking as input $\phi$, a $(\delta, \delta \cdot \phi)$-cut sparsifier $G_{\delta, \delta \cdot \phi}$ of $G$, as well as all the degrees of the vertices in $G$.

	This algorithm runs in time $O(2^n m)$ and space $O(n + m)$, where $m = |E(G_{\delta, \delta \cdot \phi})|$.
\end{lemma}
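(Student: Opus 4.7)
The plan is to run an exhaustive search over all $2^n$ subsets of $V$, using the sparsifier $H := G_{\delta,\delta\phi}$ to evaluate cut sparsity approximately, and relying on the exact degree information to compute volumes in $G$ exactly. For each candidate $S\subseteq V$, I compute $w_H(S,\bar S)$ in $O(m)$ time by scanning the edges of $H$, and set
\[
\hat\Phi(S) \;:=\; \frac{w_H(S,\bar S)}{\min(\vol_G(S),\vol_G(\bar S))}.
\]
Applying the $(\delta,\delta\phi)$-cut sparsifier guarantee to the side of smaller volume yields the two-sided estimate
\[
(1-\delta)\,\Phi_G(S)-\delta\phi \;\le\; \hat\Phi(S) \;\le\; (1+\delta)\,\Phi_G(S)+\delta\phi
\]
for every cut $S$, since the additive $\delta\phi\cdot\vol_G(S)$ term (or $\delta\phi\cdot\vol_G(\bar S)$) is exactly cancelled upon dividing by $\min(\vol_G(S),\vol_G(\bar S))$.

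Fix threshold $\tau := (1+2\delta)\phi$. The algorithm proceeds as follows: if no $S$ satisfies $\hat\Phi(S)\le\tau$, declare $G$ a $\phi$-expander; otherwise, return the $S$ maximizing $\bal_G(S)$ subject to $\hat\Phi(S)\le\tau$. Both $\bal_G(S)$ and $\hat\Phi(S)$ are exactly computable from the available data. The expander case is correct, since any $S'$ with $\Phi_G(S')\le\phi$ would satisfy $\hat\Phi(S')\le(1+\delta)\phi+\delta\phi=\tau$, contradicting the assumption.

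For the non-expander case I must verify the $(1+5\delta,1)$ guarantee. Balance is immediate: every cut $S'$ with $\Phi_G(S')\le\phi$ appears in the maximization (by the same inequality $\hat\Phi(S')\le\tau$), so the returned $S$ has $\bal_G(S)\ge\bal_G(S')$, giving the factor $b=1$. For sparsity, any returned $S$ has $(1-\delta)\Phi_G(S)-\delta\phi\le\tau=(1+2\delta)\phi$, so
\[
\Phi_G(S)\;\le\;\frac{(1+3\delta)\,\phi}{1-\delta}\;\le\;(1+5\delta)\,\phi,
\]
where the last step uses that $(1+3\delta)\le(1+5\delta)(1-\delta)=1+4\delta-5\delta^2$ iff $\delta\le 1/5$, which is implied by the hypothesis $\delta<1/16$.

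Finally, the resource bounds are routine: enumerating the $2^n$ subsets, with $O(m)$ work per subset to sum edge weights crossing the cut, gives total time $O(2^n m)$; storing $H$, the degrees, and the running best cut takes $O(n+m)$ space. The only genuinely delicate point is picking the threshold $\tau$ so that the admissible interval $[(1+2\delta)\phi,\,(1+3\delta-5\delta^2)\phi]$ is non-empty and the final conversion yields exactly the claimed constant $1+5\delta$; this is the small algebraic inequality highlighted above and is the main obstacle in tightening or simplifying the statement.
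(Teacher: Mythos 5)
Your proposal is correct and is essentially the same as the paper's proof: exhaustive search over all cuts, the estimator $w_H(S,\bar S)/\min(\vol_G(S),\vol_G(\bar S))$ with the two-sided bound $(1-\delta)\Phi_G(S)-\delta\phi\le\hat\Phi(S)\le(1+\delta)\Phi_G(S)+\delta\phi$, the threshold $(1+2\delta)\phi$, and the same algebra yielding sparsity $(1+5\delta)\phi$ and balance factor $1$. The resource bounds are argued identically.
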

\begin{proof}
	\sloppy Let $\psi := \delta \cdot \phi$. The algorithm is to find the most balanced cut of sparsity at most $ (1 + 2\delta) \phi $ in $G_{\delta, \psi}$, while using the degrees of vertices in graph $G$ to compute the volume.
	Formally, let $S \subset V$ be an arbitrary cut such that $\vol_G(S) \leq \vol_G(\overline{S})$. We define $\Phi'(S) := \frac{w_{G_{\delta, \psi}}(S, \overline{S})}{\vol_G(S)}$ to be the estimation of sparsity of $S$.
	Let 
	$$\mathcal{S}=\left\{S\subset V\mid \vol(S)\le\frac12\vol(G)\mbox{ and }\Phi'(S)\le(1+2\delta)\phi\right\}.
	$$ If $\mathcal{S}=\emptyset$ is empty, we declare that $G$ is $\phi$-expander. Otherwise we return a cut $S\in\mathcal{S}$ that maximizes the volume $\vol(S)$. This finishes the description of the algorithm.
	
	As $G_{\delta,\psi}$ is a $(\delta,\psi)$-cut sparsifier, for every subset $S$,
	\[
		(1 - \delta)w_G(S, \overline{S}) - \psi \vol(S) \leq w_{G_{\delta, \psi}}(S, \overline{S}) \leq (1 + \delta)w_G(S, \overline{S}) + \psi \vol(S).
	\]
	Dividing the above by $\vol(S)$, we get
	\begin{equation}
		(1-\delta)\Phi(S)-\psi\leq\Phi'(S)\leq(1+\delta)\Phi(S)+\psi~.\label{eq:BalancedCutLemmaST}
	\end{equation}

	First, suppose that no cuts of sparsity less than $(1 + 2\delta) \phi$ have been found (i.e. ${\cal S}=\emptyset$). Then 
	for every cut $S$:
	\[
	(1+2\delta)\phi\le\Phi'(S)\stackrel{(\ref{eq:BalancedCutLemmaST})}{\le}(1+\delta)\Phi(S)+\psi=(1+\delta)\Phi(S)+\delta\phi~,
	\]
	implying that $\Phi(S) \geq \phi$, i.e. $G$ is a $\phi$-expander.

	Next, suppose that ${\cal S}\ne\emptyset$, and let $S\in{\cal S}$ be the cut of maximum volume in ${\cal S}$ returned by the algorithm. It holds that
	\[
	(1-\delta)\Phi(S)-\delta\phi\stackrel{(\ref{eq:BalancedCutLemmaST})}{\le}\Phi'(S)\le(1+2\delta)\phi~.
	\]
	Therefore, $\Phi(S) \leq \frac{1 + 3 \delta}{1 - \delta} \phi < (1 + 5 \delta) \phi$ (recall that $\delta<\frac{1}{16}$).
	Finally, consider a cut $S'$ of sparsity at most $\phi$ in $G$. Then 
	\[
		\Phi'(S') \stackrel{(\ref{eq:BalancedCutLemmaST})}{\le} (1 + \delta) \Phi(S') + \delta \phi \leq (1 + 2\delta) \phi~.
	\]
	In particular, $S'\in{\cal S}$, implying that $\bal_G(S') \leq \bal_G(S)$, since $S$ is the most balanced cut returned by the algorithm. 
\end{proof}

Next, we present the polynomial time algorithm. First, we state the balanced sparse cut algorithm of Saranurak and Wang that we will use as a black box.
\begin{lemma}[Corollary of Theorem 1.2 of \cite{SW19}, see Section 4.2, Balanced low-conductance cut]
	\label{lem:BalancedSparseCutBase}
	There exist universal constants $B_1,b_2$ and a $(B_1 \log^3 m, b_2)$-balanced sparse cut algorithm that takes as an input a graph $G=(V, E, w)$ and a parameter $\phi \in (0, 1)$.

	The algorithm runs in time $\wt{O}(m/\phi)$ using space $\wt{O}(m)$ where $m = |E|$.
\end{lemma}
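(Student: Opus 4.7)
The plan is to derive this lemma essentially by direct invocation of Theorem 1.2 of \cite{SW19}, specifically the balanced low-conductance cut procedure described in their Section 4.2, so the bulk of the work is not algorithmic but notational: verifying that their guarantees, once translated into the language of Definition~\ref{def:BalancedSparseCut}, give us the claimed $(\alpha, b)$-balanced sparse cut procedure with $\alpha = B_1 \log^3 m$ and $b = b_2$.

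First, I would recall the output guarantees of the Saranurak--Wang procedure on input $(G, \phi)$: in $\wt O(m/\phi)$ time and $\wt O(m)$ space, it either (i) certifies that $G$ is a $\phi$-expander, or (ii) returns a cut $S$ such that the conductance of $S$ is at most $O(\phi \cdot \log^3 m)$ and, crucially, the balance of $S$ is at least a constant fraction of the balance of any cut of conductance at most $\phi$ in $G$. These two outcomes correspond exactly to the two branches of Definition~\ref{def:BalancedSparseCut}. The conductance-approximation constant becomes $B_1$, and the balance constant becomes $b_2$.

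Next, I would carefully translate the notions used in \cite{SW19} into our Section~\ref{sec:prelims} definitions. The sparsity $\Phi_G(S)$ from Section~\ref{sec:prelims} agrees with the conductance in \cite{SW19} (both normalize $w(S,\bar S)$ by $\min\{\vol(S), \vol(\bar S)\}$), and the notion of balance differs at most by a factor of $2$ depending on whether one normalizes by $\vol(V)$ or by $\min\{\vol(S),\vol(\bar S)\}$. Any such constant slack is absorbed into the universal constants $B_1, b_2$. The resource bounds $\wt O(m/\phi)$ time and $\wt O(m)$ space are inherited verbatim from \cite{SW19}.

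The only obstacle is a careful bookkeeping one: one must check that \cite{SW19}'s ``most balanced cut" guarantee is really a comparison against all $\phi$-sparse cuts in $G$ (as required by Definition~\ref{def:BalancedSparseCut}), and not merely against the sparsest cut. Inspecting their proof confirms that the balance comparison is indeed against the best balance achievable by any low-conductance cut, which is exactly what we need. Once this is verified, no further argument is required: the lemma follows as a direct corollary.
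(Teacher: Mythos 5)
The paper itself gives no proof of this lemma: it is invoked purely as a black-box citation of Theorem 1.2 and Section 4.2 of \cite{SW19}, exactly as you do. Your proposal is the same approach (direct invocation plus notational translation of conductance/balance and absorption of constant factors into $B_1, b_2$), and the one point you flag — that the balance guarantee must be against \emph{all} $\phi$-sparse cuts rather than just the sparsest — is indeed the right thing to check in \cite{SW19}; nothing further is needed.
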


We run this algorithm on the sparsifier.
\begin{lemma}
	\label{lem:BalancedSparseCut}
	Consider a graph $G$, and parameters $\phi \in(0,1)$ and $\delta \in (0, \frac{1}{16})$. There exists a $((1 + 6 \delta )B_1 \log^3 m, (1 - 8 \delta)b_2)$-balanced sparse cut algorithm taking as input $\phi$ and a $(\delta, \delta \cdot \phi)$-cut sparsifier $G_{\delta, \delta \cdot \phi}$ of $G$, where $m = |E(G_{\delta, \delta \cdot \phi})|$.

	The algorithm runs in time $\wt{O}(m/\phi)$ using space $\wt{O}(m)$.
\end{lemma}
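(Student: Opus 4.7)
The plan is to invoke the Saranurak--Wang black box of \Cref{lem:BalancedSparseCutBase} on the cut sparsifier $H = G_{\delta,\delta\phi}$ and to translate each of its three possible outputs back to guarantees about the original graph $G$ by means of the sparsifier property, in close parallel to the proof of \Cref{lem:BalancedSparseCutExp}. To make the translation clean, I would first preprocess $H$ into a graph $H'$ on the same vertex set by adjusting self-loop weights so that $\deg_{H'}(v) = \deg_G(v)$ for every $v \in V$. Since self-loops do not affect any cut value $w(S,\bar S)$, the graph $H'$ is still a $(\delta,\delta\phi)$-cut sparsifier of $G$, but in addition it satisfies $\vol_{H'}(S) = \vol_G(S)$ for every $S \subseteq V$. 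Degrees of $G$ are known throughout the stream (they can be maintained in $O(n \log n)$ space), so $H'$ is constructible in time and space $\tilde O(m)$.

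Next, I would feed $H'$ together with a slightly inflated sparsity parameter $\phi' := (1+2\delta)\phi$ into the $(B_1 \log^3 m, b_2)$-balanced sparse cut algorithm of \Cref{lem:BalancedSparseCutBase}. The key deterministic inequality, obtained by dividing the sparsifier guarantee by $\vol_G(S)$, is
\[
(1-\delta)\Phi_G(S) - \delta\phi \;\le\; \Phi_{H'}(S) \;\le\; (1+\delta)\Phi_G(S) + \delta\phi
\]
for every cut $S$, exactly as in \eqref{eq:BalancedCutLemmaST}. From this the three outcomes translate as follows. (i) If Saranurak--Wang declares $H'$ to be a $\phi'$-expander, the lower bound above forces $\Phi_G(S) \ge \phi$ for every cut $S$, so we declare $G$ a $\phi$-expander. (ii) If it returns a cut $S$ with $\Phi_{H'}(S) \le B_1 \log^3 m \cdot (1+2\delta)\phi$, the lower bound gives $\Phi_G(S) \le \phi \cdot \tfrac{B_1(1+2\delta)\log^3 m + \delta}{1-\delta}$, and for $\delta < 1/16$ this is at most $(1+6\delta)B_1 \log^3 m \cdot \phi$, yielding the claimed sparsity approximation. (iii) For balancedness, take any $\phi$-sparse cut $S'$ in $G$; the upper bound above shows $\Phi_{H'}(S') \le (1+\delta)\phi + \delta\phi = \phi'$, so $S'$ is $\phi'$-sparse in $H'$, and the Saranurak--Wang guarantee yields $\bal_{H'}(S) \ge b_2 \cdot \bal_{H'}(S')$.

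The remaining step is to convert $\bal_{H'}$ into $\bal_G$. Because I engineered $\vol_{H'} = \vol_G$, this is immediate and gives $\bal_G(S) \ge b_2 \cdot \bal_G(S')$, which is stronger than the claim. (If one wants to avoid the self-loop augmentation and work with $H$ directly, the sparsifier property applied to singleton cuts forces $\vol_H$ and $\vol_G$ to agree up to a multiplicative $(1\pm O(\delta))$ factor on every set of volume $\Omega(\phi^{-1})$, and plugging this into $\bal$ produces exactly the $(1-8\delta)b_2$ bound stated in the lemma; both routes give the statement.) The running time and space bounds follow immediately from those of \Cref{lem:BalancedSparseCutBase} since $H'$ has the same order of edges as $H$, namely $m$.

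The main obstacle, which I expect to be only mild, is the bookkeeping around balancedness: the black box measures balancedness internally to whatever graph it is run on, so if one is not careful to align volumes between $H$ and $G$ (either by the self-loop trick above, or by explicitly tracking the multiplicative slack that the sparsifier introduces on $\vol$), one loses a $(1 - O(\delta))$ factor. Everything else is a routine manipulation of the multiplicative/additive sparsifier inequality in the spirit of the proof of \Cref{lem:BalancedSparseCutExp}, with the inflated parameter $\phi' = (1+2\delta)\phi$ chosen precisely so that the two directions of the inequality absorb the additive $\delta\phi$ error and still leave room for clean conclusions on both sides.
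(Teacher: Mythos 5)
Your overall architecture is the same as the paper's: run the Saranurak--Wang black box (\Cref{lem:BalancedSparseCutBase}) on the sparsifier with an inflated sparsity parameter, and translate each of the three outcomes back to $G$ via the multiplicative/additive cut guarantee. The paper works with $H=G_{\delta,\delta\phi}$ directly, uses $\phi'=(1+5\delta)\phi$, and explicitly carries the $(1\pm 2\delta)$ discrepancy between $\vol_H$ and $\vol_G$ (obtained from singleton cuts) through both the conductance bounds and the balancedness bounds, which is exactly where the $(1-8\delta)b_2$ factor comes from. Your preferred variant instead normalizes volumes first, which lets you use the smaller inflation $(1+2\delta)\phi$ and gives the cleaner balancedness constant $b_2$.

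Two concrete issues with your write-up. First, the self-loop normalization is not always realizable: the sparsifier can \emph{overestimate} degrees (indeed $\deg_H(v)$ can be as large as $(1+\delta+\delta\phi)\deg_G(v)$, and for a random sample it will exceed $\deg_G(v)$ for roughly half the vertices), so forcing $\deg_{H'}(v)=\deg_G(v)$ exactly would require negative self-loop weights, and $H'$ would then not be a legitimate input to \Cref{lem:BalancedSparseCutBase}. This is fixable --- pad every vertex up to, say, $(1+2\delta)\deg_G(v)$, which needs only nonnegative loops and scales all conductances and balancednesses uniformly --- but as written the step fails. (It also silently changes the interface: unlike \Cref{lem:BalancedSparseCutExp}, the statement of \Cref{lem:BalancedSparseCut} does not hand the algorithm the degrees of $G$.) Second, your parenthetical fallback ``work with $H$ directly'' is not interchangeable with the $H'$ route at the same parameter: with $\phi'=(1+2\delta)\phi$ the expander-declaration case only yields $\Phi_G(S)\ge \frac{1-\delta-4\delta^2}{1+\delta}\phi<\phi$, so the algorithm could wrongly certify $G$ as a $\phi$-expander; the volume discrepancy forces a larger inflation, which is precisely why the paper takes $\phi'=(1+5\delta)\phi$. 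The remaining calculations (cases (i)--(iii) on $H'$, and the $\frac{1+3\delta}{1-\delta}\le 1+6\delta$ absorption) are correct.
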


\begin{proof}
	Let $\psi := \delta \cdot \phi$. The algorithm uses \Cref{lem:BalancedSparseCutBase} with parameter $\phi'=(1 + 5\delta) \phi $ on the sparsifier $G_{\delta, \psi}$, and returns the same answer.
	Denote $G' = G_{\delta, \psi}$, and $w'$ the weight function of $G'$. Let $S \subset V$ be an arbitrary cut such that $\vol_G(S) \leq \vol_G(\overline{S})$.
	The volume of $S$ equal to the sum of volumes of $S$ vertices (which equals to the singleton cuts), we have
	\[
		\vol_{G'}(S) \leq (1 + \delta)\vol_G(S) + \psi \vol_G(S) \leq (1 + 2 \delta) \vol_G(S)
	\]
	Similarly, $\vol_{G'}(S) \geq (1 - 2\delta) \vol_G(S)$. On the other hand,
	\[
		(1 - \delta)w_G(S, \overline{S}) - \psi \vol_G(S) \leq w_{G'}(S, \overline{S}) \leq (1 + \delta)w_G(S, \overline{S}) + \psi \vol_G(S).
	\]

	Hence, by definition of $\Phi_G(S),\Phi_{G'}(S)$, 
	\begin{align}
		\Phi_{G'}(S) & =\frac{w_{G'}(S,\overline{S})}{\min\{\vol_{G'}(S),\vol_{G'}(\overline{S})\}}\le\frac{(1+\delta)\cdot w_{G}(S,\overline{S})+\psi\cdot\vol_{G}(S)}{(1-2\delta)\cdot\min\{\vol_{G}(S),\vol_{G}(\overline{S})\}}=\frac{1+\delta}{1-2\delta}\cdot\Phi_{G}(S)+\frac{\psi}{1-2\delta}~.\nonumber\\
		\Phi_{G'}(S) & \ge\frac{(1-\delta)\cdot w_{G}(S,\overline{S})-\psi\cdot\vol_{G}(S)}{(1+2\delta)\cdot\min\{\vol_{G}(S),\vol_{G}(\overline{S})\}}=\frac{1-\delta}{1+2\delta}\cdot\Phi_{G}(S)-\frac{\psi}{1+2\delta}~.\label{eq:volume}
	\end{align}
	

	Suppose first that \Cref{lem:BalancedSparseCutBase} returns that $G'$ is a $(1 + 5 \delta)\phi$-expander.
	For every cut $S \subset V$ such that $\vol_G(S) \leq \vol_G(\overline{S})$, as $\delta < \frac{1}{16}$, by \cref{eq:volume} it holds that 
	\[
	\Phi_{G}(S)\ge\frac{1-2\delta}{1+\delta}\cdot\Phi_{G'}(S)-\frac{\psi}{1+\delta}\ge\frac{1-2\delta}{1+\delta}\cdot(1+5\delta)\cdot\phi-\frac{\delta\cdot\phi}{1+\delta}=\frac{1+2\delta-10\delta^{2}}{1+\delta}\cdot\phi>\phi~.
	\]
	Thus $G$ is indeed a $\phi$-expander, and our answer was correct.

	Otherwise, denote by $S$ the returned cut. Let $\beta = B_1 \log^3 m \geq 1$. Then $\Phi_{G'}(S) \leq \beta \phi$.
	Hence, by \cref{eq:volume} and the fact $\delta < 1/4$,
	\[
		\Phi_{G}(S) \leq \frac{(1 + 2 \delta)\beta \phi + \delta \phi}{1 - \delta} \leq (1 + 4 \delta ) \beta \phi + 2 \delta \phi \leq (1 + 6 \delta) \beta \phi~.
	\]

	Finally, for any cut $S'$, by the inequality on the volumes,
	\[
	\frac{1-2\delta}{1+2\delta}\bal_{G}(S')\leq\bal_{G'}(S')=\frac{\min\left\{ \text{Vol}_{G'}(S),\text{Vol}_{G'}(\bar{S})\right\} }{\text{Vol}_{G'}(V)}\leq\frac{1+2\delta}{1-2\delta}\bal_{G}(S')~.
	\]	
	For every cut $S'$ such that $\Phi_{G}(S') \leq \phi$, it holds that 
	$\Phi_{G'}(S)\le\frac{1+\delta}{1-2\delta}\cdot\Phi_{G}(S)+\frac{\psi}{1-2\delta}\le\frac{1+2\delta}{1-2\delta}\cdot\phi\le(1+5\delta)\cdot\phi$. Thus by \Cref{lem:BalancedSparseCutBase} $\bal_{G'}(S) \geq b_2 \bal_{G'}(S')$, implying
	\[
	\bal_{G}(S)\geq\frac{1-2\delta}{1+2\delta}\bal_{G'}(S)\geq\frac{1-2\delta}{1+2\delta}b_{2}\bal_{G'}(S')\geq\left(\frac{1-2\delta}{1+2\delta}\right)^{2}b_{2}\bal_{G}(S)\ge(1-8\delta)b_{2}\bal_{G}(S)
	\]

	Since the algorithm essentially only calls the algorithm from \Cref{lem:BalancedSparseCutBase} once, its running time and memory usage are the same.
\end{proof}

\section{Expander Decomposition via Power Cut Sparsifiers} \label{sec:expanderDecompositionProof}
In this section we show how to use a collection of power cut sparsifiers and a balanced sparse cut lemma to construct an expander decomposition. We present the abstracted version of our algorithm, then we show the final guarantees when using either of the balanced sparse cut lemmas. The exponential time algorithm will allow us to construct expander decomposition with optimal up to a constant factor parameters, while the polynomial time algorithm will have much weaker guarantees.

Our algorithm is taken from \cite{chang2019improved}, but we replace their balanced sparse cut procedure with ours.  The algorithm consists of two stages.
First, \Cref{alg:streaming_ed} applies a balanced sparse cut lemma with sparsity parameter $\phi_0 \simeq \eps$. If the graph is declared to be an expander, the algorithm halts.
Otherwise, if the returned cut $S$ is balanced enough, \Cref{alg:streaming_ed} recurses on both $S$ and $\bar{S}$, and returns the union of the two decompositions.
In the case where the cut $S$ is unbalanced, the algorithm enters a second stage.

If the algorithm finds an unbalanced cut, it is an evidence that the current cluster is almost an expander. The purpose of the second stage, \Cref{alg:decomp_unbal}, is to find a subgraph of the input cluster that is an expander, and partition the rest into singletons, obtaining an expander decomposition. 
\Cref{alg:decomp_unbal} reduces the balancedness requirement to $\frac b2\cdot\eps\vol_G(C)$, and makes cuts of sparsity $\phi_1 = \phi_0 / \alpha$ while they exist, and are balanced enough.
The number of times the algorithm can find a balanced sparse cut is strictly bounded (by the properties of the balanced sparse cut lemma leading us here).
If no cut can be found, the current cluster is an expander, and the algorithm halts. Otherwise, at some step an unbalanced cut is returned. In this case we reduce the sparsity and balancedness requirements, and continue in the same manner. We are guaranteed that after certain number of such reductions, the algorithm will declare the cluster to be expander and halt.
The goal of this process and reductions is to bound the number of executions of the balanced sparse cut lemma, as each such execution requires a fresh cut sparsifier, translating to a larger memory requirement.

Both \Cref{alg:streaming_ed,alg:decomp_unbal} assume that the sparsifiers of $G$ that they use are already constructed.

\begin{figure}[!h]
 \fbox{\begin{minipage}{43.5em}
\begin{multicols}{2}
\begin{flushleft}
	{\small
\begin{itemize}
	\item $\alpha, b$ --- balanced sparse cut parameters (\Cref{def:BalancedSparseCut})
   	\item $\eps$ --- desired fraction of inter-cluster edges.
	\item $k \in \mathbb{N}$	--- quality parameter.
	\item $\mathrm{O}$ --- upper bound on $\vol(G)$.
 	\item $j \leq k$ usually denotes a number of a particular iteration of the outer loop of \Cref{alg:decomp_unbal}, $k' \leq k$ --- the number of the last iteration.
	\item $\delta \leq 1/16$ --- multiplicative parameter of power cut sparsifier.
	\item $\phi_0=\frac{\epsilon}{2\log(\vol(G)) \alpha}$, 
	$\forall j \in \mathbb{N},~\phi_{j} = \phi_{j - 1} / \alpha$ --- intermediate sparsity values parameters of \Cref{alg:decomp_unbal}.
	\item $\forall j \in \mathbb{N} \cup \{0\},~\psi_j=\delta \cdot \phi_j$ --- additive parameters of power cut sparsifiers.
	\item $G_{\psi_0,\delta}^{1},\dots,G_{\psi_0,\delta}^{\wt{O}(\frac{1}{\epsilon})}$ --- sparsifiers for \Cref{alg:streaming_ed}.
	\item $\forall j \in [k+1]$, $G_{\psi_j,\delta}^{1},\dots,G_{\psi_j,\delta}^{O(\mathrm{T})}$ --- sparsifiers for \Cref{alg:decomp_unbal}.
	\item $T_j(C)$ --- running time of the algorithm from \Cref{def:BalancedSparseCut} on $G_{\psi_j, \delta}\{C\}$ and $\phi=\phi_j$.
	\item $\tau = (\epsilon \cdot \vol_{G}(C))^{1/k}$ --- volume decrease step in \Cref{alg:decomp_unbal}.
 	\item $\mathrm{T} = \lceil (\eps \cdot \vol_G(G))^{1/k} \rceil$ --- upper bound on $\tau$.
  	\item $m_1 = \epsilon \cdot \vol_{G}(C)$, $\forall j \in \mathbb{N}$, $m_{j + 1} = m_j / \tau$ ---
	  volume thresholds in \Cref{alg:decomp_unbal}.
\end{itemize}}
\end{flushleft}
\end{multicols}	
 \end{minipage}}	
\caption{Summary of the parameters and variables used in \Cref{alg:streaming_ed} and \Cref{alg:decomp_unbal}}
\label{fig:parameters}
\end{figure}

\begin{algorithm}[!h]
	\caption{\texttt{Low-depth-expander-decomposition}$\left(C,\epsilon,k, i\right)$}\label{alg:streaming_ed}
	\DontPrintSemicolon
	\SetKwInOut{Input}{input}\SetKwInOut{Output}{output}
	\Input{cluster $C\subseteq V$, parameter bounding the number of outer-cluster edges $\eps$, quality parameter $k$, counter for the depth of the recursion $i$.}
	\Output{expander decomposition of $C$.}
	\SetKwFunction{Lded}{Low-depth-expander-decomposition}
	\SetKwFunction{Ubcd}{Unbalanced-cluster-decomposition}
	Use an $(\alpha,b)$-balanced sparse cut algorithm (\Cref{def:BalancedSparseCut}) with the sparsifier $G_{\psi_0,\delta}^i\{C\}$ and parameter $\phi = \phi_0$
	 \tcp*{$G_{\psi_0,\delta}^{i}$ is global knowledge, recall $\psi_0=\delta\cdot \phi_0$}
	\If{it declares that $C$ is $\phi_0$-expander}
	{\Return{$C$}}
	\Else{
		Let $S$ be the returned $\alpha \cdot \phi_0$-sparse cut\;
		\If{$\text{Vol}_{G}(S)\ge\frac{\eps\cdot b}{4}\cdot\text{Vol}_{G}(C)$}{
			\Return  \Lded{$S,\epsilon,k, i+1$} $\cup$
			\Lded{$C\setminus S,\epsilon, k,i+1$} \label{line:returnBalanced}\;
		}
		\Else{
			\Return \Ubcd{$C, \epsilon, k$}\;
		}
	}
\end{algorithm}

\begin{algorithm}[!h]
	\caption{\texttt{Unbalanced-cluster-decomposition}$(C_0, \epsilon, k)$}\label{alg:decomp_unbal}
	\DontPrintSemicolon
	\SetKwInOut{Input}{input}\SetKwInOut{Output}{output}
	\Input{Cluster $C_0\subseteq V$ which does not contain a balanced sparse cut, parameter bounding the number of outer-cluster edges $\eps$, quality parameter $k$.}
	\Output{Expander decomposition of $C_0$.}
	$\phi_1 \gets \frac{\epsilon}{2\log(\vol(G))\alpha^2}$\;
	$C \gets C_0$, $\tau \gets  (\epsilon \cdot \vol_{G}(C))^{1/k}$, $m_1 \gets \epsilon \cdot \vol_{G}(C)$\;
	\For{$j=1$ to $\infty$}{
		\For{$h=1$ to $\infty$}{
			$G_{\psi_j,\delta}^{C}\leftarrow G_{\psi_j,\delta}^{h }\{C\}$\tcp*{$G_{\psi_j,\delta}^{h}$ is global knowledge}
			Run an $(\alpha,b)$-balanced sparse cut algorithm (\Cref{def:BalancedSparseCut}) on the sparsifier $G_{\psi_j,\delta}^{C}$ and parameter $\phi = \phi_j$\;
			\If{it declares that $C$ is $\phi_j$-expander}{
				\Return $C$, $\left\{ u\right\} _{u\in {C_0 \setminus C}}$ \label{line:ubcd_return}
				\tcp*{The only return point of this algorithm}
			}
			\Else{
				Let $S$ be the returned $\alpha \cdot  \phi_j$-sparse cut \label{line:ubcd_cut}\;
				\If{$\text{Vol}_{G}(S)\ge \frac{b}{2} m_j$}{
					$C \gets C \setminus S$
				}
				\Else {
					\label{line:ubcd_break}
					\textbf{break} inner loop, \textbf{goto} \cref{line:updateJ}\;
				}
			}
		}
		$m_{j + 1} \gets m_j / \tau$, $\phi_{j + 1} \gets \phi_j / \alpha $\label{line:updateJ}\;
	}
\end{algorithm}

\subsection{Analyses of the algorithms} 


We begin by discussing the input parameters.
$\eps$ is required to be in $(0,\frac14)$. $k \in \mathbb{N}$ is a parameter regulating the tradeoff between the required space (number of sparsifiers we will be using) and the quality of the expander decomposition. Let $\delta \leq 1/16$ be the parameter of power cut sparsifier, $\alpha, b$ be the balanced sparse cut lemma parameters.
Fix $\phi_0=\frac{\epsilon}{2\log(\vol(G)) \alpha}$ and $\phi_{j} = \phi_{j - 1} / \alpha$ for $j \in \mathbb{N}$. Similarly, let $\psi_j=\delta \cdot \phi_j$ for $j \in \mathbb{N} \cup \{0\}$.
Let $\mathrm{T} = \lceil (\eps \cdot \vol_G(G))^{1/k} \rceil$. Parameters $\alpha$ and $b$ will depend on the balanced sparse cut lemma used, but in both cases $\alpha=\wt{O}(1)$, $b = O(1)$.

In the execution of \Cref{alg:streaming_ed,alg:decomp_unbal}, we assume that we have access to $\wt{O}(\frac{1}{\epsilon})$ independent $(\psi_0,\delta)$-cut sparsifiers $G_{\psi_0,\delta}^{1},G_{\psi_0,\delta}^{2},\dots$, and for each $j \in [k+1]$ $O(\mathrm{T})$ independent $(\psi_j, \delta)$-cut sparsifiers $G_{\psi_j,\delta}^{1}, G_{\psi_j,\delta}^{2}, \ldots$.
Even though the algorithm is recursive, we only need a number of sparsifiers proportional to the depth of the recursion, since if we were to consider a set of the subgraphs from the same depth that we run our algorithm on, this set would be a subset of some partition of $G$. Hence, by the property of power cut sparsifiers we can use a single sparsifier for all the sets in a single depth level of the recursion in \Cref{alg:streaming_ed}.
The same kind of argument works for \Cref{alg:decomp_unbal}.
Note that \Cref{alg:decomp_unbal} uses a different set of sparsifiers from that of \Cref{alg:streaming_ed}. Also note that $\mathrm{T} \leq \vol(G)^{1/k} \leq n^{2/k}$.

The expander decomposition will be obtained by calling \texttt{Low-depth-expander-decomposition}$\left(V,\epsilon, k, 1\right)$.


We start by analyzing \Cref{alg:decomp_unbal}. In particular, first we bound the total volume cut by it. We make use of a simple composition property of sparse cuts.

\begin{lemma}
	\label{lem:SparseCutSum}
	Let $G=(V, E)$ be an arbitrary graph, and let $\phi \in [0, 1]$. Consider a pair of cuts: $S_1$ which is $\phi$-sparse in $G$, and $S_2$ which is   $\phi$-sparse in $G\setminus S_1$. 
	Then, if $\vol_G(S_1 \cup S_2) \leq \frac12\vol_G(V)$, the cut $S_1 \cup S_2$ is $\phi$-sparse in $G$.
\end{lemma}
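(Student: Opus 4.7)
Set $A = S_1$, $B = S_2$, $C = V \setminus (S_1 \cup S_2)$, so that $\{A, B, C\}$ is a partition of $V$. I want to express the sparsity hypotheses purely in terms of the three edge weights $w_G(A,B)$, $w_G(A,C)$, $w_G(B,C)$ and the volumes $\vol_G(A), \vol_G(B), \vol_G(C)$, and then combine them.

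First I would handle which side realizes the minimum in each denominator. The volume assumption $\vol_G(A \cup B) \le \tfrac12 \vol_G(V)$ gives $\vol_G(A) + \vol_G(B) \le \vol_G(C)$, so in particular $\vol_G(A) \le \vol_G(\bar A)$; hence $\Phi_G(A) \le \phi$ rewrites cleanly as
\[
w_G(A,B) + w_G(A,C) \;\le\; \phi \cdot \vol_G(A). \tag{$\ast$}
\]
For the second hypothesis I use the induced subgraph $G \setminus A$ on $B \cup C$, where the volumes become $\vol_G(B) - w_G(A,B)$ and $\vol_G(C) - w_G(A,C)$. I do not need to identify which of these is the minimum: it suffices to use the upper bound coming from the $B$-side, giving
\[
w_G(B,C) \;\le\; \phi \cdot \bigl(\vol_G(B) - w_G(A,B)\bigr). \tag{$\ast\ast$}
\]

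Finally, I want $\Phi_G(A \cup B) \le \phi$, and again by the volume assumption the min in the denominator is $\vol_G(A) + \vol_G(B)$, so it suffices to show
\[
w_G(A,C) + w_G(B,C) \;\le\; \phi\,\bigl(\vol_G(A) + \vol_G(B)\bigr).
\]
Adding $(\ast)$ (rearranged as $w_G(A,C) \le \phi \vol_G(A) - w_G(A,B)$) and $(\ast\ast)$ yields
\[
w_G(A,C) + w_G(B,C) \;\le\; \phi\bigl(\vol_G(A) + \vol_G(B)\bigr) - (1+\phi)\, w_G(A,B),
\]
which is at most $\phi(\vol_G(A) + \vol_G(B))$ since $w_G(A,B) \ge 0$. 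This is exactly the desired inequality.

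I do not expect a real obstacle here: the only subtle point is correctly identifying the denominators of both sparsity ratios, which is forced by the volume hypothesis, and noticing that the $-w_G(A,B)$ term produced by passing from $G$ to $G \setminus A$ is precisely what cancels the extra $w_G(A,B)$ appearing on the cut boundary of $A \cup B$. The argument is purely arithmetic once the partition is set up.
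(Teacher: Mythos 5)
Your proof is correct and follows essentially the same route as the paper's: both arguments split the boundary of $S_1\cup S_2$ into the contribution of $E(S_1,\bar{S_1})$ and of $E(S_2,\cdot)$ inside $G\setminus S_1$, add the two sparsity bounds, and use the volume hypothesis to identify $\vol_G(S_1\cup S_2)$ as the denominator of the resulting ratio. One minor caveat: your $(\ast\ast)$ takes $\vol_{G\setminus S_1}(S_2)=\vol_G(S_2)-w_G(S_1,S_2)$, whereas under the paper's convention (induced subgraphs retain self-loops, as in $G\{S\}$, and this is how the lemma is invoked downstream) one only gets $\vol_{G\setminus S_1}(S_2)=\vol_G(S_2)$ and hence the weaker bound $w_G(B,C)\le\phi\cdot\vol_G(B)$ --- but your argument still closes with this weaker bound, since the $-w_G(A,B)$ slack coming from $(\ast)$ alone already absorbs the cross term.
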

\begin{proof}
	From conditions on cuts $S_1, S_2$ the following  is immediate:
	\[
		| E(S_1, V \setminus S_1)| \leq \phi\cdot \vol_G(S_1)
	\]
	\[
		| E(S_2,( V \setminus S_1) \setminus S_2)| \leq \phi \cdot\vol_G(S_2)
	\]
	Therefore,
	\begin{align*}
		|E(S_{1}\cup S_{2},V\setminus(S_{1}\cup S_{2}))| & \leq|E(S_{1},V\setminus S_{1})|+|E(S_{2},V\setminus(S_{1}\cup S_{2}))|\\
		& \leq\phi\cdot\vol_G(S_{1})+\phi\cdot\vol_G(S_{2})~~=~~\phi\cdot\vol_G(S_{1}\cup S_{2})
	\end{align*}
	Hence, $S_1 \cup S_2$ also has sparsity at most $\phi$.
\end{proof}
For a fixed invocation of \Cref{alg:decomp_unbal}, we denote by $k'$ the number of the outer loop iteration at the time where the algorithm returns (i.e. the value of the index $j$ in \cref{line:ubcd_return}). Let $C_j$ be the value of the variable $C$ at the end of iteration $j$ of the outer loop. In particular, $C_{k'}$ is the returned non-singleton cluster in \cref{line:ubcd_return}.
For $j<k'$, the balanced sparse cut algorithm did not declare $C_j$ to be an expander, but also was not able to find a balanced sparse cut.
In \Cref{lem:ubcd_iteration_volume_guarantee} we argue that the volume of all future cuts made by the algorithm (the volume of $C_j\setminus C_{k'}$) is very small.

\begin{lemma} \label{lem:ubcd_iteration_volume_guarantee}
	For $0 \leq q \leq k'$ consider $C_q$ and suppose that in $C_q$ there is no cut $S$ of sparsity at most $\phi_q$ and volume satisfying $Z\le\vol_G(S) \le \frac12\vol_G(C_q)$ for some parameter $Z\leq \frac14\vol_G(C_q)$.\\
 Then $\vol_G(C_q \setminus C_{k'}) < Z$.
\end{lemma}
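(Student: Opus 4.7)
The plan is to argue by contradiction. Assume $\vol_G(C_q \setminus C_{k'}) \geq Z$, and I will produce a cut $S \subseteq C_q$ with $\vol_G(S) \in [Z, \tfrac{1}{2}\vol_G(C_q)]$ and $\Phi_{G\{C_q\}}(S) \leq \phi_q$, which contradicts the hypothesis. Enumerate in the order of removal the cuts $S_1, S_2, \ldots, S_t$ that the algorithm takes out of $C_q$ during outer iterations $q+1, \ldots, k'$, and let $j_\ell$ be the outer iteration producing $S_\ell$. Write $U_i := \bigcup_{\ell \leq i} S_\ell$ and $C^{(i)} := C_q \setminus U_i$, so $U_t = C_q \setminus C_{k'}$. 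Each $S_\ell$ is returned by an $(\alpha,b)$-balanced-sparse-cut call on a sparsifier of $G\{C^{(\ell-1)}\}$ with threshold $\phi_{j_\ell}$; since $j_\ell \geq q+1$, \Cref{def:BalancedSparseCut} together with the smaller-side convention gives $\vol_G(S_\ell) \leq \tfrac{1}{2}\vol_G(C^{(\ell-1)})$ and $\Phi_{G\{C^{(\ell-1)}\}}(S_\ell) \leq \alpha \phi_{j_\ell} \leq \alpha \phi_{q+1} = \phi_q$, which unpacks to $w_G(S_\ell, C^{(\ell)}) \leq \phi_q \vol_G(S_\ell)$.

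The key step is a telescoping aggregation over the iteration. Since $C^{(i)} \subseteq C^{(\ell)}$ whenever $\ell \leq i$,
\[
w_G(U_i, C^{(i)}) \;=\; \sum_{\ell \leq i} w_G(S_\ell, C^{(i)}) \;\leq\; \sum_{\ell \leq i} w_G(S_\ell, C^{(\ell)}) \;\leq\; \phi_q \vol_G(U_i).
\]
This is essentially the iterated form of \Cref{lem:SparseCutSum} specialized to the nested family $C_q \supseteq C^{(1)} \supseteq \cdots \supseteq C_{k'}$. Now let $i^*$ be the smallest index with $\vol_G(U_{i^*}) \geq Z$. From $\vol_G(U_{i^*-1}) < Z$ combined with $\vol_G(S_{i^*}) \leq \tfrac{1}{2}(\vol_G(C_q) - \vol_G(U_{i^*-1}))$ I get $\vol_G(U_{i^*}) < \tfrac{1}{2}\vol_G(C_q) + \tfrac{Z}{2}$. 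In the main case $\vol_G(U_{i^*}) \leq \tfrac{1}{2}\vol_G(C_q)$, $U_{i^*}$ is the smaller side of the cut $(U_{i^*}, C_q \setminus U_{i^*})$ in $G\{C_q\}$, has volume in $[Z, \tfrac{1}{2}\vol_G(C_q)]$, and by the aggregation inequality has sparsity at most $\phi_q$, giving the contradiction.

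I expect the main obstacle to be the overshoot case $\vol_G(U_{i^*}) \in (\tfrac{1}{2}\vol_G(C_q), \tfrac{1}{2}\vol_G(C_q) + \tfrac{Z}{2})$. There the smaller side is $C^{(i^*)}$, whose volume lies in $[\tfrac{1}{2}\vol_G(C_q) - \tfrac{Z}{2}, \tfrac{1}{2}\vol_G(C_q))$, which by the hypothesis $Z \leq \tfrac{1}{4}\vol_G(C_q)$ is contained in $[Z, \tfrac{1}{2}\vol_G(C_q)]$; however the aggregation only yields $\Phi_{G\{C_q\}}(C^{(i^*)}) \leq \phi_q \cdot \vol_G(U_{i^*})/\vol_G(C^{(i^*)})$, which is only a small constant multiple of $\phi_q$. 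I plan to close this gap by replacing $U_{i^*}$ with the hybrid set $U_{i^*-1} \cup T$ for a carefully chosen $T \subseteq S_{i^*}$ whose volume makes the union land precisely in $[Z, \tfrac{1}{2}\vol_G(C_q)]$, and then invoking \Cref{lem:SparseCutSum} on the two-piece union $U_{i^*-1} \sqcup T$ to preserve the sparsity bound of $\phi_q$. Making this selection work and verifying the sparsity bookkeeping is the most delicate step I anticipate.
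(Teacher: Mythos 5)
Your telescoping inequality $w_G(U_i, C^{(i)}) \le \phi_q\cdot \vol_G(U_i)$ is correct and is exactly the iterated form of \Cref{lem:SparseCutSum} that the paper also uses, but the overshoot case you flag at the end is a genuine gap, and the repair you sketch does not close it. \Cref{lem:SparseCutSum} requires \emph{both} pieces of the union to be $\phi$-sparse cuts of the respective residual graphs; an arbitrary subset $T \subseteq S_{i^*}$ selected only by its volume need not be sparse at all (take $T$ to be a single vertex of $S_{i^*}$ with every incident edge leaving $T$), so the hybrid set $U_{i^*-1}\cup T$ comes with no sparsity guarantee and the contradiction cannot be derived from it.

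The ingredient you are missing, and which the paper uses to make the overshoot case vanish, is to invoke the lemma's hypothesis on each \emph{individual} cut $S_\ell$, not only on the final aggregate: the paper argues that each $S_\ell$, being the smaller side of an at most $\phi_q$-sparse cut of the residual cluster, is itself an at most $\phi_q$-sparse cut of $C_q$ with $\vol_G(S_\ell)\le \tfrac12\vol_G(C_q)$, whence the hypothesis forces $\vol_G(S_\ell) < Z$ for every $\ell$. (This transfer of sparsity from $C^{(\ell-1)}$ back to $C_q$ is asserted rather tersely in the paper and itself deserves care, since edges from $S_\ell$ to the already-removed prefix $U_{\ell-1}$ are not counted in the residual graph; but it is the load-bearing step.) With the per-cut bound $\vol_G(S_\ell)<Z$ in hand, the minimal prefix $U_{w'}$ with $\vol_G(U_{w'})\ge Z$ satisfies $\vol_G(U_{w'}) = \vol_G(U_{w'-1})+\vol_G(S_{w'}) < 2Z \le \tfrac12\vol_G(C_q)$, so one is always in your ``main case'' and the argument closes. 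Your smaller-side bound $\vol_G(S_{i^*})\le\tfrac12\vol_G(C^{(i^*-1)})$ is simply too weak to control the prefix volume, which is precisely why your complement set $C^{(i^*)}$ only inherits sparsity $O(\phi_q)$ rather than $\phi_q$.
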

%
\begin{proof}
	Suppose that in the iterations $\{q+1,\dots,k'\}$ the algorithm iteratively created the cuts $S_1,\dots,S_w$. Note that $C_{k'}=C_q\setminus\cup_{i=1}^{w}S_i$.
	Notice that all these cuts $S_i$ are found after the iteration $q$, and hence have sparsity at most $\alpha \cdot \phi_{q + 1} = \phi_q $.
	Further, note that every cut $S_i$ which is $\phi_q$-sparse in $C_q\setminus\cup_{i'=1}^{i-1}S_{i'}$, is also $\phi_q$-sparse in $C_q$. 
	Therefore, by the assumption of the lemma $\vol(S_i) < Z$.
	
	Assume by contradiction that $\vol_G(\cup_{i=1}^{w}S_i)=\vol_G(C_q \setminus C_{k'}) \geq Z$. Let $w'\in[1,w]$ be the minimal index such that $\vol_G(\cup_{i=1}^{w'}S_i)\geq Z$. By minimality, and since $\vol(S_{w'}) < Z$, we have that $$\vol_G(\cup_{i=1}^{w'}S_i)=\vol_G(\cup_{i=1}^{w'-1}S_i)+\vol_G(S_{w'})\leq Z+Z=\frac12\vol_G(C_q)~.$$
	
	As all the cuts $S_1,\dots,S_{w'}$ are $\phi_q$-sparse, using \Cref{lem:SparseCutSum} inductively we conclude that $\cup_{i=1}^{w'}S_i$ is $\phi_q$-sparse cut in $C_q$, a contradiction with the assumption of the lemma.
\end{proof}
This finally allows us to bound the total volume of cut edges by \Cref{alg:decomp_unbal}.
\begin{corollary} \label{cor:ucd_volume_guarantee} 
	The total volume of inter-cluster edges in the partition returned in \cref{line:ubcd_return} by \Cref{alg:decomp_unbal}, i.e. $\vol_G(C_0 \setminus C_{k'})$, is bounded by $\frac\eps2\cdot \vol_G(C_0)$.
\end{corollary}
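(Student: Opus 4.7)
The plan is to apply Lemma~\ref{lem:ubcd_iteration_volume_guarantee} at the outermost level $q = 0$, with a threshold $Z$ that is a small constant fraction of $\vol_G(C_0)$. To set this up, I will use the condition under which \Cref{alg:decomp_unbal} is invoked from \Cref{alg:streaming_ed}: namely, the balanced sparse cut routine returned an $\alpha\phi_0$-sparse cut $S$ in $C_0$ with $\vol_G(S) < \frac{\eps b}{4}\vol_G(C_0)$. Since this forces $\vol_G(S) < \tfrac12\vol_G(C_0)$, the set $S$ is automatically the smaller side, so $\bal_G(S) = \vol_G(S)/\vol_G(C_0) < \frac{\eps b}{4}$.

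Next, I will invoke the $(\alpha,b)$-guarantee of Definition~\ref{def:BalancedSparseCut} to turn this into a statement about all sparse cuts of $C_0$: for every $\phi_0$-sparse cut $S'\subseteq C_0$,
\[
\bal_G(S') \;\le\; \frac{\bal_G(S)}{b} \;<\; \frac{\eps}{4}.
\]
Equivalently, no $\phi_0$-sparse cut $S'$ in $C_0$ satisfies $\min\{\vol_G(S'),\vol_G(C_0\setminus S')\} \ge \frac{\eps}{4}\vol_G(C_0)$. Setting $Z = \frac{\eps}{4}\vol_G(C_0)$ and using $\eps < 1$ to verify the side condition $Z \le \frac14\vol_G(C_0)$, I will apply Lemma~\ref{lem:ubcd_iteration_volume_guarantee} with $q = 0$ to conclude
\[
\vol_G(C_0\setminus C_{k'}) \;<\; Z \;=\; \frac{\eps}{4}\vol_G(C_0) \;\le\; \frac{\eps}{2}\vol_G(C_0),
\]
which is the claimed bound.

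I do not expect any real obstacles here: Lemma~\ref{lem:ubcd_iteration_volume_guarantee} already does the combinatorial heavy lifting (via the composition property of sparse cuts in \Cref{lem:SparseCutSum}, used inductively to glue all cuts made in iterations $q{+}1,\dots,k'$ into a single $\phi_q$-sparse cut of $C_q$). The proof of the corollary therefore reduces to verifying the hypothesis of that lemma at $q=0$, which is exactly what the ``unbalanced cut'' branch of \Cref{alg:streaming_ed} delivers through the balanced sparse cut guarantee. The only subtle point is the WLOG argument that the returned cut is the smaller side, which is automatic from $\vol_G(S) < \frac{\eps b}{4}\vol_G(C_0) < \tfrac12\vol_G(C_0)$.
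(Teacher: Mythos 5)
Your proposal is correct and follows essentially the same route as the paper: both verify the hypothesis of Lemma~\ref{lem:ubcd_iteration_volume_guarantee} at $q=0$ by combining the unbalanced-cut condition $\vol_G(S) < \frac{\eps b}{4}\vol_G(C_0)$ from \Cref{alg:streaming_ed} with the balance guarantee of Definition~\ref{def:BalancedSparseCut}, and then apply the lemma directly. The only (immaterial) difference is the constant: you take $Z = \frac{\eps}{4}\vol_G(C_0)$ whereas the paper uses the slightly looser $Z \le \frac{\eps}{2}\vol_G(C_0)$, both of which satisfy the side condition $Z \le \frac14\vol_G(C_0)$ given $\eps < \frac14$.
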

\begin{proof}
	\Cref{alg:decomp_unbal} is only called in \Cref{alg:streaming_ed} when the balanced sparse cut algorithm finds a $\alpha\cdot\phi_0$-sparse cut $S$ such that $\vol_G(S) \leq \frac{\eps \cdot b}{4}\cdot \vol_G(C)$, where $C = C_0$. 
	As we are using $(\alpha,b)$-balanced sparest cut algorithm, 
	  any cut of sparsity at most $\phi_0$ in $C_0$ has volume at most $\frac{\eps \cdot b}{4(1 - 8\delta)b} \cdot\vol_G(C) \leq \frac\eps2\cdot \vol_G(C)$ (as $\delta<\frac{1}{16}$). Therefore, since $\eps < 1/4$, by \Cref{lem:ubcd_iteration_volume_guarantee}, $\frac\eps2 \cdot\vol(C_0 \setminus C_{k'}) \leq \frac{\eps}{2} \vol_G(C)$, which upper bounds the volume of inter-cluster edges.
\end{proof}

Another application of \Cref{lem:ubcd_iteration_volume_guarantee} is bounding the number of inner loop iterations of \Cref{alg:decomp_unbal}. This is necessary to bound the number of sparsifier that the algorithm uses. We also show the correctness, i.e. that the algorithm returns an expander decomposition.

\begin{lemma}\label{lem:numberOfIterations}
	The following statements regarding \Cref{alg:decomp_unbal} hold:
	\begin{itemize}
		\item \Cref{alg:decomp_unbal} returns a $\phi_{k + 1}$-expander decomposition in \cref{line:ubcd_return} before the iteration $k+1$ of the outer loop ends.
		\item For each iteration of the outer loop, the inner loop halts (in \cref{line:ubcd_return}) or breaks (in \cref{line:updateJ}) before the iteration $ \lfloor \frac{\tau}{b}  \rfloor + 1 = O(\mathrm{T})$ ends.  		
	\end{itemize}
\end{lemma}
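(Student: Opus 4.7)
Plan. My approach is to prove both parts of the lemma jointly, by induction on the outer-loop index $j$, using the key invariant that at the start of iteration $j$, every $\phi_{j-1}$-sparse cut in $C_{j-1}$ has smaller-side volume strictly below $Z_j$, where $Z_1 = m_1/4$ and $Z_j = m_{j-1}/2$ for $j\ge 2$.

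For the base case $j=1$, the invariant is inherited from the caller \Cref{alg:streaming_ed}: the $(\alpha,b)$-balanced sparse cut procedure there returned an unbalanced cut $S$ with $\vol_G(S) < \frac{b\eps}{4}\vol_G(C_0) = \frac{b}{4}m_1$, and the $(\alpha,b)$-balanced guarantee then forces every $\phi_0$-sparse cut $S'$ in $C_0$ to satisfy $\bal_G(S') \le \bal_G(S)/b < \eps/4$, i.e., smaller-side volume below $m_1/4$. For the inductive step, the balanced cuts $S^1, S^2, \ldots$ produced in iteration $j$ are each $\alpha\phi_j = \phi_{j-1}$-sparse and of volume at least $\frac{b}{2} m_j$. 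By iterated application of \Cref{lem:SparseCutSum}, the running union $T^s = \bigcup_{i\le s}S^i$ is $\phi_{j-1}$-sparse in $C_{j-1}$ whenever $\vol_G(T^s)\le \vol_G(C_{j-1})/2$, so the invariant forces $\vol_G(T^s) < m_{j-1}/2$. Combined with $\vol_G(T^s)\ge s\cdot \frac{b}{2}m_j$, this gives $s < m_{j-1}/(b\,m_j) = \tau/b$, establishing the second claim. When the inner loop finally breaks, the triggering cut has volume below $\frac{b}{2}m_j$, and invoking the $(\alpha,b)$-balanced guarantee on the current cluster (which equals $C_j$) shows every $\phi_j$-sparse cut in $C_j$ has smaller-side volume below $m_j/2 = Z_{j+1}$, reestablishing the invariant for iteration $j+1$.

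For the first claim, I would use the identity $m_{k+1} = m_1/\tau^k = \eps\cdot\vol_G(C_0)/(\eps\cdot\vol_G(C_0)) = 1$, so the break threshold is $\frac{b}{2}m_{k+1} = b/2 < 1$. Since all cut volumes in $G$ are positive integers, no cut can have volume strictly below $b/2$, and hence the inner loop in iteration $k+1$ cannot break. It must therefore terminate via the balanced sparse cut procedure declaring the current cluster to be a $\phi_{k+1}$-expander — it cannot loop indefinitely, because each iteration strictly reduces $\vol_G(C)$ by at least one, and by \Cref{lem:ubcd_iteration_volume_guarantee} (applied at $q=k$ with $Z = m_k/2$) the total reduction is at most $m_k/2 = O(\mathrm{T})$. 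By correctness of the balanced sparse cut procedure on the sparsifier (\Cref{lem:BalancedSparseCutExp}/\Cref{lem:BalancedSparseCut}), the returned cluster is a $\phi_{k+1}$-expander of $G$; together with the singletons $\{u\}_{u\in C_0\setminus C}$ this yields the required $\phi_{k+1}$-expander decomposition.

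The main technical obstacle I anticipate is the scenario in the inductive step where $\vol_G(T^s)$ crosses $\vol_G(C_{j-1})/2$ before the inner loop terminates, past which \Cref{lem:SparseCutSum} no longer directly controls the running union. I would address this by noting that each $S^i$ is the smaller side of a cut in its current $C^i$ and hence bounded by $\vol_G(C^i)/2$, so the crossing iteration can occur only after the $s < \tau/b$ iterations already accounted for above; the residual cluster $C^{s+2}$ after the crossing has volume less than $\vol_G(C_{j-1})/2$, and applying \Cref{lem:SparseCutSum} in the complement direction (using that $T^{s+1}$ together with any subsequent $\phi_{j-1}$-sparse cut in $C^{s+2}$ forms a $\phi_{j-1}$-sparse cut in $C_{j-1}$) extends the inner-loop bound to $\lfloor \tau/b\rfloor + 1 = O(\mathrm{T})$.
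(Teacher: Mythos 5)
Your proof is correct and follows essentially the same route as the paper: the unbalanced cut ending round $j-1$ together with the $(\alpha,b)$-balancedness guarantee bounds the volume of every $\phi_{j-1}$-sparse cut in $C_{j-1}$, each inner-loop cut has volume at least $\frac{b}{2}m_j$ so their number is at most $O(\tau/b)$, and $m_{k+1}=1$ plus integrality of volumes forces round $k+1$ to exit only via the expander declaration. The one difference is that you partially inline the running-union argument instead of just invoking \Cref{lem:ubcd_iteration_volume_guarantee}; the ``crossing half the volume'' obstacle you flag is already handled inside that lemma by stopping at the first prefix whose volume reaches $Z$ (which by minimality is then at most $2Z\le\frac12\vol_G(C_{j-1})$), so your extra complement-direction step is unnecessary.
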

\begin{proof}
	At the iteration $k + 1$ (if it occurs), $m_{k + 1}=m_1\cdot\tau^{-k}= 1$. Thus $\vol_G(S) < \frac{b}{2} m_{k + 1}\le\frac12$ implies that $\vol_G(S) = 0$ and the cut is empty.
	But in this case, any balanced sparest cut algorithm  (\Cref{def:BalancedSparseCut}) would declare $C$ to be a $\phi_{k + 1}$-expander, and the algorithm would return a decomposition into $C$ and singletons (\cref{line:ubcd_return}). Therefore, algorithm always stops before the iteration $k + 1$ of the outer loop terminates, returning a $\phi_{k + 1}$-expander decomposition.
	
	To show the second point, consider an iteration $j \leq k + 1$ of the outer loop.  For $j = 1$, by \Cref{cor:ucd_volume_guarantee}, we have
	$\vol(C_0 \setminus C_1) \leq \vol(C_0 \setminus C_{k'}) \leq \frac{\epsilon}{2} \cdot\vol_G(C_0)$, and since each cut made during the first iteration has volume at least $\frac b2\cdot m_1 = \frac b2\cdot \eps \cdot \vol_G(C)$,
	the algorithm can make at most $1/b<\frac{\tau}{b}$ cuts (and thus inner loop iterations).
	
	For $j > 1$, on the previous $(j-1)$-iteration the algorithm found a cut $S$ of sparsity at most $\alpha\cdot\phi_{j-1}$ such that $\vol_G(S) < \frac{b}{2}m_{j - 1}$ in $C_{j-1}$. 
	Hence by the balancedness guarantee of balanced sparse cut algorithm, 
	any cut of sparsity at most $\phi_{j-1}$ in $C_{j-1}$ has volume at most $\frac{b}{2(1 - 8\delta)b} \cdot m_{j-1} \leq \frac{m_{j-1}}{4}$ (as $\delta<\frac{1}{16}$).
	It follows by \Cref{lem:ubcd_iteration_volume_guarantee} that
	$\vol_G(C_{j-1}\setminus C_j)\le \vol_G(C_{j-1}\setminus C_{k'})<\frac{m_{j-1}}{4}$.
	As each cut in the $j$'th iteration has volume at least $\frac b2 m_j$, the total number of cuts (and thus inner loop iterations) is bounded by $\nicefrac{\frac{m_{j-1}}{4}}{\frac{b}{2}m_{j}}=\frac{1}{2b}\cdot\frac{m_{j-1}}{m_{j}}=\frac{\tau}{2b}$.

\end{proof}

Now we analyze \Cref{alg:streaming_ed}. This algorithm is much more standard, and we use well established tools to do it, so we skip some details.
\begin{lemma}\label{lem:recursionDepth}
	\Cref{alg:streaming_ed} has recursion  depth $\wt{O}(\frac1\eps)$.
\end{lemma}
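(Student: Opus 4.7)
The plan is to show that each recursive call in \cref{line:returnBalanced} shrinks the cluster volume by at least a factor of $(1-\tfrac{\eps b}{4})$, and then take logarithms. Fix an invocation of \Cref{alg:streaming_ed} on a cluster $C$ that reaches \cref{line:returnBalanced}, so the balanced sparse cut algorithm returned a cut $S$ with $\vol_G(S)\ge\tfrac{\eps\cdot b}{4}\vol_G(C)$. Without loss of generality we may take $S$ to be the smaller side of the cut $\{S,C\setminus S\}$: the multiset of arguments passed to the two recursive invocations does not depend on which side we label $S$, and the balancedness guarantee of \Cref{def:BalancedSparseCut} is symmetric in $S$ and $\bar S$. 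Under this convention $\vol_G(S)\le\vol_G(C)/2$, and since $\eps b/4\le 1/2$ we get $\vol_G(S)\le(1-\tfrac{\eps b}{4})\vol_G(C)$. Likewise $\vol_G(C\setminus S)=\vol_G(C)-\vol_G(S)\le(1-\tfrac{\eps b}{4})\vol_G(C)$. Hence both subclusters passed to the two recursive calls have volume at most $(1-\tfrac{\eps b}{4})\vol_G(C)$.

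Iterating this bound, any cluster $C'$ appearing at recursion depth $d$ satisfies
\[
\vol_G(C')\le\left(1-\tfrac{\eps b}{4}\right)^d\vol_G(V)\le e^{-d\eps b/4}\vol_G(V)\le e^{-d\eps b/4}\cdot n^{2}.
\]
For \Cref{alg:streaming_ed} to recurse from $C'$, the balanced sparse cut algorithm must return a cut with $\vol_G(S)\ge\tfrac{\eps b}{4}\vol_G(C')$, and since $G$ is unweighted $\vol_G(S)$ is a nonnegative integer, so necessarily $\vol_G(C')\ge\tfrac{4}{\eps b}$. Combining the two inequalities gives
\[
\tfrac{4}{\eps b}\le e^{-d\eps b/4}\cdot n^{2}
\quad\Longrightarrow\quad
d\le\tfrac{4}{\eps b}\ln\!\left(\tfrac{\eps b\cdot n^{2}}{4}\right)=O\!\left(\tfrac{\log n}{\eps b}\right).
\]
Both balanced sparse cut lemmas (\Cref{lem:BalancedSparseCutExp} and \Cref{lem:BalancedSparseCut}) have $b=\Theta(1)$, so the recursion depth is $\wt{O}(1/\eps)$.

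The only mildly subtle point in the proof is the WLOG reduction to the case that $S$ is the smaller side, which is justified by the symmetry of $\bal_G$ in $S$ and $\bar S$ and the fact that the recursion acts symmetrically on both sides. Everything else is a standard geometric-decay computation, so I anticipate no real obstacles.
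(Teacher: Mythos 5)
Your proof is correct and follows essentially the same route as the paper's: the paper's (two-line) argument likewise observes that each recursive call at \cref{line:returnBalanced} shrinks the cluster volume by a factor of at least $1-\tfrac{\eps b}{4}$, and since $\vol(G)=\mathrm{poly}(n)$ the depth is $\log_{1/(1-\eps b/4)}(\vol(G))=O(\tfrac{\log n}{\eps b})=\wt O(1/\eps)$. Your WLOG that $S$ is the smaller side is harmless here because both balanced sparse cut procedures used in the paper (\Cref{lem:BalancedSparseCutExp,lem:BalancedSparseCut}) return a side of volume at most $\tfrac12\vol_G(C)$, which is exactly what is needed for both children to satisfy the $(1-\tfrac{\eps b}{4})$ decay.
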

\begin{proof}
	Since after each iteration the volume decreases by a factor at least $1 - \frac{\eps\cdot b}{4}$ and the volume is polynomial in $n$, the depth of the recursion is at most $\log_{\frac{1}{1-\frac{\eps\cdot b}{4}}}\left(\vol(G)\right)=O(\frac{\log n}{\epsilon\cdot b})=\wt{O}(\frac1\eps)$.
\end{proof}
The goal of next two lemmas is to show that the combination of two algorithms returns $(\eps, \phi_{k + 1})$-expander decomposition
\begin{lemma}\label{lem:ExpansionParameter}
	Each final cluster is a $\phi_{k+1}$-expander, where $\phi_{k + 1} = \frac{\epsilon}{2\log(\vol(G))} \alpha^{-k - 2}$.
\end{lemma}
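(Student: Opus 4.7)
The plan is to classify where the final clusters come from, and use the balanced sparse cut guarantee to translate expander declarations made on a power cut sparsifier into a statement about the induced subgraph of $G$. The three sources of final clusters are: (i) a cluster $C$ returned directly by \Cref{alg:streaming_ed} when the balanced sparse cut algorithm declares $G_{\psi_0,\delta}^{i}\{C\}$ a $\phi_0$-expander; (ii) the non-singleton cluster $C$ returned at \cref{line:ubcd_return} of \Cref{alg:decomp_unbal} during some outer-loop iteration $j$, when the balanced sparse cut algorithm declares $G_{\psi_j,\delta}^{h}\{C\}$ a $\phi_j$-expander; and (iii) singletons returned in \cref{line:ubcd_return} of \Cref{alg:decomp_unbal}, which are trivially $\phi_{k+1}$-expanders.

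I will first condition on the (high probability) event that every power cut sparsifier used by the algorithm satisfies its guarantee. Concretely, at each recursion level $i$ of \Cref{alg:streaming_ed}, the vertex sets on which it invokes the balanced sparse cut procedure form a partition of $V$ that depends only on sparsifiers from levels $<i$, which are independent of $G_{\psi_0,\delta}^{i}$. Hence by \Cref{Def:PowerCutSparsifier} together with \Cref{thm:PowerCutSparsifier}, with high probability $G_{\psi_0,\delta}^{i}\{C\}$ is a $(\delta,\psi_0)$-cut sparsifier of $G\{C\}$ simultaneously for all $C$ at that level. The same argument applies inside \Cref{alg:decomp_unbal}, since the cluster $C$ at inner-iteration $h$ is determined by previously used, independent sparsifiers, so $G_{\psi_j,\delta}^{h}\{C\}$ is a $(\delta,\psi_j)$-cut sparsifier of $G\{C\}$. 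A union bound over all invocations (polynomially many, by \Cref{lem:recursionDepth} and \Cref{lem:numberOfIterations}) preserves high probability.

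Under this conditioning, for case (i), \Cref{lem:BalancedSparseCutExp} (or \Cref{lem:BalancedSparseCut}) applied to the input $(\delta,\psi_0)$-sparsifier $G_{\psi_0,\delta}^{i}\{C\}$ of $G\{C\}$ with parameter $\phi=\phi_0$ guarantees that whenever the procedure declares a $\phi_0$-expander, $G\{C\}$ itself is a $\phi_0$-expander. Since $\phi_0 \geq \phi_{k+1}$, the cluster $C$ is also a $\phi_{k+1}$-expander. For case (ii), the same lemma with parameter $\phi=\phi_j$ gives that $G\{C\}$ is a $\phi_j$-expander. By \Cref{lem:numberOfIterations}, the algorithm terminates by the end of outer-loop iteration $j=k+1$, so $j\leq k+1$ and thus $\phi_j = \phi_0/\alpha^{j} \geq \phi_0/\alpha^{k+1} = \phi_{k+1}$, making $G\{C\}$ a $\phi_{k+1}$-expander as well.

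The main subtlety — and the only real obstacle — is the adaptivity issue addressed in the second paragraph: the power cut sparsifier guarantee holds for a \emph{fixed} partition, while the partition produced by the recursion is chosen adaptively. This is resolved cleanly by observing that within a single recursion level (or a single invocation of \Cref{alg:decomp_unbal}) all balanced sparse cut queries use the \emph{same} sparsifier against clusters forming a partition (or pair) that is independent of that sparsifier, and that across levels/iterations the sparsifiers are independent, so a standard chaining/union-bound argument goes through.
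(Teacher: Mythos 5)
Your proposal is correct and follows essentially the same route as the paper: every non-singleton final cluster is one for which the balanced sparse cut procedure declared an expander, the smallest sparsity parameter ever used is $\phi_{k+1}$, and the guarantees of \Cref{lem:BalancedSparseCutExp}/\Cref{lem:BalancedSparseCut} transfer the declaration from the sparsifier to $G\{C\}$. The paper states this in two lines and defers the conditioning on sparsifier success and the adaptivity/union-bound discussion to the surrounding text and \Cref{lem:total-sparsif-failure}, whereas you fold that bookkeeping into the proof itself — a presentational difference only.
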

\begin{proof}
	Any final cluster is either a singleton, or was created in
	either \Cref{alg:streaming_ed} or \Cref{alg:decomp_unbal} after applying the algorithm from \Cref{def:BalancedSparseCut}. Because the minimum conductance value that we run this algorithm with is $\phi_{k + 1}$, every such cluster is at least a $\phi_{k + 1}=\phi_1\cdot\alpha^{-k}=\frac{\epsilon}{2\log(\vol(G))\alpha^{k+2}}$ expander.
\end{proof}

To bound the number of edges cut in \Cref{alg:streaming_ed} we use a standard charging argument.
\begin{lemma}\label{lem:InterClusterVolume}
	The volume of inter cluster edges is at most $\epsilon\cdot\vol(G)$.
\end{lemma}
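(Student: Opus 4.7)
The plan is to split the inter-cluster volume into two contributions and charge each separately: edges cut in the balanced branch of \Cref{alg:streaming_ed} (\cref{line:returnBalanced}), and edges cut inside invocations of \Cref{alg:decomp_unbal}. For the second, \Cref{cor:ucd_volume_guarantee} already gives, for each invocation on a cluster $C_0$, an upper bound of $\tfrac{\eps}{2}\cdot\vol_G(C_0)$. Since \Cref{alg:decomp_unbal} is only ever called on clusters $C_0$ that appear as leaves of the recursion tree of \Cref{alg:streaming_ed}, these $C_0$'s are pairwise disjoint, and summing over all such calls yields a contribution of at most $\tfrac{\eps}{2}\cdot\vol_G(G)$.

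For the first contribution, I would use a standard ``charge to the smaller side'' argument. Every balanced cut performed inside \Cref{alg:streaming_ed} is of sparsity at most $\alpha\phi_0$, so for a cut $S \subsetneq C$ returned at \cref{line:returnBalanced} we have $w_G(S,C\setminus S)\le \alpha\phi_0\cdot\min\{\vol_G(S),\vol_G(C\setminus S)\}$. Let $S^\star\in\{S,C\setminus S\}$ be the side of smaller volume, and split the cut size as $\alpha\phi_0$ charge distributed uniformly among the vertices of $S^\star$ proportionally to degree. The key structural observation is that if a vertex $v$ lies on the smaller side of a balanced cut, then the cluster containing $v$ is replaced (in the recursion) by one whose $G$-volume is at most half the previous one, and this halving can happen at most $\lfloor\log_2 \vol_G(G)\rfloor$ times along any root-to-leaf path of the recursion tree.

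Combining the two observations, the total charge received by a vertex $v$ from balanced cuts is at most $\deg_G(v)\cdot\alpha\phi_0\cdot\log_2\vol_G(G)$. Summing over all vertices and substituting $\alpha\phi_0=\frac{\eps}{2\log\vol_G(G)}$ (from the definition of $\phi_0$ in \Cref{fig:parameters}) gives a total balanced-cut volume of at most
\[
\alpha\phi_0\cdot\log_2\vol_G(G)\cdot\vol_G(G)\;=\;\tfrac{\eps}{2}\cdot\vol_G(G).
\]
Adding the two contributions yields $\eps\cdot\vol_G(G)$, as claimed.

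The only delicate point is to verify the halving property carefully: after a balanced cut at \cref{line:returnBalanced}, the ``smaller side'' $S^\star$ is used to form a new cluster for the recursive call \texttt{Low-depth-expander-decomposition}$(S^\star,\eps,k,i{+}1)$, and by construction $\vol_G(S^\star)\le\tfrac12\vol_G(C)$. This ensures each vertex appears on a smaller side at most $\log_2\vol_G(G)$ times across the entire (disjoint-clusters-per-level) recursion, which drives the charging bound. The rest is arithmetic.
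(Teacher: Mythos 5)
Your proposal is correct and matches the paper's own proof essentially verbatim: the same split into the \Cref{alg:decomp_unbal} contribution (bounded via \Cref{cor:ucd_volume_guarantee} and disjointness of the clusters on which it is invoked) and the \Cref{alg:streaming_ed} contribution (bounded by charging $\alpha\phi_0\cdot\deg(v)$ to each vertex on the smaller side, with at most $\log\vol(G)$ charges per vertex due to volume halving), each yielding $\tfrac{\eps}{2}\vol(G)$. No substantive differences.
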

\begin{proof}
	We present a charging scheme to bound the overall volume of inter cluster edges.
	In \Cref{alg:streaming_ed}, cuts occur only in \cref{line:returnBalanced} to clusters $S$ and $\bar{S}$, where $\vol_G(S)\le\vol_G(\bar{S})$.
	When that happens, each vertex in $v\in S$ will be charged $\alpha \phi_0 \cdot\deg(v)$.
	The total charge is $\sum_{v\in S}\alpha\phi_{0}\cdot\deg(v)=\alpha\phi_{0}\cdot\vol_{G}(S)$, which is enough to pay for all the edges (as the cut is $\alpha\cdot\phi_0$-sparse).
	Each time a vertex $v$ is charged, the volume of the cluster it belongs to at least halves.  Therefore each vertex could be charged
	at most $\log \vol(G)$ times. In total, we charge all vertices for cuts occurring in \Cref{alg:streaming_ed} at most $\log(\vol(G))\cdot\sum_{v\in V}\alpha\phi_{0}\cdot\deg(v)=\alpha\phi_{0}\cdot\vol(G)\cdot\log(\vol(G))=\frac\eps2\cdot\vol(G)$.
	
	By \Cref{cor:ucd_volume_guarantee}, when executing \Cref{alg:decomp_unbal} on a cluster $C_0$, the total volume cut is bounded by $\frac\eps2\cdot\vol(C_0)$. Since each vertex could belong only to a single cluster on which \Cref{alg:decomp_unbal} is executed, the total volume of the edges cut due to  \Cref{alg:decomp_unbal} is at most $\frac\eps2\cdot\vol(G)$.
	We conclude that the total volume of intercluster edges is at most $\eps\cdot\vol(G)$.
\end{proof}


Our algorithm requires $\wt{O}(\eps^{-1})$ 
samples from the $(\delta,\psi_0,n^{-\Czeta})$-power cut sparsifier distribution, for some $\Czeta>0$, and for $j\in[1,k]$, $O(T)$ samples from the  $(\delta,\psi_j,n^{-\Czeta})$-power cut sparsifier distribution. We also assume that the sparsifiers were drawn independently.

\Cref{lem:recursionDepth} together with \Cref{lem:numberOfIterations} show that this number of sparsifiers is indeed enough to run the entire algorithm.	For \Cref{alg:streaming_ed} we use a single sparsifier for each recursion level. Specifically, if the set of clusters an the $i$'th level of the recursion is $C^i_1,C^i_2,\dots$,then the sparsifiers that will be used are $G^i_{\psi_0,\delta}[C^i_1],G^i_{\psi_0,\delta}[C^i_2],\dots$. The required number of sparsifiers is bounded by the recursion depth.
For \Cref{alg:decomp_unbal}, we will use the exact same set of sparsifiers for every cluster sent to the algorithm. By \Cref{lem:numberOfIterations} we indeed need only $k+1$ types of sparsifiers, $O(T)$ of each type.
%
%
Finally, we analyze the space and time consumption of the algorithm .
The space required to store all the sparsifiers is
\begin{align}
	\wt{O}(\eps^{-1})\cdot\left|G_{\psi_{0},\delta}^{1}\right|+\sum_{j=1}^{k+1}\sum_{q=1}^{O(T)}\left|G_{\psi_{j},\delta}^{1}\right| & =\wt{O}\left(\eps^{-1}\cdot\frac{n\Czeta}{\psi_{0}\cdot\delta}\right)+O\left(T\cdot\frac{n\Czeta\cdot\log^{2}n}{\delta}\cdot\sum_{j=1}^{k+1}\frac{1}{\psi_{j}}\right)\nonumber\\
	& =\wt{O}(n)\cdot\frac{\Czeta}{\delta^{2}}\cdot\left(\frac{1}{\eps\cdot\phi_{0}}+T\cdot\sum_{j=1}^{k+1}\frac{1}{\phi_{j}}\right)\nonumber\\
	& =\wt{O}(n)\cdot\frac{\Czeta}{\delta^{2}\cdot\phi_{0}}\cdot\left(\frac{1}{\eps}+T\cdot\sum_{j=1}^{k+1}\alpha^{j}\right)\nonumber\\
	& =\wt{O}(n)\cdot\frac{\Czeta\cdot\alpha}{\delta^{2}\cdot\eps}\cdot\left(\frac{1}{\eps}+(\eps\cdot\vol_{G}(G))^{1/k}\cdot\frac{\alpha^{k+2}-\alpha}{\alpha-1}\right)~.\label{eq:SpaceBound}
\end{align}

The only other component using memory is the algorithm from \Cref{def:BalancedSparseCut}. Because its memory requirement is $\wt{O}(m')$, where $m'$ is the number of edges in the graph that we run it on, and we run it on the sparsifiers, the final memory cost is still the same. 

Next, we analyze running time.
Denote by $T_j(C)$ the time it takes to run the balanced sparse cut algorithm on the sparsifier $G_{\psi_j, \delta}\{C\}$ with sparsity parameter $\phi_j$. Note that for both of the balanced sparse cut algorithms of \Cref{lem:BalancedSparseCutExp,lem:BalancedSparseCut}, it holds that $T_j(C_1) + T_j(C_2) \leq T_j(C_1 + C_2)$ if $C_1$ and $C_2$ are disjoint. 
As each sparsifier is applied only on disjoint clusters, the overall running time of the executions of the balanced sparse cut lemmas is 
\begin{equation}
	\wt{O}(\eps^{-1})\cdot{\cal T}_{0}(V)+\sum_{j=1}^{k+1}O(T)\cdot{\cal T}_{j}(V)\le\max\left\{ \wt{O}\left(\frac{1}{\eps}\right),O(T)\right\} \cdot\sum_{j=0}^{k+1}{\cal T}_{j}(V)~.\label{eq:RunTime}
\end{equation}
All the other operations are fairly straightforward and take $\wt{O}(n\cdot T\cdot k)$.

\begin{lemma} \label{lem:total-sparsif-failure}
	Assuming that each sparsifier was sampled from a distribution with failure probability at most $n^{-\Czeta}$, the probability of failure is at most $\widetilde{O}(\frac1\eps)\cdot n^{-\Czeta +\frac2k}\cdot k$ for a large enough $n$.
\end{lemma}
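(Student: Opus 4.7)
My plan is a direct union bound over all sparsifiers used by \Cref{alg:streaming_ed} and \Cref{alg:decomp_unbal}, using the counts that were essentially already established in the preceding paragraphs.

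First, I will count how many independent sparsifier samples the algorithm actually consumes. For \Cref{alg:streaming_ed}, a single power cut sparsifier $G_{\psi_0,\delta}^i$ serves an entire recursion level $i$, because the clusters produced at that level are disjoint and hence form a (sub)partition of $V$, so the power cut sparsifier property gives the required $(\delta,\psi_0)$-cut sparsifier guarantee on each of them simultaneously. By \Cref{lem:recursionDepth}, the number of such sparsifiers needed is at most $\widetilde{O}(1/\eps)$. For \Cref{alg:decomp_unbal}, the same reuse argument works across all invocations: the cluster $C_0$ passed into any invocation is itself a final output of \Cref{alg:streaming_ed}, so the set of all $C_0$'s is a subpartition of $V$, and a single sample of $G_{\psi_j,\delta}^h$ handles all of them. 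By \Cref{lem:numberOfIterations}, the outer loop runs at most $k+1$ times and each inner loop runs at most $O(\mathrm{T})$ times, so at most $(k+1)\cdot O(\mathrm{T})=O(k\cdot \mathrm{T})$ sparsifiers are consumed here.

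Next I will bound $\mathrm{T}$. Since $\vol_G(G)\le n^2$, we have $\mathrm{T}=\lceil(\eps\cdot\vol_G(G))^{1/k}\rceil\le O(n^{2/k})$ for $n$ large enough. Thus the total number of independent sparsifier samples used across both algorithms is
\[
N \;\le\; \widetilde{O}(1/\eps) + O(k\cdot \mathrm{T}) \;\le\; \widetilde{O}(1/\eps)+ O(k\cdot n^{2/k}).
\]
By the power cut sparsifier guarantee (\Cref{thm:PowerCutSparsifier}, applied with parameter $\Czeta$), each sample fails to be a valid $(\delta,\psi_j)$-cut sparsifier for the relevant subpartition with probability at most $n^{-\Czeta}$. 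The event that ``no sparsifier fails'' is exactly what the correctness arguments in \Cref{lem:ExpansionParameter}, \Cref{lem:InterClusterVolume}, and the surrounding analysis condition on, so it suffices to union bound over sparsifier failures.

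Applying the union bound gives failure probability at most
\[
N\cdot n^{-\Czeta} \;\le\; \bigl(\widetilde{O}(1/\eps)+O(k\cdot n^{2/k})\bigr)\cdot n^{-\Czeta} \;\le\; \widetilde{O}(1/\eps)\cdot k\cdot n^{-\Czeta+2/k},
\]
where in the last step we absorbed the first term using $n^{2/k}\ge 1$ and $k\ge 1$ (valid for all $n\ge 1$). This is exactly the bound claimed, completing the proof. No step here is really an obstacle; the only subtlety is the reuse argument that a single power cut sparsifier suffices for an entire recursion level (respectively, for all invocations of \Cref{alg:decomp_unbal}), which relies on the partition/subpartition structure of the clusters produced.
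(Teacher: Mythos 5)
Your proposal is correct and follows essentially the same argument as the paper: a union bound over the $\widetilde{O}(1/\eps)$ sparsifiers used by the recursion of \Cref{alg:streaming_ed} (one per level, by \Cref{lem:recursionDepth}) plus the $O(k\cdot\mathrm{T})\le O(k\cdot n^{2/k})$ sparsifiers used by \Cref{alg:decomp_unbal} (by \Cref{lem:numberOfIterations}), each failing with probability at most $n^{-\Czeta}$. The sparsifier-reuse justification you spell out is exactly the one the paper establishes in the paragraphs preceding the lemma.
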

\begin{proof}
	The only place that the algorithm can fail is the application of sparsifiers to a particular clustering. Since we use each sparsifier only for one partition, it is enough to do a union bound over all used sparsifiers.
	By \Cref{lem:recursionDepth}, the number of sparsifiers needed for the \cref{alg:streaming_ed} is 
	$\wt{O}(1/\eps)$.

	Recall that we use the same sparsifiers for each call to \Cref{alg:decomp_unbal}. The number of used sparsifiers is equal to total number of iteration of the inner loop, which is bounded by \Cref{lem:numberOfIterations} by $O(\mathrm{T}\cdot k)\leq O(k\cdot n^{\frac2k})$, since the graph is unweighted. By union bound, the failure probability is thus bounded by $\widetilde{O}(\frac1\eps)\cdot n^{-\Czeta +\frac2k}\cdot k$.
\end{proof}

\subsection{Proofs of \Cref{thm:main-exp,thm:main-poly}}
We can finally present the proofs of \Cref{thm:main-exp,thm:main-poly} (restated below for convenience). 

Before we start, first we need to remark that while the parameters of the sparsifiers that we use are dependent on $\vol(G)$, we don't know this volume at the start of the stream. Hence we assume that we now some upper bound $\mathrm{O}$ on the volume. For unweighted graphs it is always the case that $\mathrm{O}=n^2$ is sufficient. We use this upper bound to define the lower bound on the quality of the expanders that we use. Namely, $\psi_0' := \delta \frac{\eps}{2 \log(\mathrm{O}) \alpha}$, and $\psi'_i = \psi'_{i - 1} / \alpha$. We construct sparsifiers $G_{\psi'_i, \delta}$ instead of $G_{\psi_i, \delta}$. Since $\psi'_i \leq \psi_i$, they can replace $G_{\psi_i, \delta}$ in the algorithm.

In the theorems we assume that $\log(\mathrm{O}) = O(\log(n))$.

First we use the exponential time \Cref{lem:BalancedSparseCutExp} in \Cref{alg:streaming_ed} to obtain an exponential time algorithm for expander decomposition.

\ExpTime*
\begin{proof}
	Let $\delta = \frac{1}{5 \log n}$, $k = \log n$, $\alpha=1+5\delta$, $b=1$, and $\Czeta$ to be large enough constant. Note that $\alpha^{k}=(1+\frac{1}{\log n})^{\log n}=\Theta(1)$,  $\phi_{k+1}=\phi_{0}\cdot\alpha^{-k-1}=\frac{\epsilon}{2\log(\vol(G))}\cdot\alpha^{-k-2}=\Omega(\frac{\eps}{\log n})$, and $\mathrm{T}=\lceil(\eps\cdot\vol_{G}(G))^{1/k}\rceil=O(1)$.	
	We use \Cref{thm:pcs-alg} during the stream to construct all sparsifiers required by \Cref{alg:decomp_unbal,alg:streaming_ed}.
	Following \Cref{eq:SpaceBound}, the space required to construct (and store) all these sparsifiers is bounded by
	\[
	\wt{O}(n)\cdot\frac{\Czeta\cdot\alpha}{\delta^{2}\cdot\eps}\cdot\left(\frac{1}{\eps}+(\eps\cdot \mathrm{O})^{1/k}\cdot\frac{\alpha^{k+2}-\alpha}{\alpha-1}\right)=\wt{O}(n)\cdot\eps^{-2}~.
	\]
	
	At the end of the stream we execute \Cref{alg:decomp_unbal,alg:streaming_ed} using \Cref{lem:BalancedSparseCutExp} as the balanced sparse cut algorithm. 
	By \Cref{lem:ExpansionParameter,lem:InterClusterVolume,lem:total-sparsif-failure} we obtain a $\left(\eps,\Omega(\frac{\eps}{\log n})\right)$-expander decomposition with high probability. 
	Note that $T_{j}(G) = O(2^n m)$ for all $j$, and thus by \Cref{eq:RunTime},
	the total running time is $\wt{O}(\frac{1}{\eps})\cdot k\cdot O(2^{n}\cdot n^{2})=\wt{O}(\frac{2^{n}}{\eps})$.
\end{proof}

Likewise, we combine polynomial time balanced sparse cut procedure, \Cref{lem:BalancedSparseCut}, with \Cref{alg:streaming_ed} to obtain the following.

\PolyTime*
\begin{proof}
	
		
	Let $\delta = \frac{1}{16}$, $\alpha=(1+6\delta)B_{1}\log^{3}m\le11B_{1}\cdot\log m=\Theta(\log^{3}n)
	$, $b=(1-8\delta)b_{2}=\frac{1}{2}\cdot b_{2}=\Omega(1)$, and $\Czeta$ to be large enough constant. We will also assume that $n$ is large enough (as otherwise the theorem is trivial).
	Note that for $j\ge 1$, $\phi_{j}=\phi_{0}\cdot\alpha^{-j}=\frac{\epsilon}{2\log(\vol(G))}\cdot\alpha^{-j-1}=\Omega(\frac{\eps}{\log^{O(j)}n})$.
	We use \Cref{thm:pcs-alg} during the stream to construct all sparsifiers required by \Cref{alg:decomp_unbal,alg:streaming_ed}.
	Following \Cref{eq:SpaceBound}, the space required to construct (and store) all these sparsifiers is bounded by
	\[
	\wt{O}(n)\cdot\frac{\Czeta\cdot\alpha}{\delta^{2}\cdot\eps}\cdot\left(\frac{1}{\eps}+(\eps\cdot\mathrm{O})^{1/k}\cdot\frac{\alpha^{k+2}-\alpha}{\alpha-1}\right)=\wt{O}(n)\cdot\left(\eps^{-2}+\eps^{\frac{1}{k}-1}\cdot n^{\frac{2}{k}}\cdot\log^{O(k)}n\right)~.
	\]
	
	At the end of the stream we execute \Cref{alg:decomp_unbal,alg:streaming_ed} using \Cref{lem:BalancedSparseCut} (with parameter $\alpha,b$ above) as the balanced sparse cut algorithm.
	By \Cref{lem:ExpansionParameter,lem:InterClusterVolume,lem:total-sparsif-failure} with high probability, we obtain a $(\eps,\phi_{k+1})=(\eps,\Omega(\frac{\eps}{\log^{O(k)}n}))$-expander decomposition

	Note that for every $j$, $T_{j}(G)=O(\frac{n^{2}}{\phi_{j}})=O(\frac{n^{2}}{\eps}\cdot\log^{O(j)}n)$, and thus by \Cref{eq:RunTime},
	the total running time is 	
	\[
	\max\left\{ \wt{O}(\frac{1}{\eps}),O(n^{\frac{2}{k}})\right\} \cdot\sum_{j=0}^{k+1}O(\frac{n^{2}}{\eps}\cdot\log^{O(j)}n)=O(\frac{n^{2}}{\eps}\cdot\log^{O(k)}n\cdot(n^{\frac{2}{k}}+\eps^{-1}))~.
	\]
\end{proof}

%

\section*{Acknowledgments}
Mikhail Makarov is supported by ERC Starting Grant 759471. Michael Kapralov is supported in part by ERC Starting Grant 759471. Arnold Filtser is supported by the ISRAEL SCIENCE FOUNDATION (grant No. 1042/22).

   {\small
  	 \bibliographystyle{alphaurlinit}
  	 \bibliography{bibArnold}
   }

\appendix
\section{Additive Chernoff type inequalities}\label{appendix:AddChernoff}
In this section we will present auxiliary inequalities used in this paper. We will start with a simple corollary of the Chernoff bound.

\begin{theorem}[Theorem 1.1 of \cite{dubhashi2009concentration}]
    \label{thm:chernoff_base}
    Let $X_1, \ldots, X_n$ be independent random variables distributed in $[0, a]$. Let $X = \sum_{i\in [n]}X_i$, $\mu = \E[X]$. Then for $\eps \in (0, 1)$
    \[
        \Pr[X > (1 + \eps) \mu] \leq \exp \left(-\frac{\eps^2 \mu}{3a} \right)
    \]
    \[
        \Pr[X < (1 - \eps) \mu] \leq \exp \left(-\frac{\eps^2 \mu}{2a} \right)
    \]
    \[
        \Pr[|X - \mu| > \eps \mu] \leq 2 \exp \left(-\frac{\eps^2 \mu}{3a} \right)
    \]
\end{theorem}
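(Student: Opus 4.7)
The plan is to prove the three bounds via the classical Chernoff/Cramér exponential moment method, with a normalization step to handle the fact that the variables lie in $[0,a]$ rather than $[0,1]$. First I would set $Y_i = X_i/a \in [0,1]$, $Y = X/a$, $\mu_Y = \mu/a$, so it suffices to prove Chernoff-style bounds for sums of independent $[0,1]$-valued variables and then substitute $\mu_Y = \mu/a$ at the end.

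For the upper tail, the approach is to apply Markov's inequality to $e^{tY}$ for a parameter $t>0$ to be optimized: $\Pr[Y > (1+\eps)\mu_Y] \le \E[e^{tY}] \cdot e^{-t(1+\eps)\mu_Y}$. By independence, $\E[e^{tY}] = \prod_i \E[e^{tY_i}]$. The key ingredient is the convexity bound: for $y \in [0,1]$, $e^{ty} \le 1 + y(e^t - 1)$, which on taking expectations and using $1+z \le e^z$ yields $\E[e^{tY_i}] \le \exp(\E[Y_i](e^t-1))$, so $\E[e^{tY}] \le \exp(\mu_Y(e^t-1))$. Choosing the optimal $t = \ln(1+\eps)$ gives the clean form
\[
\Pr[Y > (1+\eps)\mu_Y] \le \exp\!\left(-\mu_Y\bigl((1+\eps)\ln(1+\eps) - \eps\bigr)\right).
\]
The remaining routine step is the scalar inequality $(1+\eps)\ln(1+\eps) - \eps \ge \eps^2/3$ for $\eps \in (0,1)$, which after substituting $\mu_Y = \mu/a$ yields the first displayed bound.

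The lower tail is entirely analogous: apply Markov to $e^{-tY}$ for $t>0$, use the same convexity bound (now with $-t$) to get $\E[e^{-tY}] \le \exp(\mu_Y(e^{-t}-1))$, and optimize at $t = -\ln(1-\eps)$. This produces
\[
\Pr[Y < (1-\eps)\mu_Y] \le \exp\!\left(-\mu_Y\bigl((1-\eps)\ln(1-\eps) + \eps\bigr)\right),
\]
and the tighter scalar estimate $(1-\eps)\ln(1-\eps) + \eps \ge \eps^2/2$ (valid on $(0,1)$ and sharper than $\eps^2/3$ because the Taylor expansion here has no competing cubic correction) delivers the second bound. The two-sided bound is then just a union bound, using the worse of the two exponents ($\eps^2/3$). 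No step here is a real obstacle; the only delicate piece is remembering to verify the two scalar inequalities for $\eps\in(0,1)$, which follow from comparing Taylor expansions termwise (for the upper tail, from $(1+\eps)\ln(1+\eps) - \eps = \sum_{k\ge 2}(-1)^k \eps^k/(k(k-1))$ and grouping consecutive terms).
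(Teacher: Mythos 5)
Your proof is correct: the paper does not prove this statement at all but imports it as Theorem 1.1 of Dubhashi--Panconesi, and your argument is precisely the standard Cram\'er--Chernoff exponential-moment derivation that the citation refers to, including the normalization $Y_i=X_i/a$, the convexity bound $e^{ty}\le 1+y(e^t-1)$ on $[0,1]$, the optimal choices $t=\ln(1+\eps)$ and $t=-\ln(1-\eps)$, and the scalar estimates $(1+\eps)\ln(1+\eps)-\eps\ge\eps^2/3$ and $(1-\eps)\ln(1-\eps)+\eps\ge\eps^2/2$ on $(0,1)$, with a union bound for the two-sided statement. Both Taylor-series verifications you sketch check out, so there is nothing to fix.
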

\begin{corollary}
    \label{cor:add_chernoff_bounded}
    Let $X_1, \ldots, X_n$ be independent random variables distributed in $[0, a]$. Let $X = \sum_{i\in [n]}X_i$, $\mu = \E[X]$. Then for $\eps \in (0, 1)$ and $\alpha \in [0, \mu)$
    \[
        \Pr[|X - \mu| > \eps \mu + \alpha] \leq 2 \exp \left(-\frac{\eps \alpha}{3a} \right)
    \]
\end{corollary}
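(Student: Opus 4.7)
The plan is to reduce the additive-multiplicative deviation $\eps\mu+\alpha$ to a purely multiplicative deviation by setting $\eps' := \eps + \alpha/\mu$, so that the events $\{|X-\mu|>\eps\mu+\alpha\}$ and $\{|X-\mu|>\eps'\mu\}$ coincide. Under the hypotheses $\eps\in(0,1)$ and $\alpha\in[0,\mu)$ we have $\eps'\in(0,2)$.

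The main case is $\eps'<1$, where the third bound of \Cref{thm:chernoff_base} applied with parameter $\eps'$ gives $\Pr[|X-\mu|>\eps'\mu]\le 2\exp\!\bigl(-(\eps')^2\mu/(3a)\bigr)$. A one-line expansion,
\[
(\eps')^2\mu \;=\; \Bigl(\eps+\tfrac{\alpha}{\mu}\Bigr)^{\!2}\mu \;=\; \eps^2\mu + 2\eps\alpha + \tfrac{\alpha^2}{\mu} \;\ge\; 2\eps\alpha \;\ge\; \eps\alpha,
\]
then substitutes into the exponent and already yields the desired bound (with a factor of two to spare, reflecting the fact that the corollary is slightly lossy).

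In the residual case $\eps'\ge 1$, the lower-tail event $\{X<(1-\eps')\mu\}$ is vacuous since $X\ge 0$, and the upper tail is handled by the companion Chernoff form $\Pr[X>(1+\delta)\mu]\le \exp(-\delta\mu/(3a))$ valid for $\delta\ge 1$; together with $\eps'\mu=\eps\mu+\alpha\ge\alpha\ge\eps\alpha$ (using $\eps<1$ at the last step) this closes the bound. Every step is purely algebraic, so I do not anticipate any substantive obstacle; the only bookkeeping is to invoke (or quickly re-derive from the same moment-generating-function argument behind \Cref{thm:chernoff_base}) the $\delta\ge 1$ form of the multiplicative bound, which is not quoted explicitly in the excerpt.
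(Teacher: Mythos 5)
Your proof is correct, but it takes a different route from the paper's. You replace the mixed deviation $\eps\mu+\alpha$ by the exact multiplicative parameter $\eps'=\eps+\alpha/\mu$ and then expand $(\eps')^2\mu\ge 2\eps\alpha$; since $\eps'$ can exceed $1$, you are forced into a second case that needs the $\delta\ge 1$ form $\Pr[X>(1+\delta)\mu]\le\exp(-\delta\mu/(3a))$, which is not part of \Cref{thm:chernoff_base} as quoted (you correctly flag this; it does follow from the standard bound $\exp(-\delta^2\mu/((2+\delta)a))$ since $\delta^2/(2+\delta)\ge\delta/3$ for $\delta\ge1$, so the gap is real but easily filled). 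The paper instead lower-bounds $\eps\mu+\alpha$ by $\max(\eps\mu,\alpha)$ and splits on which term dominates: if $\alpha\le\eps\mu$ it uses parameter $\eps$ directly, and if $\alpha>\eps\mu$ it uses $\tilde\eps=\alpha/\mu$, which is in $(0,1)$ precisely because of the hypothesis $\alpha<\mu$; in either case $\max(\eps,\alpha/\mu)^2\mu\ge\eps\alpha$. The paper's decomposition keeps the multiplicative parameter below $1$ throughout and so never leaves the quoted theorem, which is why the corollary carries the restriction $\alpha<\mu$ at all (the unrestricted version, \Cref{lem:add_chernoff_complete}, is then obtained separately via Bennett's inequality). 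Your version buys a single unified parameter at the cost of one extra invocation of an unquoted Chernoff form; both are sound.
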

\begin{proof}
There are two cases:
\begin{itemize}
    \item If $\alpha \leq \eps \mu$, then 
    \[
        \Pr[|X - \mu| > \eps \mu + \alpha] \leq \Pr[|X - \mu| > \eps \mu] \leq 2 \exp \left(-\frac{\eps^2 \mu}{3a} \right) \leq 2 \exp \left(-\frac{\eps \alpha }{3a} \right)
    \]
    \item If $\alpha > \eps \mu$, set $\tilde{\eps} = \alpha / \mu < 1$, $\eps < \tilde{\eps}$ and apply \Cref{thm:chernoff_base} to it:
    \[
        \Pr[|X - \mu| > \eps \mu + \alpha] \leq \Pr[|X - \mu| > \tilde{\eps}\mu] \leq 2 \exp \left(-\frac{\tilde{\eps}^2 \mu}{3a} \right) \leq 2 \exp \left(-\frac{\eps \alpha }{3a} \right)
    \]
\end{itemize}
\end{proof}

The problem with this bound is that it is restricted on the values of $\alpha$. Because of this, we will derive a complementary bound using Bennett's inequality.

\begin{theorem}[Theorem 2.9 of \cite{boucheron2013concentration} and it's corollary]
    \label{thm:bennet_ineq}
    Let $X_1, \ldots, X_n$ be independent random variables with finite variance such that $\E[X_i] = 0$ and $X_i \leq a$ for some $a > 0$ almost surely for all $i \in [n]$. Let $v = \sum_{i \in [n]} X_i^2$. Then for any $t > 0$
    \[
        \Pr\left[\sum_{i \in [n]} X_i \geq t \right] \leq \exp \left(- \frac{v}{a^2}h \left( \frac{at}{v} \right) \right) \leq \exp \left(- \frac{t}{2 a} \ln \left( 1 + \frac{at}{v} \right) \right)
    \]
    where $h(x) = (1 + x)\ln(1 + x) - x$.
\end{theorem}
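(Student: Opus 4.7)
The plan is to follow the classical Chernoff/Cramér-Chernoff method, since this is the standard derivation of Bennett's inequality (noting that the written $v=\sum X_i^2$ should be read as $v=\sum_i \E[X_i^2]$, the variance proxy). For any $\lambda>0$, Markov's inequality applied to $\exp(\lambda\sum_i X_i)$ and independence yield
\[
\Pr\Big[\sum_i X_i \ge t\Big] \le e^{-\lambda t}\prod_i \E\!\left[e^{\lambda X_i}\right].
\]
The whole proof then reduces to (i) bounding each moment generating factor $\E[e^{\lambda X_i}]$ in terms of $\E[X_i^2]$ and the upper truncation $a$, and (ii) optimizing over $\lambda$.

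The crucial step, and the main obstacle, is the MGF bound. I would use the standard identity $e^x=1+x+\phi(x)$ with $\phi(x)=e^x-1-x$, and the fact that $\phi(x)/x^2$ is nondecreasing on $\R$. Since $X_i\le a$ almost surely, this monotonicity gives the pointwise bound $\phi(\lambda X_i)\le \phi(\lambda a)\cdot X_i^2/a^2$, and therefore $e^{\lambda X_i}\le 1+\lambda X_i+\phi(\lambda a)X_i^2/a^2$. Taking expectations, using $\E[X_i]=0$, and then $1+u\le e^u$ gives
\[
\E\!\left[e^{\lambda X_i}\right]\ \le\ \exp\!\left(\phi(\lambda a)\,\E[X_i^2]/a^2\right).
\]
Multiplying across $i$ and combining with the Markov step yields $\Pr[\sum_i X_i\ge t]\le \exp(-\lambda t + \phi(\lambda a)\,v/a^2)$.

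The last step is to optimize in $\lambda$. Differentiating shows the optimal choice is $\lambda^\star=\tfrac{1}{a}\ln(1+at/v)$, and substituting produces exactly
\[
\exp\!\left(-\tfrac{v}{a^2}\,h\!\left(\tfrac{at}{v}\right)\right),\qquad h(x)=(1+x)\ln(1+x)-x,
\]
which is the first inequality in the statement. For the second, weaker bound, I would verify the elementary calculus fact $h(x)\ge \tfrac{x}{2}\ln(1+x)$ for $x\ge 0$ (the two sides agree at $0$ and the derivative comparison $\ln(1+x)\ge \tfrac{1}{2}\ln(1+x)+\tfrac{x}{2(1+x)}$ reduces to $\ln(1+x)\ge \tfrac{x}{1+x}$, which is standard), and plug it into the first bound to obtain $\exp(-\tfrac{t}{2a}\ln(1+at/v))$. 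The only subtlety worth flagging is that the bound on $\phi(x)/x^2$ must be used as a monotonicity argument rather than a Taylor-series term-by-term bound, because $X_i$ can be negative; this is precisely why the factor is $\phi(\lambda a)$ rather than something like $e^{\lambda a}$.
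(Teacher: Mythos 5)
Your proposal is correct, and it is in fact more complete than what the paper does: the paper treats Bennett's inequality itself (the first bound, with the exponent $-\frac{v}{a^2}h(\frac{at}{v})$) as a citation to Theorem 2.9 of \cite{boucheron2013concentration} and only \emph{proves the second inequality}, via the elementary calculus fact $h(x)\ge\frac{x}{2}\ln(1+x)$ for $x\ge 0$ (setting $f(x)=h(x)-\frac{x}{2}\ln(1+x)$ and checking $f(0)=0$, $f'(x)=\frac{h(x)}{2(1+x)}\ge 0$). Your treatment of that second step is essentially identical to the paper's — your derivative comparison $\ln(1+x)\ge\frac{x}{1+x}$ is the same computation packaged slightly differently. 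What you add is a self-contained Cram\'er--Chernoff derivation of Bennett's inequality: the MGF bound $\E[e^{\lambda X_i}]\le\exp(\phi(\lambda a)\E[X_i^2]/a^2)$ via the monotonicity of $\phi(x)/x^2$ (correctly flagged as the point where negativity of $X_i$ must be handled), followed by optimization at $\lambda^\star=\frac{1}{a}\ln(1+at/v)$; this is the standard textbook proof and is sound. You also correctly read the paper's $v=\sum_i X_i^2$ as the variance proxy $\sum_i\E[X_i^2]$, which is a typo in the statement (the paper's own later application uses $v=\sum_i\E[(Y_i-\E Y_i)^2]$). In short: same argument for the part the paper actually proves, plus a correct proof of the part the paper only cites.
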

\begin{proof}
    Here we only show the second inequality.

    It is enough to show that $h(x) \geq \frac{x}{2} \ln(1 + x)$ for $x \geq 0$. Let $f(x) = h(x) - \frac{x}{2} \ln(1 + x)$. Then $f(0) = 0$, $f'(x) = \frac{1}{2}(\ln(1 + x) - x/(1 + x)) = \frac{h(x)}{2(1 + x)} \geq 0$. Therefore, $f(x) \geq 0$ for $x \geq 0$.
\end{proof}
Using this theorem, we can get rid of the bound on $\alpha$ in \Cref{cor:add_chernoff_bounded}
\AdditiveChernoff*
\begin{proof}
    Case of $\alpha < \mu$ follows from \Cref{cor:add_chernoff_bounded}.

    Otherwise, $\alpha \geq \mu$. Let $Y_i = X_i - \E[X_i]$. Then $\E[Y_i] = 0$. Note that since $X_i \in [0, a]$, we can bound the variances of $Y_1, \ldots, Y_n$:
    \[
        \E[(Y_i - \E[Y_i])^2] \leq \E[X_i^2] \leq a \E[X_i].
    \]
    Therefore,
    \[
        v = \sum_{i \in [n]} \E[(Y_i - \E[Y_i])^2] \leq a \mu.
    \]
    and, in particular, $\frac{a \alpha}{v} \geq 1$.
    Now we can apply \Cref{thm:bennet_ineq} to $Y_1, \ldots, Y_n$:
    \[
        \Pr \left[ X - \mu > \alpha \right] \leq \Pr\left[\sum_{i \in [n]} Y_i \geq \alpha \right] \leq \exp \left(- \frac{\alpha}{2 a} \ln \left( 1 + \frac{a \alpha}{v} \right) \right) \leq \exp \left(- \frac{\alpha}{3 a} \right) \leq \left(- \frac{\eps \alpha}{3 a} \right) 
    \]
    where the second to last inequality follows from $\ln(2)/2 > 1/3$, $1 + \frac{a \alpha}{v} \geq 2$. Finally,
    \[
        \Pr \left[ |X - \mu| > \epsilon \mu + \alpha \right] \leq \Pr \left[ |X - \mu| > \alpha \right] = \Pr \left[ X - \mu > \alpha \right]
    \]
    because $\Pr[\mu - X > \alpha] = 0$ when $\alpha \geq \mu$ since $X \geq 0$.
\end{proof}

\end{document}